\title[Sinkhorn Algorithm for Sequentially Composed Optimal Transports]{Sinkhorn Algorithm for Sequentially Composed Optimal Transports}
\newcommand{\subalign}[1]{%
  \vcenter{%
    \Let@ \restore@math@cr \default@tag
    \baselineskip\fontdimen10 \scriptfont\tw@
    \advance\baselineskip\fontdimen12 \scriptfont\tw@
    \lineskip\thr@@\fontdimen8 \scriptfont\thr@@
    \lineskiplimit\lineskip
    \ialign{\hfil$\m@th\scriptstyle##$&$\m@th\scriptstyle{}##$\hfil\crcr
      #1\crcr
    }%
  }%
}
\newcommand{\ind}[2]{{#1}^{(#2)}}
\newcommand{\vect}[1]{\mathbf{#1}}
\newcommand{\real}{\mathbb{R}}
\newcommand{\Rnneg}{\mathbb{R}_{\geq 0}}
\newcommand{\Rpos}{\mathbb{R}_{+}}
\newcommand{\Rmat}[2]{\mathbb{R}^{#1 \times #2}}
\newcommand{\Rnnegmat}[2]{\Rnneg^{#1 \times #2}}
\newcommand{\Rposmat}[2]{\Rpos^{#1\times #2}}
\newcommand{\nset}[1]{[#1]}
\newcommand{\setdist}[1]{\Delta^{#1}}
\newcommand{\defeq}{\coloneq}
\newcommand{\ent}[1]{H\left(#1\right)}
\newcommand{\laglange}{L}
\newcommand{\opt}[1]{\widehat{#1}}
\newcommand{\nat}{\mathbb{N}}
\newcommand{\hmet}{d_{H}}
\newcommand{\diag}[1]{\mathrm{Diag}\left(#1\right)}
\DeclareMathOperator{\KL}{KL}
\DeclarePairedDelimiterX{\divbrace}[2]{(}{)}{%
    #1\;\delimsize\|\;#2%
}
\newcommand{\kld}{%
    \KL\divbrace%
}
\def\Set#1{\Setdef#1\Setdef}
\def\Setdef#1|#2\Setdef{\left\{#1\,\;\mathstrut\vrule\,\;#2\right\}}%
\renewcommand{\th}{%
    \ifmmode
        ^\mathrm{th}%
    \else%
        \textsuperscript{th}\xspace%
    \fi%
}
\newenvironment{proofs}{%
  \proof}{\endproof}
\begin{document}

\maketitle

\begin{abstract}%
  Sinkhorn algorithm is the de-facto standard approximation algorithm for optimal transport, which has been applied to a variety of applications, including image processing and natural language processing.
  In theory, the proof of its convergence follows from the convergence of the Sinkhorn--Knopp algorithm for the matrix scaling problem, and Altschuler et al. show that its worst-case time complexity is in near-linear time. 

  Very recently, sequentially composed optimal transports were proposed by Watanabe and Isobe as a hierarchical extension of optimal transports. 
  In this paper, we present an efficient approximation algorithm, namely Sinkhorn algorithm for sequentially composed optimal transports, for its entropic regularization. Furthermore, we present a theoretical analysis of the Sinkhorn algorithm, namely (i) its exponential convergence to the optimal solution with respect to the Hilbert pseudometric, and (ii) a worst-case complexity analysis for the case of one sequential composition. 
\end{abstract}

\begin{keywords}%
  optimal transport, Sinkhorn algorithm, compositionality%
\end{keywords}

\section{Introduction}
Sinkhorn algorithm has emerged as a cornerstone method for approximating solutions to optimal transport (OT) problems.
Initially introduced by \cite{sinkhorn1967concerning} in the context of matrix scaling, this algorithm alternates matrix projections into affine constraints defined by two vectors.
\cite{Cuturi13} later adapted the Sinkhorn algorithm for OT approximation, highlighting its GPU-friendly parallel structure, which has significantly broadened its application in machine learning tasks, such as those demonstrated by \citep{DBLP:conf/icml/ArjovskyCB17,DBLP:journals/tmlr/0001FMHZRWB24} and others.
This adaptability has spurred extensive theoretical research to enhance the algorithm's applicability and performance.
Notably, \cite{AltschulerWR17} developed a more efficient variant by interpreting the Sinkhorn algorithm as a coordinate descent method in the dual space of two Lagrange multipliers.
The following two convergence results, which ensure the computational efficiency of the algorithm, are important theoretical results in machine learning theory.
The first is global convergence using the Hilbert projective metric by \citep{FRANKLIN1989717}, and the second is a complexity analysis of the number of elementary arithmetic operations required for a given error tolerance by \citep{AltschulerWR17}.
We note that the (Lagrange) duality observation mentioned above plays an important role in the proof of these convergence properties.

Another strength of the Sinkhorn algorithm lies in its applicability not only to (vanilla) OT but also to variants of OT with additional affine constraints.
For example, in martingale OT \citep{DBLP:journals/fs/BeiglbockHP13,10.1214/13-AAP925} and multi-marginal OT \citep{GangboAndrzej98,M2AN/Pass2015}, the Sinkhorn algorithm has been applied in \citep{chen2024convergencesinkhornsalgorithmentropic,DBLP:journals/siamsc/BenamouCCNP15,DBLP:journals/nm/BenamouCN19}.
Note that the same theoretical analysis as for OT cannot be applied directly to these variants.
This difficulty arises because the coordinate descent direction in the dual space becomes more complex as the constraints that need to be satisfied in the variant become more complex than in vanilla.

The purpose of this paper is to extend the Sinkhorn algorithm to a further variant of OT, namely, \emph{sequentially composed OT (SeqOT)} introduced in \cref{def:seq_composed_optimal_transport} later.
A basic example of SeqOT is given by the linear program of the following form:
\begin{equation}
    \begin{aligned}
    \min_{(\ind{\vect{P}}{1},\ind{\vect{P}}{2})\in\varPi(\vect{a},\vect{b})} &\left\langle\ind{\vect{C}}{1},\ind{\vect{P}}{1}\right\rangle+\left\langle\ind{\vect{C}}{2},\ind{\vect{P}}{2}\right\rangle,\\
    \varPi(\vect{a},\vect{b})\coloneqq&\Set{\ab(\ind{\vect{P}}{1},\ind{\vect{P}}{2})\in\Rnneg^{m\times m}\times\Rnneg^{m\times m}
    |
    {
    \begin{array}{c}
    \vect{a} = \ind{\vect{P}}{1}\vect{1}, \quad \vect{b} = {\ind{\vect{P}}{2}}^\top \vect{1}\\
    \text{and}\\
    {\ind{\vect{P}}{1}}^{\top}\vect{1} = \ind{\vect{P}}{2}\vect{1}
    \end{array}
    }
    },
  \end{aligned}
  \tag{BaseSeqOT}
  \label{eq:BaseSeqOT}
\end{equation}
where $\left\langle\bullet,\bullet\right\rangle$ stands for component-wise inner product, $\vect{1}$ is the $m$-dimensional all-ones vector, $\ind{\vect{C}}{1}$ and $\ind{\vect{C}}{2}$ are given cost matrices and $\vect{a}$ and $\vect{b}$ are given probability vectors.
SeqOT is a special case of the compositional OT proposed by \cite{WatanabeIsobe} for \emph{hierarchical planning}---it has been widely studied, mainly in the context of hierarchical reinforcement learning (see, e.g., \cite{deds/BartoM03,csur/PateriaSTQ21})---and SeqOT is a problem in which the sequence of transportation plans is subject to the constraint that composed plans have the same marginal distribution; see the last constraint of \eqref{eq:BaseSeqOT}.
This compositional constraint is a major obstacle when applying Sinkhorn to SeqOT because it is not obvious how to make the \emph{sequential} adjacency structure compatible with the \emph{parallel} structure of the conventional Sinkhorn algorithm.
\paragraph{Contributions.} This paper makes three contributions.
This paragraph presents these contributions only in the case of \eqref{eq:BaseSeqOT} for simplicity.
Note that in subsequent sections, we will also consider the more general SeqOT.
\begin{enumerate}
    \item We propose a novel Sinkhorn-type algorithm for SeqOT in \cref{def:Sinkhorn}.
    If we write down the proposed algorithm in the case of \eqref{eq:BaseSeqOT}, we get the following:
    \begin{achievement}[An example of \cref{def:Sinkhorn}]\label{achivement:Sinkhorn}
        For \eqref{eq:BaseSeqOT}, given vectors $(\vect{u}^0,\vect{v}^0,\vect{w}^0)\in\Rpos^{m}\times\Rpos^{m}\times\Rpos^{m}$ are updated recursively as follows:
        \begin{align*}
        \vect{w}^{n+1} \leftarrow \sqrt{ 
        \frac{(\ind{\vect{K}}{1})^{\top}
        \vect{u}^{n}}{\ind{\vect{K}}{2}
        \vect{v}^{n} }
        },
        &&\vect{u}^{n+1} \leftarrow \frac{\vect{a}}{ \ind{\vect{K}}{1}
        (1/\vect{w}^{n+1})  },
        &&\vect{v}^{n+1} \leftarrow \frac{\vect{b}}{(\ind{\vect{K}}{2})^{\top}
        \vect{w}^{n+1} },
    \end{align*}
    for $n=0,1,\dots.$ 
    Here, $\ind{\vect{K}}{1}$ and $\ind{\vect{K}}{2}$ are the Gibbs kernel induced from $\ind{\vect{C}}{1}$ and $\ind{\vect{C}}{2}$, respectively, and principal square roots and division work with vectors component-wisely.
    \end{achievement}
    Unlike Sinkhorn for OT, this algorithm alternately projects onto the adjacency constraint set and the marginal constraint set.
    This alternation matches the sequential structure with the parallel structure of matrix scaling and enables the extension of the Sinkhorn algorithm to SeqOT.
    
    \item We establish a global convergence result for the Hilbert metric for the proposed algorithm in \cref{thm:global_convergence}.
    For the case of \eqref{eq:BaseSeqOT}, we have the following:
    \begin{achievement}[Informal version of \cref{thm:global_convergence}]
        The sequence $((\vect{u}^n,\vect{v}^n,\vect{w}^n))_{n\in\nat}$ in \cref{achivement:Sinkhorn} converges exponentially to the optimal vectors $(\widehat{\vect{u}},\widehat{\vect{v}},\widehat{\vect{w}})$ in the sense of the Hilbert metric.
    \end{achievement}
    In this result, the technical contribution is that the proposed algorithm is designed in such a way that the elegant properties of the Hilbert metric are utilized. Furthermore, for \eqref{eq:BaseSeqOT}, the proposed algorithm achieves a faster convergence rate than the general SeqOT in \cref{thm:global_convergence_2}.
    \item  Finally, \cref{thm:main_theorem} below provides the computational complexity of the proposed method in the sense of \cite{AltschulerWR17} for \eqref{eq:BaseSeqOT}.
    \begin{achievement}[Simplified version of \cref{thm:main_theorem}]\label{achievement:complexity}
    Given a tolerance error $\delta>0$, our algorithm returns a pair $(\widetilde{\ind{\vect{P}}{1}},\widetilde{\ind{\vect{P}}{2}})\in\varPi(\vect{a},\vect{b})$ satisfying \[\Big\langle\ind{\vect{C}}{1}, \widetilde{\ind{\vect{P}}{1}} \Big\rangle + \Big\langle\ind{\vect{C}}{2}, \widetilde{\ind{\vect{P}}{2}} \Big\rangle \leq \min\limits_{(\ind{\vect{P}}{1},\ind{\vect{P}}{2})\in\varPi(\vect{a},\vect{b})}\ab\{\left\langle\ind{\vect{C}}{1},\ind{\vect{P}}{1}\right\rangle+\left\langle\ind{\vect{C}}{2},\ind{\vect{P}}{2}\right\rangle\} + \delta,\]
    within a complexity of $
        O(m^2(\log m) (\|\ind{\vect{C}}{1}\|_{\infty}+ \|\ind{\vect{C}}{2}\|_{\infty})^3 \delta^{-3})
    $.

    \end{achievement}
    In contrast, the existing methods in \citep{WatanabeIsobe} require at least $O(m^3)$.
    This discrepancy arises because our algorithm iteratively utilizes matrix-vector multiplication as demonstrated in \cref{achivement:Sinkhorn}, whereas the latter employs a matrix-matrix product.
    Technically, the manner in which we define the stopping criterion for the iteration is crucial. In vanilla OT, the distance from the target marginal served as the criterion. However, for SeqOTs, it is essential to establish a completely different criterion.
\end{enumerate}

\paragraph{Organizations of this paper.}
In this paper, we propose a Sinkhorn-type algorithm for general sequential composed optimal transport problems (SeqOTs), including the specific case described in \eqref{eq:BaseSeqOT}.
The formulation of SeqOTs and the proposed algorithm are detailed in \cref{sec:formulation}.
We establish the global convergence of our algorithm within the Hilbert metric framework in \cref{sec:global_conv}.
Subsequently, in \cref{sec:time_complexity}, we analyze the computational complexity of general SeqOTs with a single sequential composition, such as the BaseSeqOT.
Related work is discussed in \cref{sec:related}, and our conclusions are presented in \cref{sec:concllusion}.

\paragraph{Notation.}
We write $\Rnneg$ for the set of nonnegative real numbers and write $\Rpos$ for the set of positive real numbers.
We write $\setdist{m}$ for the probability simplex $\setdist{m}\defeq \{\vect{a} = (a_i)^m_{i=1}\in \Rnneg^m\mid \sum^m_{i=1} a_i = 1\}$.
Given two matrices $\vect{X}, \vect{Y}\in \Rmat{m_1}{m_2}$, we write $\langle\vect{X}, \vect{Y}\rangle\coloneq\sum_{i,j}X_{ij}\cdot Y_{ij}$ for the inner product. 
Given two vectors $\vect{x}, \vect{y}\in \real^m$, we write $\vect{x} \odot \vect{y}\in \real^m$ for the Hadamard product $\big(x_i\cdot y_i\big)_{i\in \nset{m}}$. 
Given a matrix $\vect{X}\in \Rmat{m}{n}$, the norm $\|X\|_1$ is defined by $\sum_{i\in \nset{m}, j\in \nset{n}} |X_{ij}|$. 
Given natural numbers $M$ and $N$, the set $\nset{M}$ is given by $\nset{M}\defeq \{1, \dots, M\}$, and the set $\nset{M, N}$ is given by $\nset{M, N}\defeq \{M, \dots, N\}$.

\section{Problem Formulation and the Sinkhorn Algorithm}\label{sec:formulation}

We fix the natural number $M\geq 2$ in this section. We first define our target problem, namely the \emph{sequentially composed optimal transport problem}. 
\begin{definition}[\cite{WatanabeIsobe}]
  \label{def:seq_composed_optimal_transport}
  Let $\big(\ind{\vect{C}}{i}\big)_{i\in \nset{M}} \in \prod_{i\in \nset{M}}  \Rnnegmat{m_i}{m_{i+1}}$ be an $M$-tuple of \emph{cost matrices}, and $\vect{a}\in \setdist{m_1}$ and $\vect{b}\in \setdist{m_{M+1}}$ be distributions. 
  The \emph{sequentially composed optimal transport problem} of the tuple $\big(\big(\ind{\vect{C}}{i}\big)_{i\in \nset{M}}, \vect{a}, \vect{b}\big)$ is defined by 
  \begin{align*}
    &\min_{\big(\ind{\vect{P}}{i}\big)\in \prod_{i\in \nset{M}}  \Rnnegmat{m_i}{m_{i+1}}} \sum_{i=1}^M \big\langle \ind{\vect{C}}{i}, \ind{\vect{P}}{i} \big\rangle = \sum_{i=1}^M \sum_{j=1}^{m_i} \sum_{k=1}^{m_{i+1}} \ind{C}{i}_{jk} \ind{P}{i}_{jk}\\
    \textup{subject to} \quad &\vect{a} = \ind{\vect{P}}{1}\vect{1}_{m_2}, \quad \vect{b} = {\ind{\vect{P}}{M}}^\top \vect{1}_{m_M},\quad\text{and}\quad {\ind{\vect{P}}{i}}^{\top}\vect{1}_{m_{i}} = \ind{\vect{P}}{i+1}\vect{1}_{m_{i+2}},
  \end{align*}
  for all $i\in \nset{M-1}$. 
\end{definition}
A feasible solution, which we call \emph{transportation plans}, of this problem is a tuple $\big(\ind{\vect{P}}{i}\big)_{i\in \nset{M}}$ of matrices.  
We provide an intuition of the three consistency conditions in the following: (i) the first transportation plan $\ind{\vect{P}}{1}$ is consistent with the distribution $\vect{a}$; (ii) the last transportation plan $\ind{\vect{P}}{M}$ is consistent with the distribution $\vect{b}$; and (iii) in each boundary between two transportation plans $\ind{\vect{P}}{i}$ and $\ind{\vect{P}}{i+1}$, they should be consistent, that is, ${\ind{\vect{P}}{i}}^{\top}\vect{1}_{m_{i}} = \ind{\vect{P}}{i+1}\vect{1}_{m_{i+2}}$. We note that~\cite{WatanabeIsobe} also introduced the parallel composition, while we focus on the sequential composition in this paper.  

Following the standard approximation approach for optimal transports with the entropic regularization~\citep{Cuturi13}, we introduce an entropic regularization for sequentially composed optimal transports.   

\begin{definition}
  \label{def:regularized_seq_composed_optimal_transport}
  Assume the setting of Def.~\ref{def:seq_composed_optimal_transport}, and let $\epsilon > 0$. 
  The \emph{regularized sequentially composed optimal transport problem} of the tuple $\big(\big(\ind{\vect{C}}{i}\big)_{i\in \nset{M}}, \vect{a}, \vect{b}\big)$ with $\epsilon$ is defined by 
  \begin{align*}
    &\min_{\big(\ind{\vect{P}}{i}\big)_{i\in \nset{M}}} \sum_{i=1}^M \big\langle \ind{\vect{C}}{i}, \ind{\vect{P}}{i} \big\rangle -\epsilon\ent{\ind{\vect{P}}{i}}  \\
    \textup{subject to} \quad &\vect{a} = \ind{\vect{P}}{1}\vect{1}_{m_2}, \quad \vect{b} = {\ind{\vect{P}}{M}}^\top \vect{1}_{m_M},\quad\text{and}\quad {\ind{\vect{P}}{i}}^{\top}\vect{1}_{m_{i}} = \ind{\vect{P}}{i+1}\vect{1}_{m_{i+2}} \quad\text{for all $i\in \nset{M-1}$, }
  \end{align*}
  where the \emph{entropy} $\ent{\vect{P}}$ of a matrix $\vect{P}\in \Rnnegmat{m}{m'}$ is defined by \[\ent{\vect{P}}\defeq -\sum_{j=1}^m\sum_{k=1}^{m'} P_{jk}\big((\log P_{jk}) - 1\big).\]
  Following the standard convention, we set $0\log 0 = 0$.
\end{definition}

By the standard argument of convex optimization (e.g.~\cite{boyd2004convex}), there is the Lagrange dual problem, and the strong duality holds. 
\begin{proposition}
  \label{prop:dual_regularized_seq_composed_optimal_transport}
  Assume the setting of Def.~\ref{def:regularized_seq_composed_optimal_transport}. 
  Its Lagrange dual problem is given by 
  \begin{align*}
    \max_{\big(\ind{\vect{f}}{i} \big)\in \prod_{i\in \nset{M+1}}\Rnneg^{m_i}}& \langle \ind{\vect{f}}{1}, \vect{a}\rangle + \langle \ind{\vect{f}}{M+1}, \vect{b}\rangle - \epsilon \Big(\sum^{m_{M}}_{j=1}\sum^{m_{M+1}}_{k=1} \exp\big((\ind{f}{M}_j + \ind{f}{M+1}_k-\ind{C}{M}_{jk})/\epsilon\big) \\
    & + \sum_{i=1}^{M-1} \sum^{m_i}_{j=1}\sum^{m_{i+1}}_{k=1} \exp\big( (\ind{f}{i}_j - \ind{f}{i+1}_k-\ind{C}{i}_{jl})/\epsilon\big) \Big).
  \end{align*}
  We also write $\max_{(\ind{\vect{f}}{i})_{i\in \nset{M+1}}}\laglange\big(\ind{\vect{f}}{i}\big)$ for the dual problem with the Lagrangian $\laglange\big(\ind{\vect{f}}{i}\big)$.
\end{proposition}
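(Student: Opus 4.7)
\begin{proofs}
The plan is to apply standard Lagrangian duality to the strictly convex program in \cref{def:regularized_seq_composed_optimal_transport}. I would introduce a dual variable $\ind{\vect{f}}{1}\in\real^{m_1}$ for the source-marginal constraint $\vect{a}=\ind{\vect{P}}{1}\vect{1}_{m_2}$, a variable $\ind{\vect{f}}{M+1}\in\real^{m_{M+1}}$ for the sink-marginal constraint $\vect{b}={\ind{\vect{P}}{M}}^{\top}\vect{1}_{m_M}$, and one variable $\ind{\vect{f}}{i+1}\in\real^{m_{i+1}}$ for each of the $M-1$ internal adjacency constraints ${\ind{\vect{P}}{i}}^{\top}\vect{1}_{m_i}=\ind{\vect{P}}{i+1}\vect{1}_{m_{i+2}}$. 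With this indexing the Lagrangian decouples additively into a block for each $\ind{\vect{P}}{i}$ involving only its two neighboring duals $\ind{\vect{f}}{i}$ and $\ind{\vect{f}}{i+1}$, plus a $\vect{P}$-free piece $\langle\ind{\vect{f}}{1},\vect{a}\rangle+\langle\ind{\vect{f}}{M+1},\vect{b}\rangle$.

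I would then minimize the Lagrangian over each $\ind{\vect{P}}{i}\in\Rnnegmat{m_i}{m_{i+1}}$ entrywise. Since $-\epsilon\,\ent{\cdot}$ is strictly convex and separable, the first-order condition $\partial\laglange/\partial\ind{P}{i}_{jk}=0$ reduces to $\ind{C}{i}_{jk}+\epsilon\log\ind{P}{i}_{jk}+(\text{linear in }\ind{\vect{f}}{i},\ind{\vect{f}}{i+1})=0$, yielding a unique positive minimizer in Gibbs form. For $i\in\nset{M-1}$, the dual $\ind{\vect{f}}{i+1}$ enters the $i$-th adjacency constraint as a column-sum multiplier of $\ind{\vect{P}}{i}$ (and as a row-sum multiplier of $\ind{\vect{P}}{i+1}$), producing the skew combination $\ind{f}{i}_j-\ind{f}{i+1}_k$ in the exponent; for $i=M$, however, $\ind{\vect{f}}{M+1}$ is paired with the prescribed marginal $\vect{b}$ rather than with a next plan, so its contribution carries the opposite sign and yields the symmetric combination $\ind{f}{M}_j+\ind{f}{M+1}_k$ appearing in the statement.

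Plugging these minimizers back into the Lagrangian then uses the first-order condition to cancel the cost-plus-linear portion of each block against $\epsilon\log\ind{P}{i}_{jk}$, leaving exactly the claimed sum of exponentials together with the two boundary inner products in $\vect{a}$ and $\vect{b}$. For strong duality I would invoke Slater's condition in the form suitable for linear equality constraints: assuming $\vect{a}$ and $\vect{b}$ are strictly positive (the general case reduces to this by restricting to their supports), a strictly positive feasible point is given by $\ind{\vect{P}}{i}=\vect{q}^{(i)}{\vect{q}^{(i+1)}}^{\top}$ for any positive intermediate probability vectors $\vect{q}^{(i)}\in\Rpos^{m_i}$ chaining $\vect{a}$ to $\vect{b}$, on which the primal objective is finite, proper, and continuous, so zero duality gap follows from standard convex analysis.

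The one non-routine step is tracking the asymmetric sign pattern $+\ind{f}{M+1}_k$ versus $-\ind{f}{i+1}_k$ in the exponentials, which reflects the structural difference between a multiplier attached to a fixed vector $\vect{b}$ and one attached to another decision matrix $\ind{\vect{P}}{i+1}$; this is the only piece of bookkeeping that genuinely requires care.
\end{proofs}
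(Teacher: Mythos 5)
Your proposal is correct and follows essentially the same route as the paper: form the Lagrangian with one multiplier per marginal/adjacency constraint, minimize over each $\ind{\vect{P}}{i}$ entrywise via the first-order condition $\partial\laglange/\partial\ind{P}{i}_{jk}=0$ to get the Gibbs-form minimizers (with the sign asymmetry between $-\ind{f}{i+1}_k$ and $+\ind{f}{M+1}_k$), substitute back, and invoke Slater's condition for strong duality. Your explicit strictly feasible point $\ind{\vect{P}}{i}=\vect{q}^{(i)}{\vect{q}^{(i+1)}}^{\top}$ is a small addition to what the paper dismisses as routine, but the argument is otherwise the same.
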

The proof is routine since it satisfies Slater's condition. See \S\ref{app:proof_prop_dual_regularized_seq_composed_optimal_transport} for the proof.

Based on the dual problem, we derive our novel Sinkhorn algorithm in a similar manner to coordinate descent algorithms: the idea is that we update the variables by solving the equation $\partial \laglange/\partial \ind{f}{i}_{j} = 0$ for each $i$ and $j$. 
\begin{definition}
  Given initial vectors  $(\ind{\vect{f}}{0, i})_{i\in \nset{M+1}}$, 
  the \emph{Sinkhorn iteration} $\big( (\ind{\vect{f}}{n, i})_{i\in \nset{M+1}} \big)_{n\in \nat}$ in log space is defined as follows: 
  \begin{align*}
    \ind{f}{n+1, i}_j &\defeq \frac{\epsilon}{2}\log\Bigg( \frac{\sum_{k\in \nset{m_{i-1}}} \exp\big((\ind{f}{n, i-1}_k - \ind{C}{i-1}_{kj})/\epsilon\big) }{\sum_{k\in \nset{m_{i+1}}} \exp\big( (-\ind{f}{n, i+1}_k - \ind{C}{i}_{jk})/\epsilon\big)}\Bigg) &&\text{ $\forall i\in \nset{2, M-1}$ and $\forall j\in \nset{m_i}$,}\\
    \ind{f}{n+1, M}_j &\defeq \frac{\epsilon}{2}\log\Bigg( \frac{\sum_{k\in \nset{m_{M-1}}} \exp\big((\ind{f}{n, M-1}_k - \ind{C}{M-1}_{kj})/\epsilon\big) }{\sum_{k\in \nset{m_{M+1}}} \exp\big( (\ind{f}{n, M+1}_k - \ind{C}{M}_{jk})/\epsilon\big)}\Bigg) &&\text{ $\forall j\in \nset{m_M}$,}\\
    \ind{f}{n+1, 1}_j &\defeq \epsilon \log (a_j) - \epsilon \log\Big(\sum_{k\in \nset{m_2}} \exp\big( (-\ind{f}{n+1, 2}_k - \ind{C}{1}_{jk})/\epsilon\big)\Big) &&\text{ $\forall j\in \nset{m_{2}}$,}\\
    \ind{f}{n+1, M+1}_j &\defeq \epsilon \log (b_j) - \epsilon  \log\Big(\sum_{k\in \nset{m_M}} \exp\big( (\ind{f}{n+1, M}_k - \ind{C}{M}_{jk})/\epsilon\big)\Big) &&\text{ $\forall j\in \nset{m_{M+1}}$.}
  \end{align*}
\end{definition}
Unlike the standard Sinkhorn algorithm for OTs, the iteration is mainly divided into the two parts: (i) we first update the variables $\ind{\vect{f}}{n+1, i}$ of boundaries (for $i\in \nset{2, M}$) with the previous values $(\ind{\vect{f}}{n, j})_{j\in \nset{M+1}}$ (in the first and second lines); and (ii) the variables $\ind{\vect{f}}{n+1, 1}$ and $\ind{\vect{f}}{n+1, M+1}$ on the edges are updated by the current values $\ind{\vect{f}}{n+1, 2}$ and $\ind{\vect{f}}{n+1, M}$ (in the third and fourth lines).

\begin{remark}
We carefully choose this ordering of the variable updates, that is, we first update the variables on the boundaries, and then update the variable on the edges. In fact, most proofs in this paper explicitly depend on this ordering, and it is unclear whether we can get similar results by changing the ordering.  
\end{remark}

By definining new vectors $\ind{\vect{u}}{n, i} \defeq \exp\big(\ind{\vect{f}}{n, i}/\epsilon\big)$ and \emph{Gibbs kernels} $\ind{\vect{K}}{i}\in \Rposmat{m_{i}}{m_{i+1}}$ as $\ind{K}{i}_{jk}\defeq  \exp(-\ind{C}{i}_{jk}/\epsilon)$ for any $j\in \nset{m_i}$ and $k\in \nset{m_{i+1}}$, we obtain a novel Sinkhorn iteration. 

\begin{definition}\label{def:Sinkhorn}
  Given initial vectors  $(\ind{\vect{u}}{0, i})\in \prod_{i\in \nset{M+1}} \Rpos^{m_{i}}$, 
  the \emph{Sinkhorn iteration} $\big( (\ind{\vect{u}}{n, i} )_{i\in \nset{M+1}} \big)_{n\in \nat}$ is defined as follows: 
  \begin{align*}
    \ind{\vect{u}}{n+1, i} &\defeq \Bigg(\frac{(\ind{\vect{K}}{i-1})^{\top} \ind{\vect{u}}{n,i-1}}{\ind{\vect{K}}{i}\frac{\vect{1}_{m_{i+1}}}{\ind{\vect{u}}{n,i+1}}}\Bigg)^{1/2}, \quad &&\ind{\vect{u}}{n+1, M} \defeq \Bigg(\frac{(\ind{\vect{K}}{M-1})^{\top} \ind{\vect{u}}{n,M-1}}{\ind{\vect{K}}{M}\ind{\vect{u}}{n,M+1}}\Bigg)^{1/2},  \\
    \ind{\vect{u}}{n+1, 1} &\defeq \frac{\vect{a}}{\ind{\vect{K}}{1}\frac{\vect{1}_{m_{2}}}{\ind{\vect{u}}{n+1,2}}}, \quad  &&\ind{\vect{u}}{n+1, M+1} \defeq \frac{\vect{b}}{(\ind{\vect{K}}{M})^{\top}\ind{\vect{u}}{n+1,M}},
  \end{align*}
  where $i\in \nset{2, M-1}$. 
\end{definition}

We conclude this section by introducing the matrices 
$\big((\ind{\vect{P}}{n, i})_{i\in \nset{M}}\big)_{n\in \nat}$ that are induced by the Sinkhorn iteration, which are, roughly speaking, approximations of optimal transportation plans. 
\begin{definition}
  Given a Sinkhorn iteration $\big( (\ind{\vect{u}}{n, i})_{i\in \nset{M+1}} \big)_{n\in \nat}$, the \emph{matrices $\big((\ind{\vect{P}}{n, i})_{i\in \nset{M}}\big)_{n\in \nat}$ induced by the iteration} are defined as follows: 
  \begin{align*}
    \ind{\vect{P}}{n, i} \defeq \diag{\ind{\vect{u}}{n, i}}\ind{\vect{K}}{i}\diag{\frac{\vect{1}_{m_{i+1}}}{\ind{\vect{u}}{n, i+1}}}, \quad  \ind{\vect{P}}{n, M} \defeq \diag{\ind{\vect{u}}{n, M}}\ind{\vect{K}}{M}\diag{\ind{\vect{u}}{n, M+1}}, 
  \end{align*}
  for any $n\in \nat$ and $i\in \nset{M}$.
\end{definition}

\section{Convergence}\label{sec:global_conv}

In this section, we show that the Sinkhorn iteration converges exponentially to the solution with respect to the Hilbert (pseudo-)metric from any initial vectors. We present two results: one is for the case of $M = 2$, and the other is for general cases $(M \geq 2)$. 
\begin{definition}[e.g.~\cite{busemann2012projective}]
  \label{def:hmet}
  The \emph{Hilbert (pseudo-)metric} $\hmet$ over $\Rpos^m$ is  defined by 
  \begin{align*}
    \hmet(\vect{u}, 
    \vect{v}) \defeq \log \Big(\max_{i\in \nset{m}}\Big(\frac{u_i}{v_i}\Big) \cdot \max_{j\in \nset{m}}\Big(\frac{v_j}{u_j}\Big) \Big).
  \end{align*} 
\end{definition}
We note that given $\vect{u}, \vect{v}\in \Rpos^m$. the equation $ \hmet(\vect{u}, 
    \vect{v}) = 0$ holds iff there exists a (necessarily unique) constant $c\in \Rpos$ such that $\vect{u} = c\vect{v}$.

A key lemma to prove the convergences is the Birkhoff's inequality, which is also used to prove the convergence of the Sinkhorn-Knopp algorithm~\citep{sinkhorn1967concerning,Knight08}.

\begin{lemma}[\cite{Birkhoff,Samelson,CAVAZOSCADENA2003291}]
  \label{lem:Birkhoff}
  Let $\vect{v}, \vect{v}'\in \Rpos^{m}$, and $\vect{A}\in \Rposmat{n}{m}$. 
  We set the value $\gamma(\vect{A})\in \Rpos$ and $\lambda(\vect{A})\in \real$ as follows: 
  \begin{align*}
    \gamma(\vect{A})\defeq \max_{i, j\in \nset{n}, k, l\in \nset{m}} \frac{A_{ik} A_{jl}}{A_{jk}A_{il}}, \quad \lambda(\vect{A})\defeq \frac{\big(\gamma(A)\big)^{1/2} - 1}{\big(\gamma(A)\big)^{1/2} + 1}. 
  \end{align*}
  Then, the value $\lambda(\vect{A})$ satisfies $0\leq \lambda(\vect{A}) < 1$ and the following inequality holds: 
  \begin{align*}
    \hmet(\vect{A}\vect{v}, \vect{A}\vect{v}') &\leq \lambda(\vect{A})\cdot \hmet(\vect{v}, \vect{v}').  &&\blacksquare
  \end{align*}  
\end{lemma}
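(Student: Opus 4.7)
The plan is to follow the classical strategy for Birkhoff's contraction inequality. The bound $\gamma(\vect{A}) \geq 1$ (and hence $\lambda(\vect{A}) \in [0, 1)$) is immediate: in the max over $i,j,k,l$ we may take $i=j$ and $k=l$, giving $1$. For the contraction itself I would fix $\vect{v}, \vect{v}' \in \Rpos^m$, set $r_k = v_k / v'_k$, and for each row $i$ define
\[
\mu_i \;\defeq\; \frac{(\vect{A}\vect{v})_i}{(\vect{A}\vect{v}')_i} \;=\; \sum_{k\in\nset{m}} w^{(i)}_k\, r_k, \qquad w^{(i)}_k \;\defeq\; \frac{A_{ik}\, v'_k}{\sum_{l} A_{il}\, v'_l},
\]
so that $\mu_i$ is a convex combination of the ratios $r_k$. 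Since $\hmet(\vect{A}\vect{v}, \vect{A}\vect{v}') = \log \max_{i,j}(\mu_i / \mu_j)$, the whole problem reduces to bounding $\mu_i/\mu_j$ for an arbitrary pair of rows.

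For any such pair, direct expansion and a relabelling $k \leftrightarrow l$ in the denominator yield the symmetric form
\[
\frac{\mu_i}{\mu_j} \;=\; \frac{\sum_{k,l} A_{ik}A_{jl}\, v_k v'_l}{\sum_{k,l} A_{ik}A_{jl}\, v_l v'_k},
\]
in which every numerator/denominator cross-ratio equals $A_{ik}A_{jl}/(A_{jk}A_{il})$ (bounded by $\gamma(\vect{A})$) and every ratio $(v_k v'_l)/(v_l v'_k) = r_k/r_l$ lies in $[\,e^{-D},\, e^{D}\,]$, where $D \defeq \hmet(\vect{v}, \vect{v}')$. So the task becomes: maximize $\log(\mu_i/\mu_j)$ over all families of positive weights and ratios subject to these two log-bounds. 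By convexity of $\log$ in each argument I would reduce to two-point supports (both for the $r_k$, taking only the extreme values $\alpha = \min_k r_k$ and $\beta = \max_k r_k$, and for the weights), collapsing the problem to a scalar optimization that can be solved by calculus.

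The main obstacle is the last step: extracting the \emph{sharp} contraction factor $\lambda(\vect{A}) = (\sqrt{\gamma}-1)/(\sqrt{\gamma}+1)$ rather than a weaker bound like $1 - 1/\gamma$. Here the miracle is the identity $\tanh\!\big(\tfrac{1}{4}\log \gamma\big) = (\sqrt{\gamma}-1)/(\sqrt{\gamma}+1)$, which surfaces naturally once the one-variable extremal problem is set up carefully. A cleaner alternative, should the direct calculation become unwieldy, is to invoke the Birkhoff--Hopf theorem identifying the projective contraction coefficient of a positive linear map with $\tanh(\Delta/4)$, where $\Delta$ is the projective diameter of the image cone, and then verify the shorter bound $\Delta \leq \log \gamma(\vect{A})$ by evaluating $\hmet(\vect{A}\vect{e}_k, \vect{A}\vect{e}_l)$ on coordinate directions and appealing to the convex combination representation above to handle general $\vect{x}, \vect{y} \in \Rpos^m$.
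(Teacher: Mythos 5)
The paper does not prove this lemma at all: it is imported verbatim from the cited literature (\cite{Birkhoff,Samelson,CAVAZOSCADENA2003291}), so there is no in-paper argument to compare yours against. Judged on its own terms, your sketch follows the standard classical route correctly in its reductions: the observation $\gamma(\vect{A})\geq 1$ giving $0\leq\lambda(\vect{A})<1$, the representation of $(\vect{A}\vect{v})_i/(\vect{A}\vect{v}')_i$ as a convex combination of the ratios $r_k=v_k/v'_k$, and the symmetric double-sum form of $\mu_i/\mu_j$ are all sound, and the identity $\tanh\big(\tfrac14\log\gamma\big)=\big(\sqrt{\gamma}-1\big)/\big(\sqrt{\gamma}+1\big)$ is exactly the right bridge to the stated contraction factor. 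One small repair: the reduction to two-point supports is better justified by noting that $\mu_i/\mu_j$ is a ratio of linear forms with positive coefficients in $(r_1,\dots,r_m)$, hence quasilinear, so its maximum over the box $[\min_k r_k,\max_k r_k]^m$ is attained at a vertex; ``convexity of $\log$ in each argument'' is not the operative reason.

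The genuine gap is the step you defer to ``calculus'': after the two-point reduction one must solve the resulting scalar extremal problem and show that its value is bounded by $\lambda(\vect{A})\cdot\hmet(\vect{v},\vect{v}')$ \emph{uniformly} in $D=\hmet(\vect{v},\vect{v}')$, not merely asymptotically as $D\to 0$; this is precisely where the $\tanh$ factor emerges and where essentially all of the work in Birkhoff's and Hopf's proofs (and in Bushell's and Seneta's expositions) lives, including the mediant-type bound $S_iT_j/(T_iS_j)\leq\gamma(\vect{A})$ needed to control the partial sums over the two ratio classes. Your fallback route---invoking the Birkhoff--Hopf theorem that the contraction coefficient equals $\tanh(\Delta/4)$ and checking $\Delta\leq\log\gamma(\vect{A})$ on coordinate directions---is correct, but it assumes as a black box the very theorem whose content is Lem.~\ref{lem:Birkhoff}; that is exactly what the paper does by citation, so as a self-contained proof it does not add anything. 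In short: right strategy, correct scaffolding, but the decisive quantitative estimate is asserted rather than proved.
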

 We also exploit the following characterization of optimal transportation plans. A variant for the standard OT is also used in~\cite{Cuturi13}.
\begin{lemma}
  \label{lem:ch_optimal_matrix}
  The following statements are equivalent: 
  \begin{itemize}
    \item The transportation plans $(\ind{\vect{P}}{i})_{i\in \nset{M}}$ are optimal for the primal problem defined in Def.~\ref{def:regularized_seq_composed_optimal_transport}.
    \item There are vectors $(\ind{\vect{x}}{i})_{i\in \nset{M}}$ and transportation plans $(\ind{\vect{P}}{i})_{i\in \nset{M}}$ such that 
    \begin{align*}
      \ind{\vect{P}}{i} &= \diag{\ind{\vect{x}}{i}}\ind{\vect{K}}{i}\diag{\frac{\vect{1}_{m_{i+1}}}{\ind{\vect{x}}{i+1}}} &\text{ for any $i\in \nset{M-1}$,}\\
      \ind{\vect{P}}{M} &= \diag{\ind{\vect{x}}{M}}\ind{\vect{K}}{M}\diag{\ind{\vect{x}}{M+1}}. &&\blacksquare
    \end{align*}
  \end{itemize}
\end{lemma}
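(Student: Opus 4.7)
The plan is to identify the claimed factorization as the KKT stationarity condition of the strictly convex program in \cref{def:regularized_seq_composed_optimal_transport}, rewritten in multiplicative form. Because the objective is strictly convex on $\prod_{i\in \nset{M}} \Rnnegmat{m_i}{m_{i+1}}$ (thanks to the entropy term), the constraints are affine, and Slater's condition holds (as already used in \cref{prop:dual_regularized_seq_composed_optimal_transport}), the KKT conditions are both necessary and sufficient for primal optimality. Hence both implications reduce to inverting a single $\log$-$\exp$ change of variables between primal plans and dual multipliers.

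For the forward direction, I would attach multipliers $\ind{\vect{f}}{1}$ to $\vect{a} = \ind{\vect{P}}{1}\vect{1}_{m_2}$, $\ind{\vect{f}}{M+1}$ to $\vect{b} = (\ind{\vect{P}}{M})^\top \vect{1}_{m_M}$, and $\ind{\vect{f}}{i+1}$ (for $i \in \nset{M-1}$) to the adjacency constraint ${\ind{\vect{P}}{i}}^\top \vect{1}_{m_i} = \ind{\vect{P}}{i+1}\vect{1}_{m_{i+2}}$, with signs chosen to reproduce the dual in \cref{prop:dual_regularized_seq_composed_optimal_transport}. Differentiating the Lagrangian in $\ind{P}{i}_{jk}$ and using $\partial \ent{\vect{P}}/\partial P_{jk} = -\log P_{jk}$, the stationarity condition becomes
\begin{align*}
\epsilon \log \ind{P}{i}_{jk} &= \ind{f}{i}_j - \ind{f}{i+1}_k - \ind{C}{i}_{jk} \quad \text{for } i \in \nset{M-1},\\
\epsilon \log \ind{P}{M}_{jk} &= \ind{f}{M}_j + \ind{f}{M+1}_k - \ind{C}{M}_{jk}.
\end{align*}
Exponentiating, setting $\ind{\vect{x}}{i} \defeq \exp(\ind{\vect{f}}{i}/\epsilon)$ component-wise, and recalling $\ind{K}{i}_{jk} = \exp(-\ind{C}{i}_{jk}/\epsilon)$, this rearranges into exactly the two matrix identities stated in the lemma.

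The converse runs in reverse: given feasible plans admitting the factorization with strictly positive $\ind{\vect{x}}{i}$, define $\ind{\vect{f}}{i} \defeq \epsilon \log \ind{\vect{x}}{i}$; then the primal-dual pair satisfies stationarity, primal feasibility is the feasibility hypothesis, and dual feasibility is vacuous (no inequality constraints), so KKT sufficiency for this convex program delivers optimality. The main obstacle is purely one of careful bookkeeping with the signs of the multipliers at the boundary indices, as $i=1$ and $i=M$ behave differently from the interior indices, and the sign attached to each adjacency constraint must be matched to the exponent appearing in \cref{prop:dual_regularized_seq_composed_optimal_transport}. Strict positivity of $\ind{\vect{x}}{i}$ is automatic because the entropy barrier drives every optimal $\ind{P}{i}_{jk}$ strictly into the interior, so the logarithm is well-defined throughout.
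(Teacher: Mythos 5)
Your proposal is correct and follows essentially the same route as the paper's proof: both directions are obtained from the KKT conditions for the Lagrangian of Def.~\ref{def:regularized_seq_composed_optimal_transport}, with stationarity $\partial \laglange/\partial \ind{P}{i}_{jk}=0$ yielding the exponential form of the plans and the substitution $\ind{\vect{x}}{i}=\exp(\ind{\vect{f}}{i}/\epsilon)$ (resp.\ $\ind{\vect{f}}{i}=\epsilon\log\ind{\vect{x}}{i}$) converting between the two formulations. Your added remarks on sufficiency of KKT for this convex program and on strict positivity via the entropy term are consistent with, and slightly more explicit than, the paper's argument.
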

The proof follows from the Karush--Kuhn--Tucker (KKT) theorem; see \S\ref{sec:proofChOTmat} for the proof.

\begin{corollary}
  \label{cor:ch_optimal_matrix}
  Assume that the vectors $(\ind{\opt{\vect{f}}}{i})_{i\in \nset{M+1}}$ are optimal for the Lagrange dual problem given in Prop.~\ref{prop:dual_regularized_seq_composed_optimal_transport}.  
  Then, we  can write the optimal transportation plans $(\ind{\opt{\vect{P}}}{i})_{i\in \nset{M}}$ that are induced by $(\ind{\opt{\vect{f}}}{i})_{i\in \nset{M+1}}$  as follows: 
  \begin{align*}
    \ind{\opt{\vect{P}}}{i} &= \diag{\ind{\opt{\vect{u}}}{i}}\ind{\vect{K}}{i}\diag{\frac{\vect{1}_{m_{i+1}}}{\ind{\opt{\vect{u}}}{i+1}}} &\text{ for any $i\in \nset{M-1}$,}\\
    \ind{\opt{\vect{P}}}{M} &= \diag{\ind{\opt{\vect{u}}}{M}}\ind{\vect{K}}{M}\diag{\ind{\opt{\vect{u}}}{M+1}},
  \end{align*}
  where $\ind{\opt{\vect{u}}}{i} \defeq \exp(\ind{\opt{\vect{f}}}{i}/\epsilon)$ for any $i\in \nset{M+1}$. 
\end{corollary}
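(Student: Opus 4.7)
The plan is to derive this corollary as a direct consequence of strong duality together with the KKT stationarity conditions for the Lagrangian $\laglange$ of Prop.~\ref{prop:dual_regularized_seq_composed_optimal_transport}. Because the primal problem is convex and Slater's condition holds, an optimal dual tuple $(\ind{\opt{\vect{f}}}{i})_{i\in \nset{M+1}}$ is attained simultaneously with some optimal primal tuple $(\ind{\opt{\vect{P}}}{i})_{i\in \nset{M}}$ at a saddle point of $\laglange$; the desired closed-form expressions for the optimal plans should then fall out of the stationarity equations $\partial \laglange / \partial \ind{P}{i}_{jk} = 0$, so that $(\ind{\opt{\vect{P}}}{i})$ is precisely the witness required by Lemma~\ref{lem:ch_optimal_matrix} with the choice $\ind{\vect{x}}{i} = \ind{\opt{\vect{u}}}{i}$.

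Concretely, I would write out $\laglange$ with multipliers $\ind{\vect{f}}{1}$ on the marginal constraint $\vect{a} = \ind{\vect{P}}{1}\vect{1}$, $\ind{\vect{f}}{M+1}$ on $\vect{b} = {\ind{\vect{P}}{M}}^\top \vect{1}$, and $\ind{\vect{f}}{i+1}$ on the consistency constraint ${\ind{\vect{P}}{i}}^\top \vect{1} = \ind{\vect{P}}{i+1}\vect{1}$ for $i\in \nset{M-1}$, choosing signs so that the dual matches the one displayed in Prop.~\ref{prop:dual_regularized_seq_composed_optimal_transport}. Differentiating the entropic term $-\epsilon \ent{\ind{\vect{P}}{i}}$ in the coordinate $\ind{P}{i}_{jk}$ produces $\epsilon \log \ind{P}{i}_{jk}$, and setting the resulting total derivative to zero yields
\[
  \ind{\opt{P}}{i}_{jk} = \exp\big( (\ind{\opt{f}}{i}_j - \ind{\opt{f}}{i+1}_k - \ind{C}{i}_{jk})/\epsilon\big) = \ind{\opt{u}}{i}_j \cdot \ind{K}{i}_{jk} \cdot (1/\ind{\opt{u}}{i+1}_k)
\]
for the bulk indices $i\in \nset{M-1}$, and analogously $\ind{\opt{P}}{M}_{jk} = \ind{\opt{u}}{M}_j \cdot \ind{K}{M}_{jk} \cdot \ind{\opt{u}}{M+1}_k$ at the terminal index, the sign of $\ind{\opt{f}}{M+1}$ flipping because the constraint it dualizes is a marginal condition on $\vect{b}$ rather than a consistency one. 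Repackaging these scalar identities into diagonal-matrix form reproduces the claimed expressions for $\ind{\opt{\vect{P}}}{i}$ and $\ind{\opt{\vect{P}}}{M}$.

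To close, I would observe that feasibility of the resulting tuple is automatic: stationarity of $\laglange$ with respect to the dual variables $(\ind{\vect{f}}{i})$ at the saddle point is equivalent to the marginal and consistency constraints themselves. Thus $(\ind{\opt{\vect{P}}}{i})$ is a bona fide tuple of transportation plans, and Lemma~\ref{lem:ch_optimal_matrix} certifies primal optimality. The only delicate point I anticipate is sign bookkeeping: one must track the signs of $\ind{\vect{f}}{1}$ and $\ind{\vect{f}}{M+1}$ carefully so that the KKT equations produce the exponent structure of Prop.~\ref{prop:dual_regularized_seq_composed_optimal_transport}, particularly at the boundary index $i = M$, where $\ind{\opt{u}}{M+1}$ appears in the numerator rather than as a reciprocal.
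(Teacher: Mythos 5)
Your proposal is correct and follows essentially the same route as the paper: the corollary is obtained from the KKT stationarity conditions $\partial \laglange/\partial \ind{P}{i}_{jk}=0$ of the Lagrangian (as in the proof of Lem.~\ref{lem:ch_optimal_matrix} and Lem.~\ref{lem:optimal_matrix}), with the optimal dual vectors $\ind{\opt{\vect{f}}}{i}$ substituted and the sign flip at $i=M$ handled exactly as in Prop.~\ref{prop:dual_regularized_seq_composed_optimal_transport}. Your additional remark that feasibility follows from dual stationarity at the saddle point is consistent with, and no stronger than, what the paper's argument already uses.
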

\begin{proof}
It is an immediate consequence of the proof of Lem.~\ref{lem:ch_optimal_matrix}. 
\end{proof}

\subsection{The Case of $M = 2$}
For the case of $M = 2$, we show the following convergence result. 
\begin{theorem}[convergence]
  \label{thm:global_convergence_2}
  Let  $\big( (\ind{\vect{u}}{n, i})_{i\in \nset{3}} \big)_{n\in \nat}$ be a Sinkhorn iteration, and assume that given vectors $(\ind{\opt{\vect{u}}}{i})_{i\in \nset{3}}$ 
  become an optimal solution $\big(\ind{\opt{\vect{f}}}{i}\big)_{i\in \nset{3}}$ of the Lagrange dual problem defined in Prop.~\ref{prop:dual_regularized_seq_composed_optimal_transport} by translating $\ind{\opt{\vect{f}}}{i}\defeq \epsilon \log(\ind{\opt{\vect{u}}}{i})$ for each $i\in \nset{3}$. 
  For each $n\in \nat\backslash \{0\}$, the following inequalities hold: 
  \begin{align*}
    \hmet\Big(\ind{\vect{u}}{n, 1}, \ind{\opt{\vect{u}}}{1}\Big) \leq D^{2n}\cdot d, \quad \hmet\Big(\ind{\vect{u}}{n, 3}, \ind{\opt{\vect{u}}}{3}\Big) \leq D^{2n}\cdot d, \quad \hmet\Big(\ind{\vect{u}}{n, 2}, \ind{\opt{\vect{u}}}{2}\Big) \leq D^{2n-1}\cdot d,
  \end{align*}
  where $D \defeq \max \big(\lambda(\ind{\vect{K}}{1}), \lambda(\ind{\vect{K}}{2})\big)$ and $d \defeq \max\Big(\hmet\big(\ind{\vect{u}}{0, 1}, \ind{\opt{\vect{u}}}{1}\big), \hmet\big(\ind{\vect{u}}{0, 3}, \ind{\opt{\vect{u}}}{3}\big)\Big)$. 
  We note that $D < 1$ by definition (see Lem.~\ref{lem:Birkhoff}). 
\end{theorem}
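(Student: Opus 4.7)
The plan is to prove the bounds by induction on $n$, combining Birkhoff's inequality (\cref{lem:Birkhoff}) with standard properties of the Hilbert pseudometric. First, I would use \cref{cor:ch_optimal_matrix} to show that the optimal vectors $(\ind{\opt{\vect{u}}}{i})_{i\in\nset{3}}$ are themselves fixed points of the Sinkhorn update maps of \cref{def:Sinkhorn} specialized to $M=2$: translating the two marginal constraints and the adjacency constraint ${\ind{\opt{\vect{P}}}{1}}^{\top}\vect{1} = \ind{\opt{\vect{P}}}{2}\vect{1}$ into relations on $(\ind{\opt{\vect{u}}}{i})_{i\in\nset{3}}$ yields exactly the three formulas defining the Sinkhorn iteration. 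Hence, at each step the map sends $(\ind{\vect{u}}{n, i})$ towards $(\ind{\opt{\vect{u}}}{i})$, and bounding Hilbert distances reduces to estimating how much the update contracts these distances.

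The four properties of $\hmet$ I would use, all immediate from its logarithmic definition, are: (i) invariance under componentwise multiplication by any positive vector, $\hmet(\vect{c}\odot\vect{v}, \vect{c}\odot\vect{v}') = \hmet(\vect{v}, \vect{v}')$; (ii) invariance under componentwise inversion, $\hmet(1/\vect{v}, 1/\vect{v}') = \hmet(\vect{v}, \vect{v}')$; (iii) subadditivity for ratios, $\hmet(\vect{x}/\vect{y}, \vect{x}'/\vect{y}') \leq \hmet(\vect{x}, \vect{x}') + \hmet(\vect{y}, \vect{y}')$; and (iv) halving under square roots, $\hmet(\sqrt{\vect{v}}, \sqrt{\vect{v}'}) = \tfrac{1}{2}\hmet(\vect{v}, \vect{v}')$.

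With these in hand, each iteration $n \to n+1$ splits into two half-steps that exactly account for the gap between the exponents $D^{2n-1}$ and $D^{2n}$. \emph{First half-step.} Both $\ind{\vect{u}}{n+1, 2}$ and $\ind{\opt{\vect{u}}}{2}$ have the form $\sqrt{((\ind{\vect{K}}{1})^{\top}\cdot)\,/\,(\ind{\vect{K}}{2}\cdot)}$, so applying (iv), then (iii), then Birkhoff to each of $\ind{\vect{K}}{1}$ and $\ind{\vect{K}}{2}$ gives
\[ \hmet\bigl(\ind{\vect{u}}{n+1,2}, \ind{\opt{\vect{u}}}{2}\bigr) \leq \tfrac{1}{2}\Bigl(\lambda(\ind{\vect{K}}{1})\,\hmet\bigl(\ind{\vect{u}}{n,1}, \ind{\opt{\vect{u}}}{1}\bigr) + \lambda(\ind{\vect{K}}{2})\,\hmet\bigl(\ind{\vect{u}}{n,3}, \ind{\opt{\vect{u}}}{3}\bigr)\Bigr) \leq D^{2n+1}\, d, \]
using the inductive hypothesis (the base case $n=0$ using the definition of $d$ directly). \emph{Second half-step.} Both $\ind{\vect{u}}{n+1, 1}$ and $\ind{\opt{\vect{u}}}{1}$ have the form $\vect{a}/\bigl(\ind{\vect{K}}{1}(\vect{1}/\cdot)\bigr)$; using (i) to cancel $\vect{a}$, (ii) twice to undo the two reciprocals, and Birkhoff for $\ind{\vect{K}}{1}$, I obtain $\hmet(\ind{\vect{u}}{n+1,1}, \ind{\opt{\vect{u}}}{1}) \leq \lambda(\ind{\vect{K}}{1})\, \hmet(\ind{\vect{u}}{n+1,2}, \ind{\opt{\vect{u}}}{2}) \leq D^{2(n+1)}\, d$, and the estimate for $\ind{\vect{u}}{n+1, 3}$ is symmetric with $\ind{\vect{K}}{2}$ in place of $\ind{\vect{K}}{1}$.

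The main obstacle I expect is the first half-step: the boundary vector $\ind{\vect{u}}{n+1, 2}$ is driven \emph{simultaneously} by both $\ind{\vect{u}}{n, 1}$ and $\ind{\vect{u}}{n, 3}$, so the two contributions must be combined without losing the factor $\tfrac{1}{2}$ coming from the square root. Property (iv) is exactly what absorbs the summed $\lambda$-factors back into a single $D$, and this is precisely where the author's deliberate ordering ``update boundaries first, then edges'' (cf.\ the remark after \cref{def:Sinkhorn}) is essential: if the edges were updated first, the symmetry allowing a clean application of (iii) and (iv) would break, and the $D^{2n-1}$ rate at the boundary would not match up with the $D^{2n}$ rate at the edges.
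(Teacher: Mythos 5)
Your proposal is correct and follows essentially the same route as the paper's proof: establish via Cor.~\ref{cor:ch_optimal_matrix} that the optimal vectors are fixed points of the updates, bound the boundary step with the square-root halving (Lem.~\ref{lem:hmet_sqrt}), the ratio subadditivity (Lem.~\ref{lem:hmet_frac_triangle}) and Birkhoff (Lem.~\ref{lem:Birkhoff}), bound the edge steps using scaling/inversion invariance plus Birkhoff, and close by induction. The only cosmetic difference is that you keep the averaged bound $\tfrac12(\lambda(\ind{\vect{K}}{1})\hmet(\ind{\vect{u}}{n,1},\ind{\opt{\vect{u}}}{1})+\lambda(\ind{\vect{K}}{2})\hmet(\ind{\vect{u}}{n,3},\ind{\opt{\vect{u}}}{3}))$ where the paper passes to $D\cdot\max(\cdot,\cdot)$, which yields the same rates.
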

\begin{proofs}
By Cor.~\ref{cor:ch_optimal_matrix} and properties of the Hilbert metric, the following inequalities hold: 
\begin{align*}
    \hmet\Big(\ind{\vect{u}}{n+1, 1}, \ind{\opt{\vect{u}}}{1}\Big)&\leq D\cdot  \hmet\Big(\ind{{\vect{u}}}{n+1, 2}, \ind{\opt{\vect{u}}}{2}\Big),  &\hmet\Big(\ind{\vect{u}}{n+3, 1}, \ind{\opt{\vect{u}}}{3}\Big) \leq D\cdot  \hmet\Big(\ind{{\vect{u}}}{n+1, 2}, \ind{\opt{\vect{u}}}{2}\Big).
\end{align*}
Furthermore, we can also show that 
\begin{align*}
  \hmet\Big(\ind{\vect{u}}{n+1, 2}, \ind{\opt{\vect{u}}}{2}\Big) \leq D\cdot \max\Big(\hmet\big(\ind{\vect{u}}{n,1},  \ind{\opt{\vect{u}}}{1}\big), \hmet\big(\ind{\vect{u}}{n,3},  \ind{\opt{\vect{u}}}{3}\big)\Big). 
\end{align*}
Toghether with these three inequalities, we can prove the statement by the induction of $n$. 

See \S\ref{sec:proof_convergences} for the details. 
\end{proofs}

As a consequence, we can also show the convergence of the marginal distributions. 
\begin{proposition}
  \label{prop:dist_convergence_2}
  Let  $\big( (\ind{\vect{u}}{n, i})_{i\in \nset{3}} \big)_{n\in \nat}$ be a Sinkhorn iteration and 
 $\big((\ind{\vect{P}}{n, i})_{i\in \nset{2}}\big)_{n\in \nat}$ be the matrices induced by the iteration. 
  For each $n\in \nat\backslash \{0\}$, the following inequalities hold: 
  \begin{align*}
    &\hmet\Big(\vect{a},\ind{\vect{P}}{n, 1}\vect{1}_{m_2}\Big) = \hmet\Big(\vect{b},(\ind{\vect{P}}{n, 2})^{\top}\vect{1}_{m_2}\Big) = 0,\\
    &\hmet\Big((\ind{\vect{P}}{n, 1})^{\top}\vect{1}_{m_1}, \ind{\vect{P}}{n, 2}\vect{1}_{m_3}\Big) \leq 2d\big(1+D^2\big)D^{2n-1}.
  \end{align*}
  where $D \defeq \max \big(\lambda(\ind{\vect{K}}{1}), \lambda(\ind{\vect{K}}{2})\big)$ and $d \defeq \max\Big(\hmet\big(\ind{\vect{P}}{0, 1}, \ind{\opt{\vect{P}}}{1}\big), \hmet\big(\ind{\vect{P}}{0, 3}, \ind{\opt{\vect{P}}}{3}\big)\Big)$.
\end{proposition}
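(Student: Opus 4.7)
The plan is to first dispose of the two marginal equalities, which will turn out to hold as exact identities rather than asymptotic bounds, and then to reduce the consistency bound to the Hilbert distance between consecutive boundary iterates $\ind{\vect{u}}{n, 2}$ and $\ind{\vect{u}}{n+1, 2}$, which is then controlled through the triangle inequality by \cref{thm:global_convergence_2}.

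First I would unfold the definition of $\ind{\vect{P}}{n, 1}$ to obtain $\ind{\vect{P}}{n, 1}\vect{1}_{m_2} = \ind{\vect{u}}{n, 1} \odot \bigl(\ind{\vect{K}}{1}(\vect{1}_{m_2}/\ind{\vect{u}}{n, 2})\bigr)$; but by the Sinkhorn update $\ind{\vect{u}}{n, 1} = \vect{a}/\bigl(\ind{\vect{K}}{1}(\vect{1}_{m_2}/\ind{\vect{u}}{n, 2})\bigr)$ in \cref{def:Sinkhorn}, this is identically $\vect{a}$. A symmetric computation using the update for $\ind{\vect{u}}{n, 3}$ gives $(\ind{\vect{P}}{n, 2})^\top \vect{1}_{m_2} = \vect{b}$ exactly. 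Both Hilbert distances are then zero since $\hmet(\vect{v}, \vect{v}) = 0$ for any $\vect{v}\in\Rpos^m$.

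Second, for the consistency bound I would expand
\[
(\ind{\vect{P}}{n, 1})^\top \vect{1}_{m_1} = \frac{\vect{1}_{m_2}}{\ind{\vect{u}}{n, 2}} \odot (\ind{\vect{K}}{1})^\top \ind{\vect{u}}{n, 1}, \qquad \ind{\vect{P}}{n, 2}\vect{1}_{m_3} = \ind{\vect{u}}{n, 2} \odot \ind{\vect{K}}{2}\ind{\vect{u}}{n, 3},
\]
and take componentwise ratios. The crucial observation is that by the Sinkhorn update rule for the boundary variable, $(\ind{u}{n+1, 2}_j)^2 = ((\ind{\vect{K}}{1})^\top \ind{\vect{u}}{n, 1})_j / (\ind{\vect{K}}{2}\ind{\vect{u}}{n, 3})_j$, so the $j$th ratio collapses to $\bigl(\ind{u}{n+1, 2}_j / \ind{u}{n, 2}_j\bigr)^2$. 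Plugging into the definition of $\hmet$ and pulling the square out of both $\max$'s yields the clean identity
\[
\hmet\bigl((\ind{\vect{P}}{n, 1})^\top \vect{1}_{m_1},\; \ind{\vect{P}}{n, 2}\vect{1}_{m_3}\bigr) = 2\,\hmet\bigl(\ind{\vect{u}}{n+1, 2}, \ind{\vect{u}}{n, 2}\bigr).
\]

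Finally, I would apply the triangle inequality for $\hmet$ routed through the optimal boundary vector $\ind{\opt{\vect{u}}}{2}$ and invoke the bound $\hmet(\ind{\vect{u}}{k, 2}, \ind{\opt{\vect{u}}}{2}) \leq D^{2k-1} d$ from \cref{thm:global_convergence_2} for $k = n$ and $k = n+1$, obtaining $\hmet(\ind{\vect{u}}{n+1, 2}, \ind{\vect{u}}{n, 2}) \leq (D^{2n+1} + D^{2n-1}) d = d(1+D^2) D^{2n-1}$; doubling delivers the claimed estimate. There is no real obstacle here: the only conceptually nontrivial step is spotting that the ``square root'' form of the Sinkhorn update for the interior variable $\ind{\vect{u}}{n+1, 2}$ makes the ratio of the two marginals of interest equal to the \emph{squared} componentwise ratio of successive iterates at the shared boundary. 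It is precisely this identity that reduces the marginal-mismatch quantity back to the scope of the exponential convergence theorem, and it is the reason the same exponent $D^{2n-1}$ appears here.
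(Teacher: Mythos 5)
Your proposal is correct and follows essentially the same route as the paper: the exact marginal identities for $\vect{a}$ and $\vect{b}$, the reduction $\hmet\big((\ind{\vect{P}}{n,1})^{\top}\vect{1}_{m_1}, \ind{\vect{P}}{n,2}\vect{1}_{m_3}\big) = 2\,\hmet\big(\ind{\vect{u}}{n+1,2}, \ind{\vect{u}}{n,2}\big)$ via the square-root form of the boundary update, and the triangle inequality through $\ind{\opt{\vect{u}}}{2}$ combined with Thm.~\ref{thm:global_convergence_2} at steps $n$ and $n+1$. The paper packages the reduction step through Lem.~\ref{lem:hmet_constant} and Lem.~\ref{lem:hmet_sqrt} rather than a componentwise ratio computation, but the content is identical.
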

\begin{proofs}
The equalities are easy to prove, and the inequality follows from Thm.~\ref{thm:global_convergence_2}. See \S\ref{sec:proofDistConvergences} for the details. 
\end{proofs}
Importantly, the equalities $\big\|\ind{\vect{P}}{n, 1}\big\|_1 = \big\|\ind{\vect{P}}{n, 2}\big\|_1 = 1$ hold, thus $\vect{a} = \ind{\vect{P}}{n, 1}\vect{1}_{m_2}$ and $\vect{b} = (\ind{\vect{P}}{n, 2})^{\top}\vect{1}_{m_2}$ for any $n\in \nat\backslash\{0\}$. We  also note that the $L^1$ norm is bounded by the square root of the Hilbert metric as follows  
\begin{align*}
    \bigg\|(\ind{\vect{P}}{n, 1})^{\top}\vect{1}_{m_1} -  \ind{\vect{P}}{n, 2}\vect{1}_{m_3}\bigg\|_1 \leq \sqrt{2}\bigg(\hmet\Big((\ind{\vect{P}}{n, 1})^{\top}\vect{1}_{m_1}, \ind{\vect{P}}{n, 2}\vect{1}_{m_3}\Big)\bigg)^{1/2}
\end{align*}
by the Pinsker's inequality (e.g.~\cite{slivkins2019introduction}).

\subsection{The Case of $M \geq 2$}

For the general case $M \geq 2$, we present the following exponential convergence result; it include the case $M = 2$ with the worse convergence rate. 
\begin{theorem}[convergence]
  \label{thm:global_convergence}
  Let  $\big( (\ind{\vect{u}}{n, i})_{i\in \nset{M+1}} \big)_{n\in \nat}$ be a Sinkhorn iteration, and assume that given vectors $(\ind{\opt{\vect{u}}}{i})_{i\in \nset{M+1}}$ 
  become an optimal solution $\big(\ind{\opt{\vect{f}}}{i}\big)_{i\in \nset{M+1}}$ of the Lagrange dual problem defined in Prop.~\ref{prop:dual_regularized_seq_composed_optimal_transport} by translating $\ind{\opt{\vect{f}}}{i}\defeq \epsilon \log(\ind{\opt{\vect{u}}}{i})$ for each $i\in \nset{M+1}$. 
  For each $n\in \nat$, the following inequalities hold: 
  \begin{align*}
    \hmet\Big(\ind{\vect{u}}{n, i}, \ind{\opt{\vect{u}}}{i}\Big) \leq E^{n}\cdot e \quad \text{for any $i\in \nset{M+1}$}
  \end{align*}
  where $E \defeq \max_{i\in \nset{M}} \Big\{\lambda\big(\ind{\vect{K}}{i}\big)\Big\}$ and $e \defeq \max_{i\in \nset{M+1}}\Big\{\hmet\big(\ind{\vect{u}}{0, i}, \ind{\opt{\vect{u}}}{i}\big)\Big\}$. 
\end{theorem}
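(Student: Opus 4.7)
The plan is to treat the theorem as a contraction statement for the Sinkhorn map viewed on the cone $\Rpos^{m_1}\times\cdots\times\Rpos^{m_{M+1}}$, equipped componentwise with the Hilbert pseudometric, where the optimal vectors $(\ind{\opt{\vect{u}}}{i})_{i\in\nset{M+1}}$ are fixed points by Cor.~\ref{cor:ch_optimal_matrix}. The proof will proceed by induction on $n$, with the inductive hypothesis that $\hmet(\ind{\vect{u}}{n,i},\ind{\opt{\vect{u}}}{i})\leq E^n\cdot e$ for every $i\in\nset{M+1}$. The base case $n=0$ is immediate from the definition of $e$, so the entire work lies in the inductive step, and the fact that the optimal vectors satisfy exactly the same recursion as the iterates (together with the compatibility of $\hmet$ with the algebraic operations in \cref{def:Sinkhorn}) is what makes the contraction argument go through.

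First, I would collect the three elementary properties of the Hilbert pseudometric that the algorithm is specifically designed to exploit: (a) scale-invariance, i.e. $\hmet(c\vect{u},\vect{v})=\hmet(\vect{u},\vect{v})$ for $c>0$, and inversion-invariance $\hmet(1/\vect{u},1/\vect{v})=\hmet(\vect{u},\vect{v})$, both of which follow by inspection of Def.~\ref{def:hmet}; (b) the power identity $\hmet(\vect{u}^{1/2},\vect{v}^{1/2})=\tfrac12\hmet(\vect{u},\vect{v})$; and (c) the subadditivity $\hmet(\vect{u}\odot\vect{w},\vect{v}\odot\vect{z})\leq\hmet(\vect{u},\vect{v})+\hmet(\vect{w},\vect{z})$, which is immediate from $\log$ turning products into sums. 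Combined with Birkhoff's inequality (\cref{lem:Birkhoff}) these give the one-line estimate I need for every type of update.

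Next I would handle the middle indices $i\in\nset{2,M-1}$, and the special index $i=M$. Writing both $\ind{\vect{u}}{n+1,i}$ and $\ind{\opt{\vect{u}}}{i}$ with the identical formula from \cref{def:Sinkhorn} / Cor.~\ref{cor:ch_optimal_matrix}, property (b) contributes a factor $1/2$ in front, property (c) splits the Hilbert distance between the numerator and the denominator, and two applications of Lem.~\ref{lem:Birkhoff} produce factors $\lambda(\ind{\vect{K}}{i-1})$ and $\lambda(\ind{\vect{K}}{i})$, each bounded above by $E$. Together with property (a) to kill the extra $\vect{1}_{m_{i+1}}$ factor inside $\ind{\vect{K}}{i}(\vect{1}_{m_{i+1}}/\ind{\vect{u}}{n,i+1})$, this yields
\begin{equation*}
  \hmet\bigl(\ind{\vect{u}}{n+1,i},\ind{\opt{\vect{u}}}{i}\bigr)\;\leq\;\tfrac12\Bigl(E\,\hmet\bigl(\ind{\vect{u}}{n,i-1},\ind{\opt{\vect{u}}}{i-1}\bigr)+E\,\hmet\bigl(\ind{\vect{u}}{n,i+1},\ind{\opt{\vect{u}}}{i+1}\bigr)\Bigr)\leq E^{n+1}e,
\end{equation*}
using the inductive hypothesis on the right-hand side. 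The case $i=M$ is identical in structure, except that the denominator no longer needs the inversion step.

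Finally I would deal with the edge indices $i=1$ and $i=M+1$, which is the slightly more delicate point: these are updated using the \emph{just-computed} $\ind{\vect{u}}{n+1,2}$ and $\ind{\vect{u}}{n+1,M}$ rather than the iterates from step $n$. However, scale-invariance kills the $\vect{a}$ (resp.~$\vect{b}$) in front, one more application of Lem.~\ref{lem:Birkhoff} gives a factor $\lambda(\ind{\vect{K}}{1})\leq E$ (resp.~$\lambda(\ind{\vect{K}}{M})\leq E$), and the previous bullet already provides $\hmet(\ind{\vect{u}}{n+1,2},\ind{\opt{\vect{u}}}{2})\leq E^{n+1}e$. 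Since $E<1$, the resulting bound $E\cdot E^{n+1}e=E^{n+2}e$ is still dominated by $E^{n+1}e$, so the inductive hypothesis closes for every $i$. The step I expect to be most delicate is actually not any single algebraic manipulation, but rather getting the order of the two-part update exactly right in the estimate: because the edges use newly computed values, one must be careful to chain the bound for $i=1$ through $i=2$ (and $i=M+1$ through $i=M$), which is exactly why \cref{def:Sinkhorn} fixes the "middle first, edges second" ordering highlighted in the remark preceding it.
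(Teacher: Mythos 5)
Your proposal is correct and follows essentially the same route as the paper's proof: induction on $n$ using the scale/inversion invariance, the square-root halving identity, the quotient triangle inequality, and Birkhoff's contraction (Lem.~\ref{lem:Birkhoff}), first bounding the boundary updates $i\in\nset{2,M}$ by $E$ times the step-$n$ distances and then chaining the edge updates $i=1,M+1$ through the freshly computed $\ind{\vect{u}}{n+1,2}$ and $\ind{\vect{u}}{n+1,M}$, closing the induction via $E^{n+2}\leq E^{n+1}$. The only cosmetic difference is that you bound the middle terms by the average $\tfrac12(E\,d_{i-1}+E\,d_{i+1})$ while the paper uses $E\cdot\max(d_{i-1},d_{i+1})$, which is the same estimate.
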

\begin{proofs}
    Similar to the proof of Thm.~\ref{thm:global_convergence_2}, for each $i\in \nset{2, M}$ we obtain the following inequality:
    \begin{align*}
  \hmet\Big(\ind{\vect{u}}{n+1, i}, \ind{\opt{\vect{u}}}{i}\Big)\leq E\cdot \max\bigg(\hmet\Big(\ind{\vect{u}}{n, i-1}, \ind{\opt{\vect{u}}}{i-1}\Big),\, \hmet\Big(\ind{\vect{u}}{n, i+1}, \ind{\opt{\vect{u}}}{i+1}\Big)\bigg).
\end{align*}
We can also show that 
\begin{align*}
  \hmet\Big(\ind{\vect{u}}{n+1, 1}, \ind{\opt{\vect{u}}}{1}\Big) &\leq E\cdot \hmet\Big(\ind{\vect{u}}{n+1, 2}, \ind{\opt{\vect{u}}}{2}\Big)\\
  \hmet\Big(\ind{\vect{u}}{n+1, 1}, \ind{\opt{\vect{u}}}{1}\Big) &\leq E\cdot \hmet\Big(\ind{\vect{u}}{n+1, M}, \ind{\opt{\vect{u}}}{M}\Big). 
\end{align*}
Since $E^2 < E$, we can show the desired inequalities by the induction of $n$. See \S\ref{sec:proof_convergences} for the details.  
\end{proofs}

The convergence of the marginal distributions also holds. We note that $\big\|\ind{\vect{P}}{n, i}\big\|_1 \not = 1$ for each $i\in \nset{2, M-1}$ in general, while $\big\|\ind{\vect{P}}{n, 1}\big\|_1 = \big\|\ind{\vect{P}}{n, M}\big\|_1 = 1$ hold. In fact, $\big\|\ind{\vect{P}}{n, i}\big\|_1$ can be strictly smaller than $1$, which is a notable difference from the case of $M=2$. 
\begin{proposition}
  \label{prop:dist_convergence}
  Let  $\big( (\ind{\vect{u}}{n, i})_{i\in \nset{M+1}} \big)_{n\in \nat}$ be a Sinkhorn iteration and 
 $\big((\ind{\vect{P}}{n, i})_{i\in \nset{M}}\big)_{n\in \nat}$ be the matrices induced by the iteration. 
  For each $n\in \nat\backslash \{0\}$, the following inequalities hold: 
  \begin{align*}
    &\hmet\Big(\vect{a},\ind{\vect{P}}{n, 1}\vect{1}_{m_{2}}\Big) = \hmet\Big(\vect{b},(\ind{\vect{P}}{n, M})^{\top}\vect{1}_{m_{M}}\Big) = 0\\
    &\hmet\Big((\ind{\vect{P}}{n, i})^{\top}\vect{1}_{m_i}, \ind{\vect{P}}{n, i+1}\vect{1}_{m_{i+2}}\Big) \leq 2e(1+E)E^{n-1} \quad \text{for any $i\in \nset{1, M-1}$}.
  \end{align*}
  where $E \defeq \max_{i\in \nset{M}} \Big\{\lambda\big(\ind{\vect{K}}{i}\big)\Big\}$ and $e \defeq \max_{i\in \nset{M+1}}\Big\{\hmet\big(\ind{\vect{u}}{0, i}, \ind{\opt{\vect{u}}}{i}\big)\Big\}$. 
\end{proposition}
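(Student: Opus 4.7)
The proof splits into two pieces. For the equalities, I would substitute the closed-form Sinkhorn updates $\ind{\vect{u}}{n,1} = \vect{a}/\bigl(\ind{\vect{K}}{1}(\vect{1}_{m_{2}}/\ind{\vect{u}}{n,2})\bigr)$ and $\ind{\vect{u}}{n,M+1} = \vect{b}/\bigl((\ind{\vect{K}}{M})^{\top}\ind{\vect{u}}{n,M}\bigr)$ directly into the expressions $\ind{\vect{P}}{n,1}\vect{1}_{m_{2}} = \ind{\vect{u}}{n,1} \odot \ind{\vect{K}}{1}(\vect{1}_{m_{2}}/\ind{\vect{u}}{n,2})$ and $(\ind{\vect{P}}{n,M})^{\top}\vect{1}_{m_{M}} = \ind{\vect{u}}{n,M+1} \odot (\ind{\vect{K}}{M})^{\top}\ind{\vect{u}}{n,M}$. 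By construction the extra factors cancel, so the two marginals coincide with $\vect{a}$ and $\vect{b}$ exactly, which forces the Hilbert distances to vanish.

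For the inequality on consecutive marginals, the strategy is to pivot through the optimal plans $(\ind{\opt{\vect{P}}}{i})_{i\in\nset{M}}$ supplied by Cor.~\ref{cor:ch_optimal_matrix}: because these form a feasible tuple, they satisfy $(\ind{\opt{\vect{P}}}{i})^{\top}\vect{1}_{m_{i}} = \ind{\opt{\vect{P}}}{i+1}\vect{1}_{m_{i+2}}$, and the triangle inequality of $\hmet$ gives
\begin{align*}
\hmet\bigl((\ind{\vect{P}}{n,i})^{\top}\vect{1}_{m_{i}},\, \ind{\vect{P}}{n,i+1}\vect{1}_{m_{i+2}}\bigr)
&\leq \hmet\bigl((\ind{\vect{P}}{n,i})^{\top}\vect{1}_{m_{i}},\, (\ind{\opt{\vect{P}}}{i})^{\top}\vect{1}_{m_{i}}\bigr) \\
&\quad + \hmet\bigl(\ind{\opt{\vect{P}}}{i+1}\vect{1}_{m_{i+2}},\, \ind{\vect{P}}{n,i+1}\vect{1}_{m_{i+2}}\bigr),
\end{align*}
with the cross term vanishing by feasibility of the optimal. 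I would then unfold each remaining term using $(\ind{\vect{P}}{n,i})^{\top}\vect{1}_{m_{i}} = (\ind{\vect{K}}{i})^{\top}\ind{\vect{u}}{n,i}/\ind{\vect{u}}{n,i+1}$ and $\ind{\vect{P}}{n,i+1}\vect{1}_{m_{i+2}} = \ind{\vect{u}}{n,i+1}\odot\ind{\vect{K}}{i+1}(\vect{1}_{m_{i+2}}/\ind{\vect{u}}{n,i+2})$, and invoke three standard properties of the Hilbert pseudometric: $\hmet(\vect{x}\odot\vect{a},\vect{y}\odot\vect{b}) \leq \hmet(\vect{x},\vect{y}) + \hmet(\vect{a},\vect{b})$, $\hmet(\vect{1}/\vect{x},\vect{1}/\vect{y}) = \hmet(\vect{x},\vect{y})$, and Birkhoff's contraction from Lem.~\ref{lem:Birkhoff} bounding $\hmet(\ind{\vect{K}}{i}\vect{v},\ind{\vect{K}}{i}\vect{v}')$ by $\lambda(\ind{\vect{K}}{i})\hmet(\vect{v},\vect{v}') \leq E\hmet(\vect{v},\vect{v}')$. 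Each half is then bounded by an expression of the form $\lambda(\ind{\vect{K}}{\bullet})\hmet(\ind{\vect{u}}{n,\bullet},\ind{\opt{\vect{u}}}{\bullet}) + \hmet(\ind{\vect{u}}{n,\bullet'},\ind{\opt{\vect{u}}}{\bullet'})$, and Thm.~\ref{thm:global_convergence} controls every Hilbert distance by $E^{n}e$, giving each half $\leq (1+E)E^{n}e$.

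The only wrinkle is the rightmost index $i = M-1$, where $\ind{\vect{P}}{n,M} = \diag{\ind{\vect{u}}{n,M}}\ind{\vect{K}}{M}\diag{\ind{\vect{u}}{n,M+1}}$ lacks the componentwise inverse on its rightmost factor, yielding $\ind{\vect{P}}{n,M}\vect{1}_{m_{M+1}} = \ind{\vect{u}}{n,M}\odot\ind{\vect{K}}{M}\ind{\vect{u}}{n,M+1}$; since $\hmet$ is invariant under componentwise inversion, the same calculation carries through verbatim after replacing $\vect{1}/\ind{\vect{u}}{n,M+1}$ by $\ind{\vect{u}}{n,M+1}$. Summing the two halves gives at most $2(1+E)E^{n}e$, and since $E < 1$ and $n \geq 1$ imply $E^{n} \leq E^{n-1}$, the announced bound $2e(1+E)E^{n-1}$ follows. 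The main difficulty is purely the bookkeeping of the Hadamard/inverse/Birkhoff manipulations in the product-of-diagonals form of $\ind{\vect{P}}{n,i}$, rather than any new inequality: all heavy machinery (Cor.~\ref{cor:ch_optimal_matrix}, Lem.~\ref{lem:Birkhoff}, and Thm.~\ref{thm:global_convergence}) is already available.
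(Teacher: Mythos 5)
Your argument is correct, but for the inequality it takes a genuinely different route from the paper. The paper first establishes an exact identity: unfolding $(\ind{\vect{P}}{n,i})^{\top}\vect{1}_{m_i}$ and $\ind{\vect{P}}{n,i+1}\vect{1}_{m_{i+2}}$ and using the constant-invariance and square-root lemmas for $\hmet$, it shows that the discrepancy of consecutive marginals equals $2\,\hmet\big(\ind{\vect{u}}{n+1,i+1},\ind{\vect{u}}{n,i+1}\big)$, i.e.\ it is exactly (twice) the one-step displacement of the middle scaling vector under the next Sinkhorn update; it then applies the triangle inequality through $\ind{\opt{\vect{u}}}{i+1}$ and invokes Thm.~\ref{thm:global_convergence} twice to get $2e(1+E)E^{n}\leq 2e(1+E)E^{n-1}$. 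You instead pivot through the common marginal of the optimal plans (using their primal feasibility, which indeed follows from Cor.~\ref{cor:ch_optimal_matrix} and the KKT characterization), and bound each of the two resulting terms by Hadamard-subadditivity of $\hmet$ together with Birkhoff's contraction and four applications of Thm.~\ref{thm:global_convergence}; this also yields $2(1+E)E^{n}e$ and hence the claimed bound. What the paper's identity buys is a sharper, reusable structural fact (the same "next-iterate" quantity reappears in its stopping-criterion analysis and in Prop.~\ref{prop:dist_convergence_2}); what your decomposition buys is modularity: you never need to recognize that $\ind{\vect{u}}{n+1,i+1}$ appears, only feasibility of the optimal plans and standard Hilbert-metric calculus. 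One small point worth making explicit in your write-up: when bounding $\hmet\big((\ind{\vect{K}}{i})^{\top}\ind{\vect{u}}{n,i},(\ind{\vect{K}}{i})^{\top}\ind{\opt{\vect{u}}}{i}\big)$ you use $\lambda\big((\ind{\vect{K}}{i})^{\top}\big)=\lambda\big(\ind{\vect{K}}{i}\big)$; this is legitimate because $\gamma$ in Lem.~\ref{lem:Birkhoff} is invariant under transposition (the paper relies on the same fact when treating $\ind{\vect{u}}{n+1,3}$ in Thm.~\ref{thm:global_convergence_2}), but it deserves a sentence.
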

The proof is similar to that of Prop.~\ref{prop:dist_convergence_2}: see \S\ref{sec:proofDistConvergences} for the details. 


\section{Complexity when $M = 2$}
\label{sec:time_complexity}
In this section, we fix $M = 2$ and present a novel worst-case complexity analysis of the Sinkhorn algorithm. All proofs of the statements presented in this section is in $\S$\ref{sec:proofTimeComp}.  For the case $M=2$, we recall the Sinkhorn iteration is given by 
\begin{align*}
  \ind{\vect{u}}{n+1, 2} &\defeq \Bigg(\frac{(\ind{\vect{K}}{1})^{\top} \ind{\vect{u}}{n,1}}{\ind{\vect{K}}{2}\ind{\vect{u}}{n,3}}\Bigg)^{1/2},\\
  \ind{\vect{u}}{n+1, 1} &\defeq \frac{\vect{a}}{\ind{\vect{K}}{1}\frac{\vect{1}_{m_{2}}}{\ind{\vect{u}}{n+1,2}}}, &\ind{\vect{u}}{n+1, 3} \defeq \frac{\vect{b}}{(\ind{\vect{K}}{2})^{\top}\ind{\vect{u}}{n+1,2}}.
\end{align*}
In this section, we fix the initial vectors $\big(\ind{\vect{u}}{0, i}\big)_{i\in \nset{3}}$ that are defined by  $\ind{\vect{u}}{0, 1}\defeq \vect{1}_{m_1}/\big\|\ind{\vect{K}}{1}\big\|_1$, $\ind{\vect{u}}{0, 2}\defeq \vect{1}_{m_2}$,  and $\ind{\vect{u}}{0, 3}\defeq \vect{1}_{m_3}/\big\|\ind{\vect{K}}{2}\big\|_1$.
We write $\laglange\Big(\big(\ind{\vect{u}}{n, i}\big)_{i\in \nset{3}}\Big)$ for the Lagrangian that is defined in Prop.~\ref{prop:dual_regularized_seq_composed_optimal_transport} with substituting $\ind{\vect{f}}{i}\defeq \epsilon \log\big(\ind{\vect{u}}{n, i}\big) $ for any $i\in \nset{3}$.

We first recall that the matrices $\ind{\vect{P}}{n, 1}, \ind{\vect{P}}{n, 2}$  always satisfy the constraints of the edges with $\vect{a}$ and $\vect{b}$. This is an important property that we rely on the most of proofs in this section. 
\begin{lemma}
  \label{lem:transportmat}
  For any $n\in \nat$, the matrices $\ind{\vect{P}}{n, 1}, \ind{\vect{P}}{n, 2}$ satisfy the following equalities 
  \begin{align*}
   \Big\|\ind{\vect{P}}{n, 1}\Big\|_1 \defeq \sum^{m_1}_{j=1}\sum^{m_{2}}_{k= 1}\ind{P}{n, 1}_{jk} = 1, &&   \Big\|\ind{\vect{P}}{n, 2}\Big\|_1 \defeq\sum^{m_2}_{j=1}\sum^{m_{3}}_{k= 1}\ind{P}{n, 2}_{jk} = 1.
  \end{align*} 
  In particular, the equalities $\vect{a} = \ind{\vect{P}}{1}\vect{1}_{m_2}$ and $\vect{b} = {\ind{\vect{P}}{2}}^\top \vect{1}_{m_2}$ hold when $n > 0$. 
\end{lemma}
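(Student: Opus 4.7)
The plan is to verify both assertions by direct substitution into the Sinkhorn updates, splitting into the cases $n>0$ (where the update rules for $\ind{\vect{u}}{n,1}$ and $\ind{\vect{u}}{n,3}$ do the work) and $n=0$ (where the specific choice of initial vectors is needed).

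For $n>0$, I would first unwind the definitions to obtain
\begin{align*}
\ind{\vect{P}}{n, 1}\vect{1}_{m_2} = \diag{\ind{\vect{u}}{n, 1}}\ind{\vect{K}}{1}\frac{\vect{1}_{m_{2}}}{\ind{\vect{u}}{n, 2}}, \quad (\ind{\vect{P}}{n, 2})^{\top}\vect{1}_{m_2} = \diag{\ind{\vect{u}}{n, 3}}(\ind{\vect{K}}{2})^{\top}\ind{\vect{u}}{n, 2}.
\end{align*}
Rewriting the update rule $\ind{\vect{u}}{n, 1} = \vect{a}/\bigl(\ind{\vect{K}}{1}(\vect{1}_{m_2}/\ind{\vect{u}}{n,2})\bigr)$ component-wise as $\ind{\vect{u}}{n,1}\odot \bigl(\ind{\vect{K}}{1}(\vect{1}_{m_2}/\ind{\vect{u}}{n,2})\bigr)=\vect{a}$ shows that the first expression above equals $\vect{a}$; an analogous manipulation of the update rule for $\ind{\vect{u}}{n,3}$ shows that the second equals $\vect{b}$. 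The $L^{1}$-norm identities then follow because $\|\ind{\vect{P}}{n,i}\|_1 = \vect{1}^{\top}_{m_i}\ind{\vect{P}}{n,i}\vect{1}_{m_{i+1}}$ (as $\ind{\vect{P}}{n,i}\geq 0$), together with $\vect{a}\in \setdist{m_1}$ and $\vect{b}\in \setdist{m_3}$.

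For $n=0$, the chosen initialization gives $\ind{\vect{P}}{0, 1} = \ind{\vect{K}}{1}/\|\ind{\vect{K}}{1}\|_1$ and $\ind{\vect{P}}{0, 2} = \ind{\vect{K}}{2}/\|\ind{\vect{K}}{2}\|_1$ after cancelling the all-ones vectors in the $\diag{\cdot}$ factors, and both have $L^{1}$-norm equal to $1$ by construction. Hence the norm equalities hold for every $n\in \nat$, while the marginal identities $\vect{a} = \ind{\vect{P}}{n,1}\vect{1}_{m_2}$ and $\vect{b} = (\ind{\vect{P}}{n,2})^{\top}\vect{1}_{m_2}$ are already obtained in the $n>0$ argument.

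There is no real obstacle here: the statement is essentially a bookkeeping check that our Sinkhorn update has been designed precisely so that the edge marginal constraints are enforced exactly at every iteration (this is in contrast with the interior boundary constraint, which is only reached in the limit). The only point to be slightly careful about is that the $n=0$ case must be handled separately, since the formulas for $\ind{\vect{u}}{n+1,1}$ and $\ind{\vect{u}}{n+1,3}$ are what cause the marginal to match $\vect{a}$ and $\vect{b}$, and those formulas have not yet been applied when $n=0$; this is exactly why the initialization of $\ind{\vect{u}}{0,1}$ and $\ind{\vect{u}}{0,3}$ was normalized by $\|\ind{\vect{K}}{1}\|_1$ and $\|\ind{\vect{K}}{2}\|_1$.
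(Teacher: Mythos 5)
Your proposal is correct and follows essentially the same route as the paper's proof: for $n>0$ the update formulas for $\ind{\vect{u}}{n,1}$ and $\ind{\vect{u}}{n,3}$ force the marginals of $\ind{\vect{P}}{n,1}$ and $\ind{\vect{P}}{n,2}$ to equal $\vect{a}$ and $\vect{b}$ (hence unit $L^1$ norm), and for $n=0$ the normalized initialization gives $\ind{\vect{P}}{0,i}=\ind{\vect{K}}{i}/\|\ind{\vect{K}}{i}\|_1$. Your explicit remark that the $n=0$ case needs the initialization because the edge updates have not yet been applied is exactly the distinction the paper's proof also makes.
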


We then show the next key lemma that characterize the difference $\laglange\big((\ind{\vect{u}}{n+1, i})_{i\in \nset{3}}\big) - \laglange\big((\ind{\vect{u}}{n, i})_{i\in \nset{3}}\big)$ in terms of the Kullback-Leibler divergence. 

\begin{lemma}
  \label{lem:diffLagrange}
  For each $n\in \nat$, the following equality holds: 
  \begin{align*}
    &\laglange\big((\ind{\vect{u}}{n+1, i})_{i\in \nset{3}}\big) - \laglange\big((\ind{\vect{u}}{n, i})_{i\in \nset{3}}\big)\\
     = &\epsilon\Bigg( \kld[\bigg]{\vect{a}}{\ind{\vect{P}}{n, 1}\Big( \frac{\ind{\vect{P}}{n,2}\vect{1}_{m_3}}{\vect{1}_{m_1}^{\top}\ind{\vect{P}}{n, 1}}\Big)^{1/2}} +  \kld[\bigg]{\vect{b}}{(\ind{\vect{P}}{n, 2})^{\top}\Big( \frac{\vect{1}_{m_1}^{\top}\ind{\vect{P}}{n, 1}}{\ind{\vect{P}}{n,2}\vect{1}_{m_3}}\Big)^{1/2}}\Bigg),
  \end{align*}
  where the Kullback-Leibler divergence $\kld{\vect{c}}{\vect{d}}$ is defined by $\kld{\vect{c}}{\vect{d}}\defeq \vect{c}^{\top}\log\big(\frac{\vect{c}}{\vect{d}}\big)$. 

  Furthermore, the inequality $\laglange\big((\ind{\vect{u}}{n+1, i})_{i\in \nset{3}}\big) - \laglange\big((\ind{\vect{u}}{n, i})_{i\in \nset{3}}\big) \geq 0$ holds. 
\end{lemma}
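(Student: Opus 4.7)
The plan is to unfold the Lagrangian at the iterates so that $\laglange_{n+1}-\laglange_n$ reduces to a pair of log-ratios, recognize each log-ratio as a KL divergence by chasing the Sinkhorn update for $\ind{\vect{u}}{n+1, 2}$ through the Gibbs kernel, and finally bound each KL below via the log-sum and Cauchy--Schwarz inequalities.

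First, I would substitute $\ind{\vect{f}}{i}=\epsilon\log\ind{\vect{u}}{n,i}$ and $\ind{K}{i}_{jk}=\exp(-\ind{C}{i}_{jk}/\epsilon)$ into the dual from Prop.~\ref{prop:dual_regularized_seq_composed_optimal_transport}. Each exponential term becomes exactly an entry of an induced plan, namely $\exp((\ind{f}{n,1}_j-\ind{f}{n,2}_k-\ind{C}{1}_{jk})/\epsilon)=\ind{P}{n,1}_{jk}$ and $\exp((\ind{f}{n,2}_j+\ind{f}{n,3}_k-\ind{C}{2}_{jk})/\epsilon)=\ind{P}{n,2}_{jk}$. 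By Lem.~\ref{lem:transportmat}, $\|\ind{\vect{P}}{n,1}\|_1=\|\ind{\vect{P}}{n,2}\|_1=1$ for every $n$, so the Lagrangian collapses to
\begin{equation*}
\laglange\big((\ind{\vect{u}}{n,i})_{i\in\nset{3}}\big)=\epsilon\,\vect{a}^{\top}\log\ind{\vect{u}}{n,1}+\epsilon\,\vect{b}^{\top}\log\ind{\vect{u}}{n,3}+2\epsilon,
\end{equation*}
and the constant $2\epsilon$ cancels, leaving only $\epsilon\,\vect{a}^{\top}\log(\ind{\vect{u}}{n+1,1}/\ind{\vect{u}}{n,1})+\epsilon\,\vect{b}^{\top}\log(\ind{\vect{u}}{n+1,3}/\ind{\vect{u}}{n,3})$ to identify with the target KL sum.

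Next I would convert each log-ratio into a KL divergence. From the update of $\ind{\vect{u}}{n+1,1}$ together with the identity $\ind{u}{n,1}_j\ind{K}{1}_{jk}=\ind{u}{n,2}_k\ind{P}{n,1}_{jk}$, one obtains
\begin{equation*}
\log\frac{\ind{u}{n+1,1}_j}{\ind{u}{n,1}_j}=\log a_j-\log\!\Big(\sum_k\ind{P}{n,1}_{jk}\,\frac{\ind{u}{n,2}_k}{\ind{u}{n+1,2}_k}\Big).
\end{equation*}
The crux is that the Sinkhorn update of $\ind{\vect{u}}{n+1,2}$, combined with the marginal identities $(\vect{1}_{m_1}^{\top}\ind{\vect{P}}{n,1})_k=((\ind{\vect{K}}{1})^{\top}\ind{\vect{u}}{n,1})_k/\ind{u}{n,2}_k$ and $(\ind{\vect{P}}{n,2}\vect{1}_{m_3})_k=\ind{u}{n,2}_k(\ind{\vect{K}}{2}\ind{\vect{u}}{n,3})_k$, rewrites the reweighting factor as the square root of the ratio of adjacent marginals:
\begin{equation*}
\frac{\ind{u}{n,2}_k}{\ind{u}{n+1,2}_k}=\Big(\frac{(\ind{\vect{P}}{n,2}\vect{1}_{m_3})_k}{(\vect{1}_{m_1}^{\top}\ind{\vect{P}}{n,1})_k}\Big)^{1/2}.
\end{equation*}
Multiplying by $a_j$ and summing over $j$ then produces exactly $\kld{\vect{a}}{\ind{\vect{P}}{n,1}(\ind{\vect{P}}{n,2}\vect{1}_{m_3}/\vect{1}_{m_1}^{\top}\ind{\vect{P}}{n,1})^{1/2}}$; the symmetric computation for $\ind{\vect{u}}{n+1,3}$ (note the reciprocal reweighting) gives the second KL term, completing the equality.

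Finally, for the nonnegativity claim I would bound the total mass of each target vector by Cauchy--Schwarz:
\begin{equation*}
\Big\|\ind{\vect{P}}{n,1}(\ind{\vect{P}}{n,2}\vect{1}_{m_3}/\vect{1}_{m_1}^{\top}\ind{\vect{P}}{n,1})^{1/2}\Big\|_1=\sum_k\sqrt{(\vect{1}_{m_1}^{\top}\ind{\vect{P}}{n,1})_k\,(\ind{\vect{P}}{n,2}\vect{1}_{m_3})_k}\leq\sqrt{\|\ind{\vect{P}}{n,1}\|_1\,\|\ind{\vect{P}}{n,2}\|_1}=1,
\end{equation*}
and analogously for the second target. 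Since $\vect{a},\vect{b}$ are probability vectors, the log-sum inequality then yields $\kld{\vect{a}}{\cdot}\geq(\sum_j a_j)\log((\sum_j a_j)/\|\cdot\|_1)\geq 0$ and similarly for $\vect{b}$. The main obstacle is the algebraic bookkeeping in the middle step: one must thread $\ind{\vect{u}}{n,2}/\ind{\vect{u}}{n+1,2}$ through the Gibbs kernel to reveal the geometric-mean reweighting, which is precisely what the particular square-root form of the $\ind{\vect{u}}{n+1,2}$-update was engineered to produce.
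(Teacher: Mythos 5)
Your proposal is correct and follows essentially the same route as the paper: cancel the mass terms via Lem.~\ref{lem:transportmat}, identify $\ind{\vect{u}}{n,2}/\ind{\vect{u}}{n+1,2}$ with the square-root ratio of adjacent marginals to turn each log-ratio into the stated KL term, and prove nonnegativity by showing the comparison vectors are subdistributions via a Cauchy--Schwarz/H\"older bound. The only cosmetic differences are that you argue componentwise where the paper inserts diagonal matrices, and you invoke the log-sum inequality where the paper cites Gibbs' inequality.
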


Our proof strategy is focusing on the distances betweeen (i) $\vect{a}$ and $\ind{\vect{P}}{n, 1}\Big( \frac{\ind{\vect{P}}{n,2}\vect{1}_{m_3}}{\vect{1}_{m_1}^{\top}\ind{\vect{P}}{n, 1}}\Big)^{1/2}$; and (ii) $\vect{b}$ and $(\ind{\vect{P}}{n, 2})^{\top}\Big( \frac{\vect{1}_{m_1}^{\top}\ind{\vect{P}}{n, 1}}{\ind{\vect{P}}{n,2}\vect{1}_{m_3}}\Big)^{1/2}$. In fact, there is the following intuition: for each iteration step $n$, we further update only 
$\ind{\vect{u}}{n+1, 2}$, and construct matrices. These matrices satisfy the constraint on the boundary, while they do not satisfy the constraint on the edges with $\vect{a}$ and $\vect{b}$: their marginal (sub)distributions are precisely $\ind{\vect{P}}{n, 1}\Big( \frac{\ind{\vect{P}}{n,2}\vect{1}_{m_3}}{\vect{1}_{m_1}^{\top}\ind{\vect{P}}{n, 1}}\Big)^{1/2}$ and $(\ind{\vect{P}}{n, 2})^{\top}\Big( \frac{\vect{1}_{m_1}^{\top}\ind{\vect{P}}{n, 1}}{\ind{\vect{P}}{n,2}\vect{1}_{m_3}}\Big)^{1/2}$, respectively. We formulate this fact as follows.

\begin{proposition}
\label{prop:chHalf}
For each $n\in \nat$, let $\ind{\vect{P}'}{n, 1}$ and $\ind{\vect{P}'}{n, 2}$ be the matrices given by 
\begin{align*}
    \ind{\vect{P}'}{n, 1} &\defeq \diag{\ind{\vect{u}}{n, 1}}\ind{\vect{K}}{1}\diag{\frac{\vect{1}_{m_{2}}}{\ind{\vect{u}}{n+1, 2}}},\\
    \ind{\vect{P}'}{n, 2} &\defeq \diag{\ind{\vect{u}}{n+1, 2}}\ind{\vect{K}}{2}\diag{\ind{\vect{u}}{n, 3}}, 
\end{align*}
The following properties hold: 
\begin{itemize}
    \item they are consistent on the boundary, i.e.,  ${\ind{\vect{P}'}{n , 1}}^{\top}\vect{1}_{m_1} = {\ind{\vect{P}'}{n, 2}}\vect{1}_{m_3}$,
    \item on the edges, the following equalities hold: 
    \begin{align*}
        \ind{\vect{P}'}{n, 1}\vect{1}_{m_2} = \ind{\vect{P}}{n, 1}\bigg( \frac{\ind{\vect{P}}{n,2}\vect{1}_{m_3}}{\vect{1}_{m_1}^{\top}\ind{\vect{P}}{n, 1}}\bigg)^{1/2},\quad {\ind{\vect{P}'}{n, 2}}^{\top}\vect{1}_{m_2} = (\ind{\vect{P}}{n, 2})^{\top}\bigg( \frac{\vect{1}_{m_1}^{\top}\ind{\vect{P}}{n, 1}}{\ind{\vect{P}}{n,2}\vect{1}_{m_3}}\bigg)^{1/2}.
    \end{align*}
\end{itemize}
We note that $\Big\|\ind{\vect{P}'}{n, 1}\Big\|_1\leq 1$ and $\Big\|\ind{\vect{P}'}{n, 2}\Big\|_1\leq 1$, and they are strictly smaller than $1$ in general. 

\end{proposition}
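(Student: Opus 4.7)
The plan is to verify each of the three claims by direct computation from the definitions of $\ind{\vect{P}}{n, i}$, $\ind{\vect{P}'}{n, i}$ together with the Sinkhorn update for $\ind{\vect{u}}{n+1, 2}$. The single key identity from which everything else follows is
\begin{align*}
\frac{\ind{\vect{u}}{n, 2}}{\ind{\vect{u}}{n+1, 2}} = \bigg(\frac{\ind{\vect{P}}{n, 2}\vect{1}_{m_3}}{(\ind{\vect{P}}{n, 1})^{\top}\vect{1}_{m_1}}\bigg)^{1/2},
\end{align*}
which I would derive by substituting $\ind{\vect{P}}{n, 2}\vect{1}_{m_3} = \ind{\vect{u}}{n, 2}\odot \ind{\vect{K}}{2}\ind{\vect{u}}{n, 3}$ and $(\ind{\vect{P}}{n, 1})^{\top}\vect{1}_{m_1} = (\ind{\vect{K}}{1})^{\top}\ind{\vect{u}}{n, 1}/\ind{\vect{u}}{n, 2}$ into the square-root formula for $\ind{\vect{u}}{n+1, 2}$.

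For the boundary-consistency claim, I compute both marginals directly from the definitions of $\ind{\vect{P}'}{n, 1}$ and $\ind{\vect{P}'}{n, 2}$: they equal $(\ind{\vect{K}}{1})^{\top}\ind{\vect{u}}{n, 1}/\ind{\vect{u}}{n+1, 2}$ and $\ind{\vect{u}}{n+1, 2}\odot \ind{\vect{K}}{2}\ind{\vect{u}}{n, 3}$, respectively. Setting them equal and solving for $\ind{\vect{u}}{n+1, 2}$ reproduces exactly the Sinkhorn update rule, so the equality holds by construction. For the edge identities, I expand $\ind{\vect{P}'}{n, 1}\vect{1}_{m_2} = \ind{\vect{u}}{n, 1}\odot \ind{\vect{K}}{1}(\vect{1}_{m_2}/\ind{\vect{u}}{n+1, 2})$ and insert $\diag{\ind{\vect{u}}{n, 2}}\diag{\vect{1}_{m_2}/\ind{\vect{u}}{n, 2}}$ to rewrite it as the matrix-vector product $\ind{\vect{P}}{n, 1}(\ind{\vect{u}}{n, 2}/\ind{\vect{u}}{n+1, 2})$; the key identity then converts the vector factor into $(\ind{\vect{P}}{n, 2}\vect{1}_{m_3}/(\ind{\vect{P}}{n, 1})^{\top}\vect{1}_{m_1})^{1/2}$, which is exactly the stated expression. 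The computation for $(\ind{\vect{P}'}{n, 2})^{\top}\vect{1}_{m_2}$ is symmetric, using the reciprocal form of the key identity.

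For the mass bounds, I would invoke Cauchy--Schwarz. Abbreviating $\vect{p} \defeq \ind{\vect{P}}{n, 2}\vect{1}_{m_3}$ and $\vect{q} \defeq (\ind{\vect{P}}{n, 1})^{\top}\vect{1}_{m_1}$, both of which lie in $\setdist{m_2}$ by Lem.~\ref{lem:transportmat}, summing the edge identities gives
\begin{align*}
\Big\|\ind{\vect{P}'}{n, 1}\Big\|_1 = \Big\|\ind{\vect{P}'}{n, 2}\Big\|_1 = \sum_{j=1}^{m_2}\sqrt{p_j q_j} \leq \Big(\sum_{j=1}^{m_2} p_j\Big)^{1/2}\Big(\sum_{j=1}^{m_2} q_j\Big)^{1/2} = 1,
\end{align*}
with equality iff $\vect{p}$ and $\vect{q}$ are proportional, hence equal since both sum to $1$. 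This also explains why the inequalities are strict in general: equality fails precisely when the boundary constraint is not already satisfied by $(\ind{\vect{P}}{n, 1}, \ind{\vect{P}}{n, 2})$ before the update.

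There is no substantial technical obstacle; the proposition reduces to algebraic bookkeeping once the key identity is in hand. The mild conceptual point worth noting is that the exponent $1/2$ appearing in the Sinkhorn update for $\ind{\vect{u}}{n+1, 2}$ is exactly what distributes the boundary correction symmetrically between the two plans, which in turn is what makes the Cauchy--Schwarz bound sharp enough to guarantee $\leq 1$ rather than a weaker estimate.
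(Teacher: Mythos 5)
Your proof is correct and follows essentially the same route as the paper: the insertion of $\diag{\ind{\vect{u}}{n,2}}\diag{\vect{1}_{m_2}/\ind{\vect{u}}{n,2}}$ together with the square-root update formula (your ``key identity'' $\ind{\vect{u}}{n,2}/\ind{\vect{u}}{n+1,2}=\big(\ind{\vect{P}}{n,2}\vect{1}_{m_3}/(\ind{\vect{P}}{n,1})^{\top}\vect{1}_{m_1}\big)^{1/2}$ is just a compact repackaging of the paper's computation) yields both bullet points, and your Cauchy--Schwarz bound on $\sum_k\sqrt{p_kq_k}$ is the same H\"older-type estimate the paper uses in the proof of Lem.~\ref{lem:diffLagrange}, with the appeal to Lem.~\ref{lem:transportmat} justified since the proposition lives in the setting of \S\ref{sec:time_complexity} where the initial vectors are fixed.
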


From the characterization shown in Prop.~\ref{prop:chHalf}, the distance $\laglange\big((\ind{\vect{u}}{n+1, i})_{i\in \nset{3}}\big) - \laglange\big((\ind{\vect{u}}{n, i})_{i\in \nset{3}}\big)$ can be represented by the errors between (i) $\vect{a}$ and $\ind{\vect{P}'}{n, 1}\vect{1}_{m_2}$; and (ii) $\vect{b}$ and ${\ind{\vect{P}'}{n, 2}}^{\top}\vect{1}_{m_2}$.

We move on to the evaluation of how far the initial value $\laglange\big((\ind{\vect{u}}{0, i})_{i\in \nset{3}}\big)$ is from the optimal value $\laglange\big((\ind{\opt{\vect{u}}}{i})_{i\in \nset{3}}\big)$. 
\begin{lemma}
  \label{lem:maxdistanceLagrange}
  Let $K\defeq \min_{jl} \sum_k \ind{K}{1}_{jk}\ind{K}{2}_{kl}$. The following inequality holds: 
  \begin{align*}
    \laglange\big((\ind{\opt{\vect{u}}}{i})_{i\in \nset{3}}\big) - \laglange\big((\ind{\vect{u}}{0, i})_{i\in \nset{3}}\big)\leq \epsilon \log\Bigg(\frac{\big\|\ind{\vect{K}}{1}\big\|_1\big\|\ind{\vect{K}}{2}\big\|_1}{K}\Bigg)
  \end{align*}
where the vectors $(\ind{\opt{\vect{u}}}{i})_{i\in \nset{3}}$ are optimal (defined in Thm.~\ref{thm:global_convergence_2}). 
\end{lemma}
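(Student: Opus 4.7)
My plan is to compute both $\laglange\bigl((\ind{\opt{\vect{u}}}{i})_{i\in \nset{3}}\bigr)$ and $\laglange\bigl((\ind{\vect{u}}{0, i})_{i\in \nset{3}}\bigr)$ in closed form, so that their difference reduces to a pointwise bound on the product $\ind{\opt{u}}{1}_j \ind{\opt{u}}{3}_l$. The desired inequality then follows by eliminating the interior multiplier $\ind{\opt{\vect{u}}}{2}$ through a Cauchy--Schwarz estimate that naturally produces the quantity $K$.

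First, I write $\laglange$ in terms of the vectors $\ind{\vect{u}}{i}$ by the substitution $\ind{\vect{f}}{i} = \epsilon \log \ind{\vect{u}}{i}$ in Prop.~\ref{prop:dual_regularized_seq_composed_optimal_transport}. The exponential terms then collapse to the total masses $\|\ind{\vect{P}}{1}\|_1$ and $\|\ind{\vect{P}}{2}\|_1$ of the induced matrices. At $\vect{u} = \ind{\opt{\vect{u}}}{}$ both masses equal $1$ by Cor.~\ref{cor:ch_optimal_matrix} together with feasibility of the optimum; at $\vect{u} = \vect{u}^{0}$ a direct substitution gives $\ind{\vect{P}}{0, i} = \ind{\vect{K}}{i}/\|\ind{\vect{K}}{i}\|_1$, again of unit mass. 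Consequently the constants coming from these mass terms coincide at the two evaluation points and cancel on subtraction, leaving
\begin{align*}
\laglange\bigl((\ind{\opt{\vect{u}}}{i})\bigr) - \laglange\bigl((\ind{\vect{u}}{0, i})\bigr) = \epsilon \log\bigl(\|\ind{\vect{K}}{1}\|_1\|\ind{\vect{K}}{2}\|_1\bigr) + \epsilon \sum_{j,\,l} a_j b_l \log\bigl(\ind{\opt{u}}{1}_j\,\ind{\opt{u}}{3}_l\bigr),
\end{align*}
where I used $\sum_j a_j = \sum_l b_l = 1$ to rewrite the two single sums as a joint one. It therefore suffices to prove $\ind{\opt{u}}{1}_j \ind{\opt{u}}{3}_l \leq 1/K$ for every $j,l$.

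From the edge fixed-point relations at optimality,
\begin{align*}
\ind{\opt{u}}{1}_j = \frac{a_j}{\sum_k \ind{K}{1}_{jk}/\ind{\opt{u}}{2}_k},\qquad \ind{\opt{u}}{3}_l = \frac{b_l}{\sum_k \ind{K}{2}_{kl}\,\ind{\opt{u}}{2}_k}.
\end{align*}
The core step---and what I expect to be the main obstacle---is the elimination of the interior variable $\ind{\opt{\vect{u}}}{2}$ via the factorization $\sqrt{\ind{K}{1}_{jk}\ind{K}{2}_{kl}} = \sqrt{\ind{K}{1}_{jk}/\ind{\opt{u}}{2}_k}\cdot\sqrt{\ind{K}{2}_{kl}\,\ind{\opt{u}}{2}_k}$, to which Cauchy--Schwarz applies and yields
\begin{align*}
\Bigl(\sum_k \tfrac{\ind{K}{1}_{jk}}{\ind{\opt{u}}{2}_k}\Bigr)\Bigl(\sum_k \ind{K}{2}_{kl}\,\ind{\opt{u}}{2}_k\Bigr) \geq \Bigl(\sum_k \sqrt{\ind{K}{1}_{jk}\,\ind{K}{2}_{kl}}\Bigr)^2 \geq \sum_k \ind{K}{1}_{jk}\,\ind{K}{2}_{kl} \geq K,
\end{align*}
where the middle inequality uses $(\sum_k x_k)^2 \geq \sum_k x_k^2$ for nonnegative $x_k$. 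Hence $\ind{\opt{u}}{1}_j \ind{\opt{u}}{3}_l \leq a_j b_l / K \leq 1/K$, and taking $\log$ and averaging against the weights $a_j b_l$ gives $\sum_{j,l} a_j b_l \log(\ind{\opt{u}}{1}_j \ind{\opt{u}}{3}_l) \leq -\log K$. Substituting into the preceding display completes the proof.

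I expect the Cauchy--Schwarz elimination of $\ind{\opt{\vect{u}}}{2}$ to be the main conceptual obstacle: the trick is that the algebraic identity $\sqrt{\ind{K}{1}_{jk}\ind{K}{2}_{kl}} = \sqrt{\ind{K}{1}_{jk}/\ind{\opt{u}}{2}_k}\sqrt{\ind{K}{2}_{kl}\ind{\opt{u}}{2}_k}$ is shaped precisely so that the product of the two denominators in the fixed-point relations combines into a quantity independent of $\ind{\opt{\vect{u}}}{2}$, and then a second elementary step produces $K$. Everything else---evaluating $\laglange$ at the two points and tracking the mass constants---is routine.
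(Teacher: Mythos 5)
Your proposal is correct and follows essentially the same route as the paper: evaluate the Lagrangian difference in closed form (the mass terms cancel because the induced plans have unit mass at both evaluation points), reduce everything to the pointwise bound $\ind{\opt{u}}{1}_j\,\ind{\opt{u}}{3}_l \leq 1/K$, and conclude by averaging against the weights $a_j b_l$. The only divergence is the micro-step for that bound: the paper inserts $\ind{\opt{u}}{2}_k/\ind{\opt{u}}{2}_k$ to identify $\ind{\opt{u}}{1}_j\big(\sum_k \ind{K}{1}_{jk}\ind{K}{2}_{kl}\big)\ind{\opt{u}}{3}_l = \sum_k \ind{\opt{P}}{1}_{jk}\ind{\opt{P}}{2}_{kl} \leq 1$, while you eliminate $\ind{\opt{\vect{u}}}{2}$ from the marginal identities via Cauchy--Schwarz and $(\sum_k x_k)^2 \geq \sum_k x_k^2$ --- both are valid and yield the same estimate.
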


Together with~\cref{lem:diffLagrange} and~\cref{lem:maxdistanceLagrange}, we present an upper bound of the number of iterations to terminate with respect to the errors between $\vect{a}$ and $\ind{\vect{P}'}{n, 1}\vect{1}_{m_2}$, and $\vect{b}$ and ${\ind{\vect{P}'}{n, 2}}^{\top}\vect{1}_{m_2}$.
\begin{lemma}
  \label{lem:terminating}
  Let $K\defeq \min_{jl} \sum_k \ind{K}{1}_{jk}\ind{K}{2}_{kl}$. Given a constant $\delta > 0$,  assume the following stopping criterion: 
  \begin{align*}
    \Bigg\|\vect{a} - \ind{\vect{P}}{n, 1}\bigg( \frac{\ind{\vect{P}}{n,2}\vect{1}_{m_3}}{\vect{1}_{m_1}^{\top}\ind{\vect{P}}{n, 1}}\bigg)^{1/2}\Bigg\|_1 + \Bigg\|\vect{b} - (\ind{\vect{P}}{n, 2})^{\top}\bigg( \frac{\vect{1}_{m_1}^{\top}\ind{\vect{P}}{n, 1}}{\ind{\vect{P}}{n,2}\vect{1}_{m_3}}\bigg)^{1/2}\Bigg\|_1\leq \delta.
  \end{align*}
  Then, the Sinkhorn iteration terminates in at most 
  \begin{align*}
    1 + \frac{4}{\delta^2}\log\Bigg(\frac{\big\|\ind{\vect{K}}{1}\big\|_1\big\|\ind{\vect{K}}{2}\big\|_1}{K}\Bigg) \quad \text{ times.}
  \end{align*}
\end{lemma}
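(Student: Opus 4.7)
\begin{proofs}
The plan follows the standard descent-and-telescope strategy in the Altschuler--Weed--Rigollet style. The three ingredients are already in place: Lem.~\ref{lem:diffLagrange} quantifies the per-iteration gain of the Lagrangian as a sum of two $\kld{\cdot}{\cdot}$ terms; Lem.~\ref{lem:maxdistanceLagrange} bounds the total dual gap that must be closed; and Prop.~\ref{prop:chHalf} identifies the second arguments of those KL-terms with the concrete marginals of the intermediate matrices $\ind{\vect{P}'}{n, 1}, \ind{\vect{P}'}{n, 2}$.

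First, I would apply Lem.~\ref{lem:diffLagrange} to write the one-step increase of the Lagrangian as $\epsilon\bigl(\kld{\vect{a}}{\ind{\vect{P}'}{n, 1}\vect{1}_{m_2}} + \kld{\vect{b}}{(\ind{\vect{P}'}{n, 2})^{\top}\vect{1}_{m_2}}\bigr)$ and then invoke a Pinsker-style inequality of the form $\kld{\vect{c}}{\vect{d}} \geq \tfrac{1}{2}\|\vect{c} - \vect{d}\|_1^2$ in the regime $\|\vect{c}\|_1 = 1$, $\|\vect{d}\|_1 \leq 1$ guaranteed by Prop.~\ref{prop:chHalf}. Combined with the elementary bound $x^2 + y^2 \geq \tfrac{1}{2}(x+y)^2$ and the hypothesis that the stopping criterion has not yet been met (so that the sum of the two $L^1$ distances strictly exceeds $\delta$), this gives a per-iteration improvement of at least $\epsilon\delta^2/4$. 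Telescoping over the $N$ iterations where the criterion still fails, and using $\laglange((\ind{\vect{u}}{N, i})_i) \leq \laglange((\ind{\opt{\vect{u}}}{i})_i)$ together with Lem.~\ref{lem:maxdistanceLagrange}, yields
\[
  N \cdot \frac{\epsilon\delta^2}{4} \leq \laglange\big((\ind{\opt{\vect{u}}}{i})_i\big) - \laglange\big((\ind{\vect{u}}{0, i})_i\big) \leq \epsilon \log\Big(\frac{\|\ind{\vect{K}}{1}\|_1 \|\ind{\vect{K}}{2}\|_1}{K}\Big),
\]
from which the stated bound follows after adding $1$ to account for the iteration on which termination is finally detected.

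The main obstacle is the Pinsker step, since the $\kld{\cdot}{\cdot}$ of Lem.~\ref{lem:diffLagrange} is the unnormalized form $\vect{c}^{\top}\log(\vect{c}/\vect{d})$ rather than the standard KL divergence between probability measures, and its second argument is only a subprobability vector by Prop.~\ref{prop:chHalf}. Writing $\tilde{\vect{d}} \defeq \vect{d}/\|\vect{d}\|_1$, the clean resolution is to split $\vect{c}^{\top}\log(\vect{c}/\vect{d}) = \mathrm{KL}(\vect{c}\,\|\,\tilde{\vect{d}}) + \log(1/\|\vect{d}\|_1)$, apply classical Pinsker to the first summand, control the mass-defect term via $\log(1/\beta) \geq 1 - \beta$ for $\beta \in (0, 1]$, and combine with the triangle inequality $\|\vect{c} - \vect{d}\|_1 \leq \|\vect{c} - \tilde{\vect{d}}\|_1 + (1 - \|\vect{d}\|_1)$ to recover the squared-$L^1$ lower bound. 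The remainder is the routine descent-lemma telescoping familiar from the vanilla Sinkhorn analysis.
\end{proofs}
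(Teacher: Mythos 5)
Your overall scheme coincides with the paper's proof: while the stopping criterion fails, Lem.~\ref{lem:diffLagrange} gives a dual increase of $\epsilon\big(\kld{\vect{a}}{\cdot}+\kld{\vect{b}}{\cdot}\big)$, the combination of $(x+y)^2\le 2x^2+2y^2$ with a Pinsker-type bound turns this into an increase of at least $\epsilon\delta^2/4$ per iteration, and telescoping against $\laglange\big((\ind{\opt{\vect{u}}}{i})_i\big)-\laglange\big((\ind{\vect{u}}{0,i})_i\big)\le\epsilon\log\big(\|\ind{\vect{K}}{1}\|_1\|\ind{\vect{K}}{2}\|_1/K\big)$ from Lem.~\ref{lem:maxdistanceLagrange} yields the stated $1+\tfrac{4}{\delta^2}\log(\cdot)$ bound. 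The one place you deviate is the step you yourself flag as the main obstacle, and your proposed resolution has a genuine gap. Write $\beta=\|\vect{d}\|_1$, $s=1-\beta$, $t=\|\vect{c}-\tilde{\vect{d}}\|_1$. Your three estimates give $\kld{\vect{c}}{\vect{d}}\ge\tfrac12 t^2+s$ and $\|\vect{c}-\vect{d}\|_1\le t+s$, and to conclude $\kld{\vect{c}}{\vect{d}}\ge\tfrac12\|\vect{c}-\vect{d}\|_1^2$ you would need $\tfrac12(t+s)^2\le\tfrac12 t^2+s$, i.e.\ $t+\tfrac{s}{2}\le 1$, which is false in general: for $\vect{c}=(1,0)$ and $\vect{d}=(\tfrac14,\tfrac14)$ one has $t=1$, $s=\tfrac12$, so your lower bound on the divergence is $1$ while $\tfrac12(t+s)^2=\tfrac98$. (The target inequality itself still holds there, $\log 4\ge\tfrac12$; it is only your chain of bounds that is too lossy, because the plain triangle inequality and $\log(1/\beta)\ge 1-\beta$ discard too much.)

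The paper closes this step more directly: append one extra coordinate, giving $\vect{d}$ the missing mass $1-\beta$ and $\vect{c}$ the value $0$ there. This leaves $\kld{\vect{c}}{\vect{d}}$ unchanged, can only increase the $L^1$ distance, and reduces the claim to ordinary Pinsker between two probability vectors, so $\|\vect{c}-\vect{d}\|_1^2\le 2\kld{\vect{c}}{\vect{d}}$ for any subdistribution $\vect{d}$. If you prefer your normalization route, it can be repaired by replacing the triangle inequality with the convexity estimate $\|\vect{c}-\vect{d}\|_1=\|\beta(\vect{c}-\tilde{\vect{d}})+(1-\beta)\vect{c}\|_1\le\beta t+s$ and using $\log(1/\beta)\ge s+\tfrac{s^2}{2}$; a short case split on whether $t\le 1$ then gives $\tfrac12(\beta t+s)^2\le\tfrac12 t^2+\log(1/\beta)$. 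With either repair, the remainder of your argument (descent bound, telescoping, and the extra $+1$ for the iteration at which termination is detected) is exactly the paper's proof.
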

\begin{proofs}
    It follows from Lem.~\ref{lem:diffLagrange} and Lem.~\ref{lem:maxdistanceLagrange}. We also use the Pinsker's inequality (e.g.~\cite{slivkins2019introduction}): the inequality $\|P-Q\|^2_1\leq 2\kld{P}{Q}$ for any distribution $P$ and subdistribution $Q$. See \S\ref{sec:proofTimeComp} for the details.  
\end{proofs}

Finally, we prove our main theorem---an analysis of its worst case complexity with respect to arithmetic operations---with the results we have prepared. We recall a known technique presented in~\cite{AltschulerWR17}  to obtain a feasible solution with an error bound. 
\begin{lemma}[\cite{AltschulerWR17}]
\label{lem:getFeasibleSolution}
Let $\vect{P}\in \Rnnegmat{m}{n}$, $\vect{a}\in \setdist{m}$, and $\vect{b}\in \setdist{n}$. 
There is a procedure that constructs an transportation plan $\widetilde{\vect{P}}\in \Rnnegmat{m}{n}$ for $\vect{a}$ and $\vect{b}$, that is, 
$\widetilde{\vect{P}}\vect{1}_{n} = \vect{a}$ and  ${\widetilde{\vect{P}}}^{\top}\vect{1}_{m} = \vect{b}$. 
Its worst-case complexity is $O(mn)$ and the error is bounded by 
\begin{align*}
  \big\|\vect{P}- \widetilde{\vect{P}}\big\|_1 \leq 2\Big(\big\|\vect{P}\vect{1}_{n} - \vect{a}\big\|_1 + \big\|{\vect{P}}^{\top}\vect{1}_{m} - \vect{b}\big\|_1\Big). 
\end{align*} 
  
\end{lemma}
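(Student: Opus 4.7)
\begin{proofs}
The plan is to invoke the three-stage rounding procedure of \cite{AltschulerWR17}. \textbf{Stage~1 (row rescaling):} define $\vect{X}\defeq \diag{\vect{x}}$ with $x_i\defeq \min\{a_i/(\vect{P}\vect{1}_n)_i,\,1\}$ (and $1$ if the denominator vanishes), and set $\vect{P}'\defeq \vect{X}\vect{P}$, so that $\vect{P}'\vect{1}_n\leq \vect{a}$ componentwise and $\|\vect{P}-\vect{P}'\|_1 \leq \|\vect{P}\vect{1}_n-\vect{a}\|_1$. \textbf{Stage~2 (column rescaling):} symmetrically, let $\vect{Y}\defeq \diag{\vect{y}}$ with $y_j\defeq \min\{b_j/({\vect{P}'}^{\top}\vect{1}_m)_j,\,1\}$ and $\vect{P}''\defeq \vect{P}'\vect{Y}$, which preserves the row inequality and yields ${\vect{P}''}^{\top}\vect{1}_m\leq \vect{b}$ together with $\|\vect{P}'-\vect{P}''\|_1\leq \|{\vect{P}'}^{\top}\vect{1}_m-\vect{b}\|_1$. \textbf{Stage~3 (rank-one repair):} set $\vect{e}_r\defeq \vect{a}-\vect{P}''\vect{1}_n$ and $\vect{e}_c\defeq \vect{b}-{\vect{P}''}^{\top}\vect{1}_m$, both componentwise nonnegative with $\|\vect{e}_r\|_1 = \|\vect{e}_c\|_1 = 1-\|\vect{P}''\|_1$, and define $\widetilde{\vect{P}}\defeq \vect{P}'' + \vect{e}_r \vect{e}_c^{\top}/\|\vect{e}_r\|_1$ (taking $\widetilde{\vect{P}}\defeq \vect{P}''$ when $\|\vect{e}_r\|_1 = 0$). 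A direct verification yields $\widetilde{\vect{P}}\vect{1}_n = \vect{a}$ and $\widetilde{\vect{P}}^{\top}\vect{1}_m = \vect{b}$.

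Since each stage consists only of entrywise divisions, componentwise rescalings, and a single outer product, the total arithmetic cost is $O(mn)$, giving the claimed complexity.

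For the error bound I would apply the triangle inequality $\|\vect{P}-\widetilde{\vect{P}}\|_1\leq \|\vect{P}-\vect{P}'\|_1 + \|\vect{P}'-\vect{P}''\|_1 + \|\vect{P}''-\widetilde{\vect{P}}\|_1$ and bound each piece in terms of the original marginal errors. The first is already in the desired form; the second is controlled via $\|{\vect{P}'}^{\top}\vect{1}_m-\vect{b}\|_1 \leq \|\vect{P}^{\top}\vect{1}_m-\vect{b}\|_1 + \|\vect{P}-\vect{P}'\|_1$, using the triangle inequality on column marginals and the fact that $\vect{P}\geq \vect{P}'$ entrywise. For the third, I would use the identity $\|\vect{P}''-\widetilde{\vect{P}}\|_1 = \|\vect{e}_c\|_1 = 1-\|\vect{P}''\|_1 = (1-\|\vect{P}\|_1) + \|\vect{P}-\vect{P}''\|_1$ together with the bound $|1-\|\vect{P}\|_1|\leq \min\{\|\vect{a}-\vect{P}\vect{1}_n\|_1,\|\vect{b}-\vect{P}^{\top}\vect{1}_m\|_1\}$. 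The hard part will be this bookkeeping: the rank-one correction has to be traced back through both scaling stages to the original marginal deviations, and obtaining the sharp constant $2$ (rather than a larger multiple) requires exploiting the symmetric identity $\|\vect{e}_r\|_1 = \|\vect{e}_c\|_1$ to distribute the charges across the two marginal-error terms, rather than bounding $\|\vect{e}_r\|_1$ by row quantities alone.
\end{proofs}
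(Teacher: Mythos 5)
Your construction is precisely the rounding procedure of \cite{AltschulerWR17} (their Algorithm~2), which is exactly what the paper relies on --- the paper imports this lemma without proof --- and your feasibility verification of $\widetilde{\vect{P}}$ and the $O(mn)$ cost are correct.

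However, the error analysis as you sketch it does not reach the stated constant $2$, and the symmetry $\|\vect{e}_r\|_1=\|\vect{e}_c\|_1$ alone will not repair it. Concretely, bounding the second piece by $\|\vect{P}'-\vect{P}''\|_1\leq \|\vect{P}^{\top}\vect{1}_m-\vect{b}\|_1+\|\vect{P}-\vect{P}'\|_1$ is lossy: feeding it into your decomposition together with $\|\vect{P}''-\widetilde{\vect{P}}\|_1=(1-\|\vect{P}\|_1)+\|\vect{P}-\vect{P}''\|_1$ gives a bound of the form $4\|\vect{P}\vect{1}_n-\vect{a}\|_1+2\|\vect{P}^{\top}\vect{1}_m-\vect{b}\|_1+\min\{\cdot,\cdot\}$, not $2\big(\|\vect{P}\vect{1}_n-\vect{a}\|_1+\|\vect{P}^{\top}\vect{1}_m-\vect{b}\|_1\big)$. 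The missing idea is to keep exact positive-part identities rather than triangle inequalities: since stage~1 removes from row $i$ exactly the excess mass, $\|\vect{P}-\vect{P}'\|_1=\sum_i \max\{(\vect{P}\vect{1}_n)_i-a_i,0\}$, and since column sums only decrease from $\vect{P}$ to $\vect{P}'$, $\|\vect{P}'-\vect{P}''\|_1=\sum_j \max\{({\vect{P}'}^{\top}\vect{1}_m)_j-b_j,0\}\leq\sum_j \max\{({\vect{P}}^{\top}\vect{1}_m)_j-b_j,0\}$; moreover $1-\|\vect{P}\|_1=\sum_i\big(a_i-(\vect{P}\vect{1}_n)_i\big)\leq\sum_i\max\{a_i-(\vect{P}\vect{1}_n)_i,0\}$. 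Combining these with your (correct) identity for the rank-one correction yields
\begin{align*}
  \big\|\vect{P}-\widetilde{\vect{P}}\big\|_1 \leq 2\sum_i \max\{(\vect{P}\vect{1}_n)_i-a_i,0\} + 2\sum_j \max\{(\vect{P}^{\top}\vect{1}_m)_j-b_j,0\} + \sum_i \max\{a_i-(\vect{P}\vect{1}_n)_i,0\},
\end{align*}
and the coordinatewise inequality $2\max\{t,0\}+\max\{-t,0\}\leq 2|t|$ applied to the row terms gives the claimed factor $2$. So the gap is confined to this bookkeeping step; with the positive-part accounting above, your argument closes.
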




In the proof, we construct a feasible solution from the following matrices $\Big((\ind{\vect{P}}{n+0.5, i})_{i\in \nset{2}}\Big)_{n\in \nat}$, which are normalization of $\ind{\vect{P}'}{n, 1}$ and $\ind{\vect{P}'}{n, 2}$ that are presented in Prop.~\ref{prop:chHalf}. 
\begin{definition}
  \label{def:halfPlan}
  Given a Sinkhorn iteration $\big( (\ind{\vect{u}}{n, i} \in \Rpos^{m_{i}})_{i\in \nset{3}} \big)_{n\in \nat}$, the \emph{matrices $\Big((\ind{\vect{P}}{n+0.5, i})_{i\in \nset{2}}\Big)_{n\in \nat}$ induced by the iteration} are defined as follows: 
  \begin{align*}
    \ind{\vect{P}}{n +0.5, 1} &\defeq \diag{\frac{\ind{\vect{u}}{n, 1}}{P}}\ind{\vect{K}}{1}\diag{\frac{\vect{1}_{m_{2}}}{\ind{\vect{u}}{n+1, 2}}},\\
    \ind{\vect{P}}{n+0.5, 2} &\defeq \diag{\ind{\vect{u}}{n+1, 2}}\ind{\vect{K}}{2}\diag{\frac{\ind{\vect{u}}{n, 3}}{P}}, 
  \end{align*}
  for any $n\in \nat$, where
  \begin{align*}
      P \defeq \Bigg\|\diag{\ind{\vect{u}}{n, 1}}\ind{\vect{K}}{1}\diag{\frac{\vect{1}_{m_{2}}}{\ind{\vect{u}}{n+1, 2}}}\Bigg\|_1 = \Bigg\| \diag{\ind{\vect{u}}{n+1, 2}}\ind{\vect{K}}{2}\diag{\ind{\vect{u}}{n, 3}}\Bigg\|_1.
  \end{align*}

\end{definition}
Note that they satisfy the condition on the boundary, that is, the equality ${\ind{\vect{P}}{n +0.5, 1}}^{\top}\vect{1}_{m_1} = {\ind{\vect{P}}{n +0.5, 3}}\vect{1}_{m_3}$ holds.

\begin{theorem}
  \label{thm:main_theorem}
  Let $\delta > 0$ and $\epsilon \defeq \frac{\delta}{2\log\big(m_1(m_2)^2m_3\big)}$. 
  We assume that the stopping criterion for the Sinkhorn iteration is given by 
  \begin{align*}
    &\Bigg\|\vect{a} - \ind{\vect{P}}{n, 1}\bigg( \frac{\ind{\vect{P}}{n,2}\vect{1}_{m_3}}{\vect{1}_{m_1}^{\top}\ind{\vect{P}}{n, 1}}\bigg)^{1/2}\Bigg\|_1 + \Bigg\|\vect{b} - (\ind{\vect{P}}{n, 2})^{\top}\bigg( \frac{\vect{1}_{m_1}^{\top}\ind{\vect{P}}{n, 1}}{\ind{\vect{P}}{n,2}\vect{1}_{m_3}}\bigg)^{1/2}\Bigg\|_1\\
    &\leq \frac{\delta}{16\max\Big(\big\|\ind{\vect{C}}{1}\big\|_{\infty}, \big\|\ind{\vect{C}}{2}\big\|_{\infty}\Big)},
  \end{align*}
  and assume that it terminates in $k$ step (it always terminates by Lem.~\ref{lem:terminating}). 
  The following inequality holds: 
  \begin{align*}
    \Big\langle\ind{\vect{C}}{1}, \widetilde{\ind{\vect{P}}{k+0.5, 1}} \Big\rangle + \Big\langle\ind{\vect{C}}{2}, \widetilde{\ind{\vect{P}}{k+0.5, 2}} \Big\rangle \leq \min_{\ind{\vect{P}}{1}, \ind{\vect{P}}{2}} \Big\langle\ind{\vect{C}}{1}, \ind{\vect{P}}{1} \Big\rangle + \Big\langle\ind{\vect{C}}{2}, \ind{\vect{P}}{2} \Big\rangle + \delta,
  \end{align*}
  where the tranportation plans $\widetilde{\ind{\vect{P}}{k+0.5, 1}}$ and $\widetilde{\ind{\vect{P}}{k+0.5, 2}}$ are obtained by Lem.~\ref{lem:getFeasibleSolution} from $\ind{\vect{P}}{k+0.5, 1}$ and $\ind{\vect{P}}{k+0.5, 2}$ that are defined in Def.~\ref{def:halfPlan}.
  Let $\vect{C}\in\Rnnegmat{m_1}{m_2} $ be defined by $C_{ij} \defeq \min_{k\in \nset{m_2}}\ind{C}{1}_{ik} + \ind{C}{2}_{kj}$. 
  Assuming that $\delta < \|\vect{C}\|_{\infty}$, 
  its worst-case complexity of the number of arithmetic operations\footnote{We count taking the square root $(\_)^{1/2}$ as an arithmetic operation.} is 
  \begin{align*}
    O\Bigg(\max\big( m_1m_2, m_2m_3\big) \Big(\max\big(\|\ind{\vect{C}}{1}\|_{\infty}, \|\ind{\vect{C}}{2}\|_{\infty}\big)\Big)^2 \log\big(m_1(m_2)^2m_3\big)\delta^{-3} \|\vect{C}\|_{\infty} \Bigg).
  \end{align*}
\end{theorem}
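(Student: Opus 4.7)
The plan is to bound
\[
  \big\langle\ind{\vect{C}}{1}, \widetilde{\ind{\vect{P}}{k+0.5, 1}}\big\rangle + \big\langle\ind{\vect{C}}{2}, \widetilde{\ind{\vect{P}}{k+0.5, 2}}\big\rangle - \min\bigl\{\langle\ind{\vect{C}}{1},\ind{\vect{P}}{1}\rangle + \langle\ind{\vect{C}}{2},\ind{\vect{P}}{2}\rangle\bigr\}
\]
by splitting the suboptimality into three pieces: (a) a \emph{rounding error} between $\ind{\vect{P}}{k+0.5, i}$ and its feasible projection $\widetilde{\ind{\vect{P}}{k+0.5, i}}$; (b) a \emph{Sinkhorn gap} comparing the cost of the boundary-consistent but edge-infeasible $\ind{\vect{P}}{k+0.5}$ to that of the regularized optimum from Cor.~\ref{cor:ch_optimal_matrix}; and (c) an \emph{entropic regularization bias} between the regularized optimum and the true OT optimum. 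The parameter $\epsilon = \delta/(2\log(m_1 m_2^2 m_3))$ is chosen so that (c) alone contributes at most $\delta/2$, and the stopping threshold $\delta/(16\max(\|\ind{\vect{C}}{1}\|_\infty,\|\ind{\vect{C}}{2}\|_\infty))$ is chosen so that (a) and (b) each contribute a constant fraction of $\delta$. The key structural inputs are Prop.~\ref{prop:chHalf} (boundary-consistency of $\ind{\vect{P}}{k+0.5}$) and Lem.~\ref{lem:diffLagrange}--Lem.~\ref{lem:terminating} (Lagrangian progress in terms of edge-marginal errors).

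\paragraph{Controlling each term.}
For (a), I apply Lem.~\ref{lem:getFeasibleSolution} separately to $\ind{\vect{P}}{k+0.5,1}$ with target marginals $(\vect{a},\;\text{shared boundary})$ and to $\ind{\vect{P}}{k+0.5,2}$ with target marginals $(\text{shared boundary},\;\vect{b})$; here the shared boundary is a bona fide probability vector because $\|\ind{\vect{P}}{k+0.5, i}\|_1 = 1$, and the two roundings produce matrices that glue along this common boundary into a feasible SeqOT plan. By Prop.~\ref{prop:chHalf} the boundary marginal errors vanish, so the $L^1$ rounding error per matrix collapses to twice the corresponding edge-marginal error, and H\"older plus the stopping threshold give a cost increase of at most $\delta/8$. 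For (c), I use $L_\epsilon(\text{regularized optimum}) \leq L_\epsilon(\text{OT optimum})$ together with the bound $\sum_i \ent{\ind{\opt{\vect{P}}}{i}} \leq \log(m_1 m_2) + \log(m_2 m_3) + 2 = \log(m_1 m_2^2 m_3) + 2$ and non-negativity of $\ent{\cdot}$ at plans with $L^1$-norm one, yielding an upper bound $\epsilon\bigl(\log(m_1 m_2^2 m_3) + 2\bigr)$ that collapses to $\delta/2$ up to lower-order terms by the choice of $\epsilon$. The \emph{main obstacle} is (b): $\ind{\vect{P}}{k+0.5}$ is not feasible, so one cannot directly invoke $L_\epsilon(\text{regularized optimum}) \leq L_\epsilon(\ind{\vect{P}}{k+0.5})$. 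My plan here is to use $\ind{C}{i}_{jk} = -\epsilon \log \ind{K}{i}_{jk}$ to rewrite $\langle\ind{\vect{C}}{i}, \ind{\vect{P}}{k+0.5,i}\rangle$ in terms of the Sinkhorn dual variables, exploit the cancellation of the $\ind{\vect{f}}{k+1,2}$-contributions afforded by the boundary-consistency of Prop.~\ref{prop:chHalf}, and then combine weak duality with Pinsker's inequality to dominate the remaining edge-marginal residuals by the stopping threshold; the resulting bound on (b) will again be a constant fraction of $\delta$.

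\paragraph{Complexity assembly.}
For the arithmetic-operations count, Lem.~\ref{lem:terminating} bounds the number of iterations by
\[
  1 + \frac{4}{\delta_{\mathrm{stop}}^2} \log\!\bigg(\frac{\|\ind{\vect{K}}{1}\|_1\|\ind{\vect{K}}{2}\|_1}{K}\bigg),\qquad \delta_{\mathrm{stop}}:=\frac{\delta}{16\max(\|\ind{\vect{C}}{1}\|_\infty,\|\ind{\vect{C}}{2}\|_\infty)}.
\]
I estimate the logarithm using $\|\ind{\vect{K}}{i}\|_1 \leq m_i m_{i+1}$ and $K \geq \exp(-\|\vect{C}\|_\infty/\epsilon)$, the latter following from the definition $C_{jl} = \min_k \ind{C}{1}_{jk}+\ind{C}{2}_{kl}$ via $\sum_k \ind{K}{1}_{jk}\ind{K}{2}_{kl} \geq \max_k e^{-(\ind{C}{1}_{jk}+\ind{C}{2}_{kl})/\epsilon} \geq e^{-C_{jl}/\epsilon}$. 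Substituting the chosen $\epsilon$ and using the hypothesis $\delta < \|\vect{C}\|_\infty$ lets the $\|\vect{C}\|_\infty/\epsilon$ term dominate the log factor, contributing order $\|\vect{C}\|_\infty\log(m_1 m_2^2 m_3)/\delta$. Each Sinkhorn iteration performs a constant number of matrix--vector products with $\ind{\vect{K}}{1}$ or $\ind{\vect{K}}{2}$, plus component-wise operations (including one square root per coordinate for the $\ind{\vect{u}}{n+1,2}$ update), at cost $O(\max(m_1 m_2, m_2 m_3))$; multiplying per-iteration cost by iteration count yields the claimed bound.
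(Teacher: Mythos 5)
Your pieces (a) and (c), and your complexity assembly, are sound and match the paper's bookkeeping (iteration count from Lem.~\ref{lem:terminating} with $\delta_{\mathrm{stop}}=\delta/(16\max_i\|\ind{\vect{C}}{i}\|_\infty)$, $\|\ind{\vect{K}}{i}\|_1\le m_im_{i+1}$, $K\ge e^{-\|\vect{C}\|_\infty/\epsilon}$, per-iteration cost $O(\max(m_1m_2,m_2m_3))$), up to minor constant slack (the ``$+2$'' in your entropy bound should be cancelled against $\ent{\cdot}\ge 1$, or the budget $\delta/2+\delta/8+\dots$ will not close exactly). The genuine gap is your step (b), which you yourself flag as the main obstacle and for which you only offer a plan. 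The stopping criterion controls \emph{marginal-constraint violations} of the half-step plans, not their distance (in cost, in dual variables, or in any metric) to the regularized optimum $(\ind{\opt{\vect{P}}}{i})$ of the \emph{original} problem; ``weak duality plus Pinsker'' does not convert small marginal residuals into a bound on $\sum_i\langle\ind{\vect{C}}{i},\ind{\vect{P}}{k+0.5,i}\rangle-\langle\ind{\vect{C}}{i},\ind{\opt{\vect{P}}}{i}\rangle$, because at termination the dual iterate may still be far from $\ind{\opt{\vect{f}}}{i}$, and invoking the Hilbert-metric convergence of Thm.~\ref{thm:global_convergence_2} instead would inject the contraction factor $\lambda(\ind{\vect{K}}{i})$, which degrades like $1-e^{-O(\|\vect{C}\|_\infty/\epsilon)}$ and destroys the polynomial complexity. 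A dual-variable rewriting with cancellation on the boundary would additionally require uniform bounds on $\|\ind{\vect{f}}{k,i}\|_\infty$ that the paper never establishes and that your sketch does not supply.

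The missing idea, and the route the paper actually takes, is to never compare against the regularized optimum of the original problem at all: by Lem.~\ref{lem:ch_optimal_matrix} (the scaling characterization), the normalized half-step pair $(\ind{\vect{P}}{k+0.5,1},\ind{\vect{P}}{k+0.5,2})$ is itself the \emph{exact} optimizer of the regularized SeqOT whose edge marginals are the perturbed vectors $\vect{a}'\defeq\ind{\vect{P}}{k+0.5,1}\vect{1}_{m_2}$ and $\vect{b}'\defeq(\ind{\vect{P}}{k+0.5,2})^{\top}\vect{1}_{m_2}$ (the boundary constraint holds exactly by Prop.~\ref{prop:chHalf}). One then applies the rounding of Lem.~\ref{lem:getFeasibleSolution} in the \emph{reverse} direction, pushing the unregularized optimum $(\ind{\breve{\vect{P}}}{1},\ind{\breve{\vect{P}}}{2})$ into the $(\vect{a}',\vect{b}')$-feasible set, so that the regularized optimality of $\ind{\vect{P}}{k+0.5,\cdot}$ at $(\vect{a}',\vect{b}')$ gives $\sum_i\langle\ind{\vect{C}}{i},\ind{\vect{P}}{k+0.5,i}\rangle-\langle\ind{\vect{C}}{i},\widetilde{\ind{\breve{\vect{P}}}{i}}\rangle\le\epsilon\log\big(m_1(m_2)^2m_3\big)$, i.e.\ your (b) and (c) collapse into a single entropy-range bound, and all remaining terms are H\"older-type rounding errors controlled by the stopping criterion via $\|\vect{a}-\vect{a}'\|_1\le 2\|\vect{a}-\ind{\vect{P}}{k,1}(\cdot)^{1/2}\|_1$ (and similarly for $\vect{b}$). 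Without this step (or an equivalent quantitative substitute), your decomposition does not yield the claimed $\delta$-suboptimality with the stated complexity.
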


\paragraph{Another Stopping Criterion.}
You may ask that the error $\big\|\ind{\vect{P}}{n, 2}\vect{1}_{m_3}- (\ind{\vect{P}}{n, 1})^{\top}\vect{1}_{m_1}\big\|_1$ can be used for the stopping criterion instead of the one used in Thm.~\ref{thm:main_theorem}. The answer is yes: In fact, it only requires at most one additional iteration to terminate. Formally, the following two inequalities ensure that these two stopping criteria are essentially equivalent. 

\begin{lemma}
  \label{lem:diff1}
  For any $n\in \nat$, the following inequality holds: 
  \begin{align*}
    &\Bigg\|\vect{a} - \ind{\vect{P}}{n, 1}\bigg( \frac{\ind{\vect{P}}{n,2}\vect{1}_{m_3}}{\vect{1}_{m_1}^{\top}\ind{\vect{P}}{n, 1}}\bigg)^{1/2}\Bigg\|_1 + \Bigg\|\vect{b} - (\ind{\vect{P}}{n, 2})^{\top}\bigg( \frac{\vect{1}_{m_1}^{\top}\ind{\vect{P}}{n, 1}}{\ind{\vect{P}}{n,2}\vect{1}_{m_3}}\bigg)^{1/2}\Bigg\|_1\\
    \leq\ &\Big\|\ind{\vect{P}}{n, 2}\vect{1}_{m_3}- (\ind{\vect{P}}{n, 1})^{\top}\vect{1}_{m_1}\Big\|_1
  \end{align*}
\end{lemma}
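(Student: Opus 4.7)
Write $\vect{p}\defeq \ind{\vect{P}}{n, 2}\vect{1}_{m_3}$ and $\vect{q}\defeq (\ind{\vect{P}}{n, 1})^{\top}\vect{1}_{m_1}$ for the two ``inner marginals'' that the algorithm is trying to reconcile; then the right-hand side is precisely $\|\vect{p}-\vect{q}\|_1$. Both vectors are entrywise strictly positive since the Gibbs kernels $\ind{\vect{K}}{i}$ and the Sinkhorn iterates $\ind{\vect{u}}{n,i}$ are strictly positive, so the componentwise quotients and square roots $(\vect{p}/\vect{q})^{1/2}$ and $(\vect{q}/\vect{p})^{1/2}$ are well-defined. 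I will focus on the relevant case $n\geq 1$, where \cref{lem:transportmat} supplies the equalities $\vect{a}=\ind{\vect{P}}{n,1}\vect{1}_{m_2}$ and $\vect{b}=(\ind{\vect{P}}{n,2})^{\top}\vect{1}_{m_2}$.

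\textbf{Key reduction.} Using those two identities, each difference on the LHS factors through the corresponding transportation matrix:
\begin{align*}
    \vect{a} - \ind{\vect{P}}{n,1}(\vect{p}/\vect{q})^{1/2} &= \ind{\vect{P}}{n,1}\bigl(\vect{1}_{m_2} - (\vect{p}/\vect{q})^{1/2}\bigr),\\
    \vect{b} - (\ind{\vect{P}}{n,2})^{\top}(\vect{q}/\vect{p})^{1/2} &= (\ind{\vect{P}}{n,2})^{\top}\bigl(\vect{1}_{m_2} - (\vect{q}/\vect{p})^{1/2}\bigr).
\end{align*}
Since $\ind{\vect{P}}{n,1}$ and $(\ind{\vect{P}}{n,2})^{\top}$ are nonnegative, the elementary contraction $\|Av\|_1\leq (A^{\top}\vect{1})^{\top}|v|$ (plain triangle inequality) applies, with the column-sum vectors being $\vect{q}$ and $\vect{p}$ respectively. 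The LHS is thus bounded by
\begin{align*}
    \sum_{j=1}^{m_2} q_j \left|1 - \sqrt{p_j/q_j}\right| + \sum_{j=1}^{m_2} p_j \left|1 - \sqrt{q_j/p_j}\right|.
\end{align*}

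\textbf{Finishing and obstacles.} The two sums combine symmetrically via the identities $q_j\lvert 1-\sqrt{p_j/q_j}\rvert = \sqrt{q_j}\lvert\sqrt{q_j}-\sqrt{p_j}\rvert$ and its $p\leftrightarrow q$ counterpart, yielding
\begin{align*}
    \sum_{j=1}^{m_2}\bigl(\sqrt{p_j}+\sqrt{q_j}\bigr)\bigl|\sqrt{p_j}-\sqrt{q_j}\bigr| = \sum_{j=1}^{m_2} |p_j - q_j| = \|\vect{p}-\vect{q}\|_1,
\end{align*}
where the first equality is just $(a+b)|a-b|=|a^2-b^2|$. This is exactly the right-hand side. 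I do not anticipate any serious obstacle: the argument is a chain of elementary manipulations. The only point needing care is pairing the $q$-weighted and $p$-weighted bounds so that their square-root factors telescope into $|p_j-q_j|$; this is precisely why the lemma's LHS must contain both ``$\vect{a}$-half'' and ``$\vect{b}$-half'' simultaneously, and why neither half alone admits such a clean bound.
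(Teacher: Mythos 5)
Your proof is correct and follows essentially the same route as the paper's: substitute $\vect{a}=\ind{\vect{P}}{n,1}\vect{1}_{m_2}$, $\vect{b}=(\ind{\vect{P}}{n,2})^{\top}\vect{1}_{m_2}$ (Lem.~\ref{lem:transportmat}, hence the same implicit restriction to $n>0$ the paper uses), push the absolute value inside to get the column-sum weights $q_k=\sum_j\ind{P}{n,1}_{jk}$ and $p_k=\sum_l\ind{P}{n,2}_{kl}$, and combine the two halves via $\sqrt{q_k}\,\lvert\sqrt{q_k}-\sqrt{p_k}\rvert+\sqrt{p_k}\,\lvert\sqrt{p_k}-\sqrt{q_k}\rvert=\lvert p_k-q_k\rvert$. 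The only cosmetic difference is that you finish with this identity directly, whereas the paper inserts an intermediate term $\sum_k\lvert q_k/x_k-p_kx_k\rvert$ that is then shown to vanish.
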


\begin{lemma}
  \label{lem:diff2}
  For any $n\in \nat$, the following inequality holds: 
  \begin{align*}
    &\Big\|\ind{\vect{P}}{n+1, 2}\vect{1}_{m_3}- (\ind{\vect{P}}{n+1, 1})^{\top}\vect{1}_{m_1}\Big\|_1 \\
    \leq &\Bigg\|\vect{a} - \ind{\vect{P}}{n, 1}\bigg( \frac{\ind{\vect{P}}{n,2}\vect{1}_{m_3}}{\vect{1}_{m_1}^{\top}\ind{\vect{P}}{n, 1}}\bigg)^{1/2}\Bigg\|_1 + \Bigg\|\vect{b} - (\ind{\vect{P}}{n, 2})^{\top}\bigg( \frac{\vect{1}_{m_1}^{\top}\ind{\vect{P}}{n, 1}}{\ind{\vect{P}}{n,2}\vect{1}_{m_3}}\bigg)^{1/2}\Bigg\|_1
  \end{align*}
\end{lemma}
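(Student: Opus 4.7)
The plan is to route everything through the half-step matrices from Proposition~\ref{prop:chHalf}. Write $\ind{\vect{P}'}{n,1}$ and $\ind{\vect{P}'}{n,2}$ as in that proposition, and abbreviate
\[
\vect{p}_1 \defeq \ind{\vect{P}'}{n,1}\vect{1}_{m_2},\qquad
\vect{p}_2 \defeq (\ind{\vect{P}'}{n,2})^{\top}\vect{1}_{m_2},\qquad
\vect{q} \defeq (\ind{\vect{P}'}{n,1})^{\top}\vect{1}_{m_1} = \ind{\vect{P}'}{n,2}\vect{1}_{m_3}.
\]
By Proposition~\ref{prop:chHalf}, the right-hand side of the lemma is exactly $\|\vect{a}-\vect{p}_1\|_1 + \|\vect{b}-\vect{p}_2\|_1$, so it suffices to bound the left-hand side by this sum.

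The first step is to observe that $\ind{\vect{P}}{n+1, i}$ is obtained from $\ind{\vect{P}'}{n, i}$ by a single edge rescaling. Substituting $\ind{\vect{u}}{n+1, 1} = \vect{a}/(\ind{\vect{K}}{1}(\vect{1}_{m_2}/\ind{\vect{u}}{n+1, 2}))$ into the definition and using $\ind{\vect{P}'}{n, 1}\vect{1}_{m_2} = \ind{\vect{u}}{n, 1}\odot (\ind{\vect{K}}{1}(\vect{1}_{m_2}/\ind{\vect{u}}{n+1, 2}))$ gives $\ind{\vect{u}}{n+1, 1}/\ind{\vect{u}}{n, 1} = \vect{a}/\vect{p}_1$, and symmetrically $\ind{\vect{u}}{n+1, 3}/\ind{\vect{u}}{n, 3} = \vect{b}/\vect{p}_2$. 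Consequently
\[
\ind{\vect{P}}{n+1,1} = \diag{\vect{a}/\vect{p}_1}\,\ind{\vect{P}'}{n,1},\qquad
\ind{\vect{P}}{n+1,2} = \ind{\vect{P}'}{n,2}\,\diag{\vect{b}/\vect{p}_2},
\]
so the two marginals of interest become $(\ind{\vect{P}}{n+1,1})^{\top}\vect{1}_{m_1} = (\ind{\vect{P}'}{n,1})^{\top}(\vect{a}/\vect{p}_1)$ and $\ind{\vect{P}}{n+1,2}\vect{1}_{m_3} = \ind{\vect{P}'}{n,2}(\vect{b}/\vect{p}_2)$.

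The final step is to peel off the $\vect{q}$-part by writing $\vect{a}/\vect{p}_1 = \vect{1}_{m_1} + (\vect{a}-\vect{p}_1)/\vect{p}_1$ and analogously for $\vect{b}/\vect{p}_2$. Since $\ind{\vect{P}'}{n,2}\vect{1}_{m_3} = (\ind{\vect{P}'}{n,1})^{\top}\vect{1}_{m_1} = \vect{q}$, the $\vect{q}$-contributions cancel and the difference reduces to
\[
\ind{\vect{P}}{n+1,2}\vect{1}_{m_3} - (\ind{\vect{P}}{n+1,1})^{\top}\vect{1}_{m_1} \;=\; \ind{\vect{P}'}{n,2}\bigl((\vect{b}-\vect{p}_2)/\vect{p}_2\bigr) \;-\; (\ind{\vect{P}'}{n,1})^{\top}\bigl((\vect{a}-\vect{p}_1)/\vect{p}_1\bigr).
\]
I would then apply the triangle inequality together with the elementary bound $\|\vect{A}\vect{v}\|_1 \le \sum_j (\vect{A}^{\top}\vect{1})_j\,|v_j|$, which holds for any nonnegative matrix $\vect{A}$; using $(\ind{\vect{P}'}{n,2})^{\top}\vect{1}_{m_2} = \vect{p}_2$ and $\ind{\vect{P}'}{n,1}\vect{1}_{m_2} = \vect{p}_1$, the two terms collapse respectively to $\|\vect{b}-\vect{p}_2\|_1$ and $\|\vect{a}-\vect{p}_1\|_1$, which is exactly the bound we want. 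I do not foresee a genuine obstacle: the one piece that deserves some care is the diagonal-rescaling identity connecting $\ind{\vect{P}}{n+1,i}$ to $\ind{\vect{P}'}{n,i}$, since it is precisely the mechanism that turns the step-$n$ edge error into the step-$(n{+}1)$ boundary mismatch.
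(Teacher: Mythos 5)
Your proposal is correct and is essentially the paper's own argument: the paper also splits the step-$(n{+}1)$ mismatch at the common boundary marginal of the half-step matrices (its vector $\vect{x}$ is your $\vect{q}$, and its $(\ind{\vect{P}}{n,2})^{\top}\vect{y}$, $\ind{\vect{P}}{n,1}\vect{z}$ are your $\vect{p}_2$, $\vect{p}_1$) and then absorbs the column/row sums into the $\ell_1$ bound exactly as you do. The only difference is presentational: you package the paper's componentwise manipulations as the diagonal-rescaling identities $\ind{\vect{P}}{n+1,1}=\diag{\vect{a}/\vect{p}_1}\ind{\vect{P}'}{n,1}$ and $\ind{\vect{P}}{n+1,2}=\ind{\vect{P}'}{n,2}\diag{\vect{b}/\vect{p}_2}$, which is a clean reformulation rather than a different route.
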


\section{Related works}\label{sec:related}
\paragraph{Regularization of OTs:}
The algorithms and theoretical analyses for regularized (vanilla) OT extend beyond the Sinkhorn algorithm and its convergence and complexity results.
For example, algorithms for entropy-regularized OT problems include not only the Sinkhorn algorithm but also methods based on continuous optimization \citep{aistats/LuoXH23} and ordinary differential equations (ODEs) \citep{Hiew24}.
In addition, the authors in \citep{DBLP:conf/colt/Weed18,DBLP:journals/corr/abs-2309-11666} have performed error analyses between the minimizers of entropy or general Bregman divergence-regularized problems and those of unregularized problems.
In this paper, we focus on the Sinkhorn algorithm for SeqOT, investigating its convergence and complexity with a view towards applications in hierarchical planning.
\paragraph{Analysis for the Sinkhorn algorithms:}
In addition to the complexity analysis of the Sinkhorn algorithm for vanilla OT in \citep{AltschulerWR17,jmlr/LinHJ22}, progress has been made in analyzing the convergence and complexity of the Sinkhorn algorithm for various variants of OTs.
These include OT for probability distributions on continuous spaces \citep{colt/GrecoNCD23}, semi-relaxed OT \citep{Fukunaga22}, inequality-constrained OT \citep{DBLP:journals/corr/abs-2403-05054}, and multi-marginal (partial) OT \citep{LinHCJ22,DBLP:conf/icml/KosticSP22,DBLP:journals/siamjo/Carlier22,DBLP:conf/aistats/LeNN0H22}.
However, OT with compositional constraints, such as SeqOT, is not one of these variants.
Thus, the theoretical analysis of the Sinkhorn algorithm for SeqOT requires the development of new approaches that differ from those used in existing analyses.
%
%
%
\section{Conclusion}\label{sec:concllusion}
We formulate the entropic regularization of sequentially composed optimal transports, and introduce the Sinkhorn algorithm for them. We prove its exponential convergence to the optimal solution with respect to the Hilbert metric. For the case of $M = 2$, we provide a complexity analysis of the Sinkhorn algorithm. 

As a future work, we plan to evaluate its efficiency through numerical experiments, compared to the existing algorithm~\citep{WatanabeIsobe}. 
Proving a similar worst-case complexity result for the cases $M > 2$ is also interesting future work. Additionally, we plan to extend the Sinkhorn algorithm to string diagrams that have the sequential composition and the parallel composition. 


\bibliography{colt2025.bib}

\appendix


\section{Proof of Prop.~\ref{prop:dual_regularized_seq_composed_optimal_transport}}
\label{app:proof_prop_dual_regularized_seq_composed_optimal_transport}
In this section, we assume the setting of Def.~\ref{def:regularized_seq_composed_optimal_transport}.
\begin{definition}
  \label{def:lagrangian}
  The \emph{Lagrangian} $\laglange\Big( \big(\ind{\vect{P}}{i}\big)_{i\in \nset{M}}, \big(\ind{\vect{f}}{i}\big)_{i\in \nset{M+1}}\Big)$ is given by 
  \begin{align*}
    &\bigg(\sum_{i=1}^M \left\langle \ind{\vect{C}}{i}, \ind{\vect{P}}{i} \right\rangle -\epsilon\ent{\ind{\vect{P}}{i}} \bigg) + \left\langle\ind{\vect{f}}{1},\, \vect{a}- \ind{\vect{P}}{1}\vect{1}_{m_2} \right\rangle + \left\langle\ind{\vect{f}}{M+1},\, \vect{b}- {\ind{\vect{P}}{M}}^\top \vect{1}_{m_M}\right\rangle   \\
    &+ \sum_{i=2}^{M} \left\langle\ind{\vect{f}}{i}, \, {\ind{\vect{P}}{i-1}}^{\top}\vect{1}_{m_{i-1}}-\ind{\vect{P}}{i}\vect{1}_{m_{i+1}} \right\rangle. 
  \end{align*}
\end{definition}

We refer~\cite{boyd2004convex} as a reference for the Lagrange duality, including the definition of the Lagrangian. 

\begin{lemma}
  \label{lem:optimal_matrix}
  Assume that we fix the tuple $\big(\ind{\vect{f}}{i}\big)_{i\in \nset{M+1}}$. 
  Consider the matrices $\big(\ind{\opt{\vect{P}}}{i}\big)_{i\in \nset{M}}$ that are defined by 
  \begin{align*}
    \ind{\opt{P}}{i}_{jk} &\defeq \exp\Big( \big(\ind{f}{i}_j- \ind{f}{i+1}_k-\ind{C}{i}_{jk}\big)/\epsilon\Big) &&\quad\text{for all $i\in \nset{M-1}$, $j\in \nset{m_i}$, and $k\in \nset{m_{i+1}}$},\\
    \ind{\opt{P}}{M}_{jk} &\defeq \exp\Big( \big(\ind{f}{M}_j + \ind{f}{M+1}_k-\ind{C}{M}_{jk}\big)/\epsilon\Big) &&\quad\text{for all $j\in \nset{m_{M}}$, and $k\in \nset{m_{M+1}}$}.
  \end{align*}
  The matrices $\big(\ind{\opt{\vect{P}}}{i}\big)_{i\in \nset{M}}$ achieve the minimum of the Lagrangian as follows: 
  \begin{align*}
    \min_{\big(\ind{\vect{P}}{i}\big)_{i\in \nset{M}}} \laglange\Big( \big(\ind{\vect{P}}{i}\big)_{i\in \nset{M}}, \big(\ind{\vect{f}}{i}\big)_{i\in \nset{M+1}}\Big) =  \laglange\Big( \big(\ind{\opt{\vect{P}}}{i}\big)_{i\in \nset{M}}, \big(\ind{\vect{f}}{i}\big)_{i\in \nset{M+1}}\Big).
  \end{align*}
\end{lemma}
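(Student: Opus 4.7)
The plan is to recognize that the Lagrangian, as a function of the plans $(\ind{\vect{P}}{i})_{i\in \nset{M}}$ with the dual variables $(\ind{\vect{f}}{i})_{i\in \nset{M+1}}$ held fixed, decomposes into a sum of strictly convex functions—one per entry $\ind{P}{i}_{jk}$—because the only nonlinear contribution is the negative entropy term $-\epsilon\ent{\ind{\vect{P}}{i}} = \epsilon\sum_{jk}\ind{P}{i}_{jk}(\log \ind{P}{i}_{jk} - 1)$, which is strictly convex on $\Rnneg$. Consequently, it suffices to find the unique critical point by setting $\partial \laglange/\partial \ind{P}{i}_{jk} = 0$ for every $i\in \nset{M}$ and every pair $(j,k)$, and to verify that the minimizer coincides with the matrices given in the statement.

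The concrete calculation I would carry out is as follows. First, I would rewrite the Lagrangian from Def.~\ref{def:lagrangian} in fully expanded (elementwise) form, identifying the coefficient of each $\ind{P}{i}_{jk}$ coming from (i) the cost $\ind{C}{i}_{jk}$, (ii) the entropy, contributing $\epsilon\log \ind{P}{i}_{jk}$ upon differentiation, and (iii) the constraint terms. For the constraint terms I would treat the three regimes separately: for $i = 1$, the variable appears only in $-\langle \ind{\vect{f}}{1},\ind{\vect{P}}{1}\vect{1}_{m_2}\rangle$ and in $\langle \ind{\vect{f}}{2},(\ind{\vect{P}}{1})^\top\vect{1}_{m_1}\rangle$, giving the linear coefficients $-\ind{f}{1}_j+\ind{f}{2}_k$; for $2\leq i \leq M-1$ the symmetric pattern yields $-\ind{f}{i}_j+\ind{f}{i+1}_k$; and for $i = M$ the term $-\langle \ind{\vect{f}}{M+1},(\ind{\vect{P}}{M})^\top\vect{1}_{m_M}\rangle$ flips a sign so that the coefficient becomes $-\ind{f}{M}_j-\ind{f}{M+1}_k$. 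Solving $\epsilon\log \ind{P}{i}_{jk}+\ind{C}{i}_{jk}-\ind{f}{i}_j+\ind{f}{i+1}_k=0$ for $i<M$, and the analogous equation with a sign flip for $i = M$, produces precisely the formulas defining $\ind{\opt{P}}{i}_{jk}$.

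The remaining work is to justify that this critical point is a global minimum. Since each $\ind{P}{i}_{jk}\mapsto \epsilon \ind{P}{i}_{jk}(\log \ind{P}{i}_{jk}-1)$ is strictly convex and tends to $+\infty$ as $\ind{P}{i}_{jk}\to\infty$, while all remaining terms are linear in the entries, the Lagrangian is strictly convex and coercive in each coordinate; its unique stationary point is therefore the global minimizer. Substituting the explicit formula back into the Lagrangian will then recover the dual objective of Prop.~\ref{prop:dual_regularized_seq_composed_optimal_transport}, although that consolidation step is not strictly needed for the present lemma.

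I expect no genuine obstacle here; the only delicate point is the bookkeeping at the two boundary indices $i=1$ and $i=M$, where the constraint sum $\sum_{i=2}^{M}$ contributes asymmetrically and where the sign of $\ind{\vect{f}}{M+1}$ differs from the interior pattern. Keeping those sign conventions straight is what distinguishes the two displayed cases in the statement, and it is the place where a careless calculation would introduce an error.
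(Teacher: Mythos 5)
Your proposal is correct and follows essentially the same route as the paper: expand the Lagrangian of Def.~\ref{def:lagrangian}, solve the first-order conditions $\partial \laglange/\partial \ind{P}{i}_{jk}=0$ coordinate-wise (with the sign asymmetry at $i=M$ coming from the $\ind{\vect{f}}{M+1}$ term), and read off the stated formulas. Your explicit strict-convexity/coercivity argument for why the stationary point is the global minimizer is a slightly more careful justification than the paper's brief appeal to the KKT theorem, but it is not a different approach.
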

\begin{proof}
By the Karush--Kuhn--Tucker (KKT) theorem, a minimal point achieves the minimum. 
It thus suffices to solve the equations $\partial \laglange/\partial \ind{P}{i}_{jk} = 0 $ for all $i, j, k$. 
The derivatives are given by 
\begin{align*}
\frac{\partial \laglange}{\ind{P}{i}_{jk}} &= \ind{C}{i}_{jk} + \epsilon \log \ind{P}{i}_{jk} - \ind{f}{i}_j + \ind{f}{i+1}_k &&\quad\text{for all $i\in \nset{M-1}$, $j\in \nset{m_i}$, and $k\in \nset{m_{i+1}}$},\\
\frac{\partial \laglange}{\ind{P}{M}_{jk}} &= \ind{C}{M}_{jk} + \epsilon \log \ind{P}{M}_{jk} - \ind{f}{M}_j - \ind{f}{M+1}_k &&\quad\text{for all $j\in \nset{m_{M}}$, and $k\in \nset{m_{M+1}}$}.
\end{align*}
We now obtain the  matrices $\big(\ind{\opt{\vect{P}}}{i}\big)_{i\in \nset{M}}$ by solving the equations. 
\end{proof}

\paragraph{Proof of Prop.~\ref{prop:dual_regularized_seq_composed_optimal_transport}}
Since the primal problem satisfies the Slater's condition, the strong duality holds (see e.g.~\cite{boyd2004convex}).
Thus, it suffices to derive the dual problem by substituting the matrices $(\ind{\opt{\vect{P}}}{i})_{i\in \nset{M}}$ that are defined in Lem.~\ref{lem:optimal_matrix} into the Lagrangian.
For any $i\in \nset{M-1}$, the entropy $\ent{\ind{\opt{\vect{P}}}{i}}$ is given by 
\begin{align*}
  \ent{\ind{\opt{\vect{P}}}{i}}&\defeq-\sum_{j=1}^{m_i}\sum_{k=1}^{m_{i+1}} \ind{\opt{P}}{i}_{jk}\Big(\big(\log \ind{\opt{P}}{i}_{jk}\big) - 1\Big)\\
  &= -\sum_{j=1}^{m_i}\sum_{k=1}^{m_{i+1}} \Bigg(\frac{\ind{\opt{P}}{i}_{jk}}{\epsilon}\Big(\ind{f}{i}_j- \ind{f}{i+1}_k-\ind{C}{i}_{jk}\Big) - \ind{\opt{P}}{i}_{jk} \Bigg),
\end{align*}
and the entropy $\ent{\ind{\opt{\vect{P}}}{M}}$ is given by 
\begin{align*}
  \ent{\ind{\opt{\vect{P}}}{M}}&\defeq-\sum_{j=1}^{m_M}\sum_{k=1}^{m_{M+1}} \ind{\opt{P}}{M}_{jk}\Big(\big(\log \ind{\opt{P}}{M}_{jk}\big) - 1\Big)\\
  &= -\sum_{j=1}^{m_M}\sum_{k=1}^{m_{M+1}} \Bigg(\frac{\ind{\opt{P}}{M}_{jk}}{\epsilon}\Big(\ind{f}{M}_j + \ind{f}{M+1}_k-\ind{C}{M}_{jk}\Big) - \ind{\opt{P}}{M}_{jk} \Bigg).
\end{align*}
Therefore, we obtain the following equations: 
\begin{align*}
  &\laglange\Big( \big(\ind{\opt{\vect{P}}}{i}\big)_{i\in \nset{M}}, \big(\ind{\vect{f}}{i}\big)_{i\in \nset{M+1}}\Big) \\
  =& -\epsilon \sum^{M}_{i=1} \sum_{j=1}^{m_i}\sum_{k=1}^{m_{i+1}} \ind{\opt{P}}{i}_{jk} + \left\langle \ind{\vect{f}}{1}, \vect{a}\right\rangle + \left\langle \ind{\vect{f}}{M+1}, \vect{b}\right\rangle\\
  =&\left\langle \ind{\vect{f}}{1}, \vect{a}\right\rangle + \left\langle \ind{\vect{f}}{M+1}, \vect{b}\right\rangle - \epsilon \Bigg(\sum^{m_{M}}_{j=1}\sum^{m_{M+1}}_{k=1} \exp\Big(\big(\ind{f}{M}_j + \ind{f}{M+1}_k-\ind{C}{M}_{jk}\big)/\epsilon\Big) \\
  & + \sum_{i=1}^{M-1} \sum^{m_i}_{j=1}\sum^{m_{i+1}}_{k=1} \exp\Big( \big(\ind{f}{i}_j - \ind{f}{i+1}_k-\ind{C}{i}_{jl}\big)/\epsilon\Big) \Bigg). &\blacksquare
\end{align*}


\section{Proof of Lem.~\ref{lem:ch_optimal_matrix}}
\label{sec:proofChOTmat}
\paragraph{($\Rightarrow$)}
  Assume that the transportation plans $\big(\ind{\vect{P}}{i}\big)_{i\in \nset{M}}$ are optimal. 
  By the KKT theorem, the derivatives $\partial \laglange/\partial \ind{P}{i}_{jk}$ of the Lagrangian $L$ that is defined in Def.~\ref{def:lagrangian} becomes zero for all $i\in \nset{M}$, $j\in \nset{m_i}$, and $k\in \nset{m_{i+1}}$.
  We can thus that 
  \begin{align*}
    \ind{P}{i}_{jk} &= \exp\Big( \big(\ind{\opt{f}}{i}_j- \ind{\opt{f}}{i+1}_k-\ind{C}{i}_{jk}\big)/\epsilon\Big) &&\quad\text{for all $i\in \nset{M-1}$, $j\in \nset{m_i}$, and $k\in \nset{m_{i+1}}$},\\
    \ind{P}{M}_{jk} &= \exp\Big( \big(\ind{\opt{f}}{M}_j + \ind{\opt{f}}{M+1}_k-\ind{C}{M}_{jk}\big)/\epsilon\Big) &&\quad\text{for all $j\in \nset{m_{M}}$, and $k\in \nset{m_{M+1}}$},
  \end{align*}
  where the vectors $\big(\ind{\opt{\vect{f}}}{i}\big)_{i\in \nset{M+1}}$ are optimal with $\big(\ind{\vect{P}}{i}\big)_{i\in \nset{M}}$. 
  By defining 
  \begin{align*}
    \ind{\vect{x}}{i} &\defeq \exp\big(\ind{\vect{f}}{i}/\epsilon\big) &&\quad\text{for all $i\in \nset{M}$},\\
    \ind{\vect{x}}{M+1} &\defeq \exp\big(\ind{\vect{f}}{M+1}/\epsilon\big),
  \end{align*}
  we can conclude that 
  \begin{align*}
    \ind{\vect{P}}{i} &= \diag{\ind{\vect{x}}{i}}\ind{\vect{K}}{i}\diag{\frac{\vect{1}_{m_{i+1}}}{\ind{\vect{x}}{i+1}}} &\text{ for any $i\in \nset{M-1}$,}\\
    \ind{\vect{P}}{M} &= \diag{\ind{\vect{x}}{M}}\ind{\vect{K}}{M}\diag{\ind{\vect{x}}{M+1}}.
  \end{align*}
  \paragraph{($\Leftarrow$)}
  Assume that the vectors $\big(\ind{\vect{x}}{i}\big)_{i\in \nset{M}}$ and transportation plans $\big(\ind{\vect{P}}{i}\big)_{i\in \nset{M}}$ satisfy the following equations: 
  \begin{align*}
    \ind{\vect{P}}{i} &= \diag{\ind{\vect{x}}{i}}\ind{\vect{K}}{i}\diag{\frac{\vect{1}_{m_{i+1}}}{\ind{\vect{x}}{i+1}}} &\text{ for any $i\in \nset{M-1}$,}\\
    \ind{\vect{P}}{M} &= \diag{\ind{\vect{x}}{M}}\ind{\vect{K}}{M}\diag{\ind{\vect{x}}{M+1}}.
  \end{align*}
  By definining new vectors $\ind{\vect{f}}{i} \defeq \epsilon \log\big(\ind{\vect{x}}{i}\big)$ for all $i\in \nset{M+1}$, we can write the transportation plans $\big(\ind{\vect{P}}{i}\big)_{i\in \nset{M}}$ as follows:
  \begin{align*}
    \ind{P}{i}_{jk} &= \exp\Big( \big(\ind{f}{i}_j- \ind{f}{i+1}_k-\ind{C}{i}_{jk}\big)/\epsilon\Big) &&\quad\text{for all $i\in \nset{M-1}$, $j\in \nset{m_i}$, and $k\in \nset{m_{i+1}}$},\\
    \ind{P}{M}_{jk} &= \exp\Big( \big(\ind{f}{M}_j + \ind{f}{M+1}_k-\ind{C}{M}_{jk}\big)/\epsilon\Big) &&\quad\text{for all $j\in \nset{m_{M}}$, and $k\in \nset{m_{M+1}}$}.
  \end{align*}
  Then, we can see that the derivatives $\partial \laglange/\partial \ind{P}{i}_{jk}$ becomes zero for all $i\in \nset{M}$, $j\in \nset{m_i}$, and $k\in \nset{m_{i+1}}$, 
  and we can conclude that the transportation plans $\big(\ind{\vect{P}}{i}\big)_{i\in \nset{M}}$ are optimal by the KKT theorem. \qquad $\blacksquare$

\section{Proof of Thm.~\ref{thm:global_convergence_2} and Thm.~\ref{thm:global_convergence}}
\label{sec:proof_convergences}
\begin{lemma}
  \label{lem:hmet_constant}
  Let $\vect{v}, \vect{v}', \vect{c}\in \Rpos^{m}$. The following equation holds: 
  \begin{align*}
    &\hmet(\vect{c}\odot \vect{v}, \vect{c}\odot \vect{v}') = \hmet(\vect{v}, \vect{v}'). &&\blacksquare
  \end{align*}
\end{lemma}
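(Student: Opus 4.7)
The plan is a one-line computation straight from the definition of the Hilbert pseudometric in Def.~\ref{def:hmet}. Since $\vect{c}\in\Rpos^{m}$, every component $c_i$ is strictly positive, so in the ratios $(c_i v_i)/(c_i v'_i)$ the common factor $c_i$ cancels.

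Concretely, I would expand
\begin{align*}
\hmet(\vect{c}\odot\vect{v},\,\vect{c}\odot\vect{v}')
&= \log\!\Bigl(\max_{i\in\nset{m}}\tfrac{c_i v_i}{c_i v'_i}\;\cdot\;\max_{j\in\nset{m}}\tfrac{c_j v'_j}{c_j v_j}\Bigr)\\
&= \log\!\Bigl(\max_{i\in\nset{m}}\tfrac{v_i}{v'_i}\;\cdot\;\max_{j\in\nset{m}}\tfrac{v'_j}{v_j}\Bigr)
= \hmet(\vect{v},\vect{v}'),
\end{align*}
where the middle equality uses that $c_i>0$ so the ratios are well defined and the $c_i$ factors cancel. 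There is no real obstacle here; the only thing to mention explicitly is that positivity of $\vect{c}$ is what makes the cancellation (and the ratios themselves) legitimate. This invariance is precisely the scale-invariance property of the Hilbert projective pseudometric, and it is the structural reason we will later be able to absorb the component-wise factors appearing in the Sinkhorn updates when applying Birkhoff's inequality (Lem.~\ref{lem:Birkhoff}) in the convergence proofs.
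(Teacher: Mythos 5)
Your computation is correct and is exactly the immediate cancellation argument the paper intends: the lemma is stated with no written proof precisely because, as you show, the factors $c_i>0$ cancel inside both maxima in Def.~\ref{def:hmet}. Nothing is missing, and your remark that positivity of $\vect{c}$ is what legitimizes the cancellation is the only point worth making explicit.
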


\begin{lemma}
  \label{lem:hmet_sqrt}
  Let $\vect{v}, \vect{v}\in \Rpos^{m}$. The following equation holds: 
  \begin{align*}
    &\hmet\big(\vect{v}^{1/2}, {\vect{v}'}^{1/2}\big) = \frac{1}{2}\hmet(\vect{v}, \vect{v}'). &&\blacksquare
  \end{align*}
\end{lemma}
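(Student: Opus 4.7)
The plan is to unfold the definition of the Hilbert pseudometric and exploit the fact that the square root is a strictly increasing bijection on $\Rpos$. Concretely, I would start from
\[
\hmet(\vect{v}^{1/2}, {\vect{v}'}^{1/2}) = \log\Bigl(\max_{i\in \nset{m}}\frac{v_i^{1/2}}{{v'_i}^{1/2}} \cdot \max_{j\in \nset{m}}\frac{{v'_j}^{1/2}}{v_j^{1/2}}\Bigr),
\]
rewrite each ratio as $(v_i/v'_i)^{1/2}$, and pull the exponent $1/2$ outside of $\max$, using that $x\mapsto x^{1/2}$ is monotone on $\Rpos$.

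Next, I would combine the two square roots into a single square root of a product, which is valid because both factors are strictly positive (as $\vect{v},\vect{v}'\in\Rpos^m$), and then use $\log(x^{1/2}) = \tfrac{1}{2}\log(x)$ to extract the factor of $\tfrac{1}{2}$. Matching the resulting expression against the definition of $\hmet(\vect{v},\vect{v}')$ immediately yields the claim.

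There is essentially no obstacle here: the statement is purely algebraic manipulation of the definition in Def.~\ref{def:hmet}, relying only on monotonicity of $(\cdot)^{1/2}$ on $\Rpos$, the identity $\sqrt{ab}=\sqrt{a}\sqrt{b}$ for positive reals, and the logarithm rule $\log(x^{1/2})=\tfrac{1}{2}\log(x)$. Thus the lemma admits a one-line computation and will be used later as a convenient accounting tool whenever the Sinkhorn update in Def.~\ref{def:Sinkhorn} takes a square root (e.g.\ in the update of $\ind{\vect{u}}{n+1,i}$ at interior indices), in combination with Lem.~\ref{lem:hmet_constant} and the Birkhoff inequality (Lem.~\ref{lem:Birkhoff}) to control the Hilbert distance between iterates and the optimum as in Thm.~\ref{thm:global_convergence_2} and Thm.~\ref{thm:global_convergence}.
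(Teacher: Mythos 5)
Your proof is correct and is exactly the computation the paper has in mind: the lemma is stated with no written proof (it is treated as an immediate consequence of Def.~\ref{def:hmet}), and your argument via monotonicity of $x\mapsto x^{1/2}$ on $\Rpos$, the identity $\sqrt{ab}=\sqrt{a}\sqrt{b}$, and $\log(x^{1/2})=\tfrac{1}{2}\log x$ is the standard one-line justification. Nothing is missing.
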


\begin{lemma}
  \label{lem:hmet_frac_triangle}
  Let $\vect{u}, \vect{v}, \vect{w}, \vect{x}\in \Rpos^{m}$. The following inequality holds: 
  \begin{align*}
    &\hmet\Big(\frac{\vect{u}}{\vect{x}}, \frac{\vect{v}}{\vect{w}}\Big) \leq \hmet(\vect{u}, \vect{v}) + \hmet(\vect{w}, \vect{x}). &&\blacksquare
  \end{align*}
\end{lemma}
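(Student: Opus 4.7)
The plan is a short direct computation from the definition of $\hmet$ together with submultiplicativity of coordinatewise maxima. First I would unfold the left-hand side using Def.~\ref{def:hmet}:
\begin{align*}
\hmet\Big(\frac{\vect{u}}{\vect{x}}, \frac{\vect{v}}{\vect{w}}\Big)
= \log\Bigg(\max_{i}\frac{u_i/x_i}{v_i/w_i}\cdot \max_{j}\frac{v_j/w_j}{u_j/x_j}\Bigg)
= \log\Bigg(\max_{i}\frac{u_i\, w_i}{v_i\, x_i}\cdot \max_{j}\frac{v_j\, x_j}{u_j\, w_j}\Bigg).
\end{align*}

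Next I would factor each ratio as a product of two separate ratios, namely $\frac{u_i w_i}{v_i x_i}=\frac{u_i}{v_i}\cdot\frac{w_i}{x_i}$ and $\frac{v_j x_j}{u_j w_j}=\frac{v_j}{u_j}\cdot\frac{x_j}{w_j}$, and then apply the elementary inequality $\max_i(a_i b_i)\le \max_i a_i\cdot \max_i b_i$ valid for positive sequences. This gives
\begin{align*}
\max_{i}\frac{u_i w_i}{v_i x_i}\cdot \max_{j}\frac{v_j x_j}{u_j w_j}
\le \Bigg(\max_{i}\frac{u_i}{v_i}\cdot \max_{j}\frac{v_j}{u_j}\Bigg)\cdot \Bigg(\max_{i}\frac{w_i}{x_i}\cdot \max_{j}\frac{x_j}{w_j}\Bigg).
\end{align*}

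Finally I would take logarithms, using monotonicity of $\log$ and the fact that $\log$ turns the product into a sum, which yields exactly $\hmet(\vect{u},\vect{v}) + \hmet(\vect{w},\vect{x})$ on the right-hand side, completing the proof. There is no genuine obstacle here: the only thing to be careful about is that all vectors lie in $\Rpos^m$ so that every ratio is well defined and positive, which is guaranteed by hypothesis; the core step is just submultiplicativity of the coordinatewise maximum applied twice (once to the $\max_i$ and once to the $\max_j$ factor). Note also that the statement is the natural ``triangle-type'' companion to Lem.~\ref{lem:hmet_constant} and Lem.~\ref{lem:hmet_sqrt}, and in fact one can view it as the symmetry $\hmet(\vect{u},\vect{v})=\hmet(\vect{v},\vect{u})$ combined with projective invariance, though the direct computation above is the cleanest route.
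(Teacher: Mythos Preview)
Your proof is correct; the paper itself provides no argument for this lemma (it is stated with a terminal $\blacksquare$ and left as routine), so there is nothing to compare against. The direct computation you give---unfolding Def.~\ref{def:hmet}, factoring each ratio, applying $\max_i(a_ib_i)\le\max_i a_i\cdot\max_i b_i$ twice, and taking logarithms---is exactly the natural route and fills the gap cleanly.
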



\paragraph{Proof of Thm.~\ref{thm:global_convergence_2}}
By Cor.~\ref{cor:ch_optimal_matrix}, we obtain optimal transportation plans $\big(\ind{\opt{\vect{P}}}{i}\big)_{i\in \nset{2}}$ that are given by 
\begin{align*}
  \ind{\vect{P}}{1} &\defeq \diag{\ind{\opt{\vect{u}}}{1}}\ind{\vect{K}}{1}\diag{\frac{\vect{1}_{m_2}}{\ind{\opt{\vect{u}}}{2}}}, \quad \ind{\vect{P}}{2} \defeq \diag{\ind{\opt{\vect{u}}}{2}}\ind{\vect{K}}{2}\diag{\ind{\opt{\vect{u}}}{3}}.
\end{align*}

For each $n\in \nat$, we evaluate the distances $\hmet\Big(\ind{\vect{u}}{n+1, 1}, \ind{\opt{\vect{u}}}{1}\Big), \hmet\Big(\ind{\vect{u}}{n+1, 3}, \ind{\opt{\vect{u}}}{3}\Big), \hmet\Big(\ind{\vect{u}}{n+1, 2}, \ind{\opt{\vect{u}}}{2}\Big) $ in the following. 
\paragraph{Distances $\hmet\Big(\ind{\vect{u}}{n+1, 1}, \ind{\opt{\vect{u}}}{1}\Big)$ and $\hmet\Big(\ind{\vect{u}}{n+1, 3}, \ind{\opt{\vect{u}}}{3}\Big)$:}
Since $\ind{\opt{\vect{P}}}{1}$ satisfies the constraint $\ind{\opt{\vect{P}}}{1}\vect{1}_{m_2} = \vect{a}$, we have 
\begin{align*}
  \ind{\opt{\vect{P}}}{1}\vect{1}_{m_2} = \diag{\ind{\opt{\vect{u}}}{1}}\ind{\vect{K}}{1}\diag{\frac{\vect{1}_{m_2}}{\ind{\opt{\vect{u}}}{2}}}\vect{1}_{m_2} = \ind{\opt{\vect{u}}}{1}\odot \ind{\vect{K}}{1}\frac{\vect{1}_{m_2}}{\ind{\opt{\vect{u}}}{2}} = \vect{a}.
\end{align*}
Therefore, we know that 
\begin{equation}
  \label{eq:ch_optu1}
  \ind{\opt{\vect{u}}}{1} = \frac{\vect{a}}{ \ind{\vect{K}}{1}\frac{\vect{1}_{m_2}}{\ind{\opt{\vect{u}}}{2}}}. 
\end{equation}
Therefore, we can evaluate the distance $\hmet\Big(\ind{\vect{u}}{n+1, 1}, \ind{\opt{\vect{u}}}{1}\Big)$ as follows: 
\begin{align*}
  \hmet\Big(\ind{\vect{u}}{n+1, 1}, \ind{\opt{\vect{u}}}{1}\Big) &= \hmet\Bigg(\frac{\vect{a}}{ \ind{\vect{K}}{1}\frac{\vect{1}_{m_2}}{\ind{{\vect{u}}}{n+1, 2}}}, \frac{\vect{a}}{ \ind{\vect{K}}{1}\frac{\vect{1}_{m_2}}{\ind{\opt{\vect{u}}}{2}}}\Bigg) &&\text{by Eq.~\ref{eq:ch_optu1}}\\
  &= \hmet\Bigg(\frac{\vect{1}_{m_1}}{ \ind{\vect{K}}{1}\frac{\vect{1}_{m_2}}{\ind{{\vect{u}}}{n+1, 2}}}, \frac{\vect{1}_{m_1}}{ \ind{\vect{K}}{1}\frac{\vect{1}_{m_2}}{\ind{\opt{\vect{u}}}{2}}}\Bigg) &&\text{by Lem.~\ref{lem:hmet_constant} }\\
  &= \hmet\bigg(\ind{\vect{K}}{1}\frac{\vect{1}_{m_2}}{\ind{{\vect{u}}}{n+1, 2}},  \ind{\vect{K}}{1}\frac{\vect{1}_{m_2}}{\ind{\opt{\vect{u}}}{2}}\bigg) &&\text{by Def.~\ref{def:hmet} }\\
  &\leq \lambda\big(\ind{\vect{K}}{1}\big)\cdot \hmet\bigg(\frac{\vect{1}_{m_2}}{\ind{{\vect{u}}}{n+1, 2}},  \frac{\vect{1}_{m_2}}{\ind{\opt{\vect{u}}}{2}}\bigg) &&\text{by Lem.~\ref{lem:Birkhoff} }\\
  &= \lambda(\ind{\vect{K}}{1})\cdot \hmet\Big(\ind{{\vect{u}}}{n+1, 2}, \ind{\opt{\vect{u}}}{2}\Big) &&\text{by Def.~\ref{def:hmet} }\\
  &\leq D\cdot  \hmet\Big(\ind{{\vect{u}}}{n+1, 2}, \ind{\opt{\vect{u}}}{2}\Big) 
\end{align*}
Similarly, we can easily obtain the following inequality: 
\begin{align*}
  \hmet\Big(\ind{\vect{u}}{n+1, 3}, \ind{\opt{\vect{u}}}{3}\Big) \leq D\cdot  \hmet\Big(\ind{{\vect{u}}}{n+1, 2}, \ind{\opt{\vect{u}}}{2}\Big).
\end{align*}
\paragraph{Distance $\hmet\Big(\ind{\vect{u}}{n+1, 2}, \ind{\opt{\vect{u}}}{2}\Big)$:}
By Lem.~\ref{lem:hmet_sqrt}, we know that 
\begin{align*}
  \hmet\Big(\ind{\vect{u}}{n+1, 2}, \ind{\opt{\vect{u}}}{2}\Big) = \frac{1}{2}\hmet\Bigg(\frac{(\ind{\vect{K}}{1})^{\top} \ind{\vect{u}}{n,1}}{\ind{\vect{K}}{2}\ind{\vect{u}}{n,3}}, \frac{(\ind{\vect{K}}{1})^{\top} \ind{\opt{\vect{u}}}{1}}{\ind{\vect{K}}{2}\ind{\opt{\vect{u}}}{3}}\Bigg).
\end{align*}
By Lem.~\ref{lem:hmet_frac_triangle}, we have
\begin{align*}
  \hmet\Big(\ind{\vect{u}}{n+1, 2}, \ind{\opt{\vect{u}}}{2}\Big) \leq \frac{1}{2}\Bigg( \hmet\bigg((\ind{\vect{K}}{1})^{\top} \ind{\vect{u}}{n,1},  (\ind{\vect{K}}{1})^{\top} \ind{\opt{\vect{u}}}{1}\bigg) + \hmet\bigg(\ind{\vect{K}}{3}\ind{\vect{u}}{n,3}, \ind{\vect{K}}{3}\ind{\opt{\vect{u}}}{3}\bigg) \Bigg).
\end{align*}
By Lem.~\ref{lem:Birkhoff}, we can conclude that 
\begin{align*}
  \hmet\Big(\ind{\vect{u}}{n+1, 2}, \ind{\opt{\vect{u}}}{2}\Big) \leq D\cdot \max\bigg(\hmet\Big(\ind{\vect{u}}{n,1},  \ind{\opt{\vect{u}}}{1}\Big), \hmet\Big(\ind{\vect{u}}{n,3},  \ind{\opt{\vect{u}}}{3}\Big)\bigg). 
\end{align*}
Together with these inequalities, we can prove the statement by the induction of $n\in \nat\backslash \{0\}$. $\blacksquare$

\paragraph{Proof of Thm.~\ref{thm:global_convergence}}
By Cor.~\ref{cor:ch_optimal_matrix}, we obtain optimal transportation plans $(\ind{\opt{\vect{P}}}{i})_{i\in \nset{M}}$ that are given by 
\begin{align*}
  \ind{\vect{P}}{i} &\defeq \diag{\ind{\opt{\vect{u}}}{i}}\ind{\vect{K}}{i}\diag{\frac{\vect{1}_{m_2}}{\ind{\opt{\vect{u}}}{i}}}, \quad \ind{\vect{P}}{M} \defeq \diag{\ind{\opt{\vect{u}}}{M}}\ind{\vect{K}}{M}\diag{\ind{\opt{\vect{u}}}{M+1}},
\end{align*}
for any $i\in \nset{M-1}$.

For each $n\in \nat$, we evaluate the distances $\hmet\Big(\ind{\vect{u}}{n+1, i}, \ind{\opt{\vect{u}}}{i}\Big)$ for each $i\in \nset{M}$ in the following. 
\paragraph{Distances $\hmet\Big(\ind{\vect{u}}{n+1, 1}, \ind{\opt{\vect{u}}}{i}\Big)$ for any $i\in \nset{2, M-1}$:}
By Lem.~\ref{lem:hmet_sqrt}, we get  
\begin{align*}
  \hmet\Big(\ind{\vect{u}}{n+1, i}, \ind{\opt{\vect{u}}}{i}\Big) &= \frac{1}{2}\hmet\Bigg(\frac{(\ind{\vect{K}}{i-1})^{\top} \ind{\vect{u}}{n,i-1}}{\ind{\vect{K}}{i}\frac{\vect{1}_{m_{i+1}}}{\ind{\vect{u}}{n,i+1}}}, \frac{(\ind{\vect{K}}{i-1})^{\top} \ind{\opt{\vect{u}}}{i-1}}{\ind{\vect{K}}{i}\frac{\vect{1}_{m_{i+1}}}{\ind{\opt{\vect{u}}}{i+1}}}\Bigg).
\end{align*}
By Lem.~\ref{lem:hmet_frac_triangle} and Lem.~\ref{lem:Birkhoff}, we obtain the following inequality
\begin{align*}
  \hmet\Big(\ind{\vect{u}}{n+1, i}, \ind{\opt{\vect{u}}}{i}\Big)\leq E\cdot \max\bigg(\hmet\Big(\ind{\vect{u}}{n, i-1}, \ind{\opt{\vect{u}}}{i-1}\Big),\, \hmet\Big(\ind{\vect{u}}{n, i+1}, \ind{\opt{\vect{u}}}{i+1}\Big)\bigg).
\end{align*}
\paragraph{Distance $\hmet\Big(\ind{\vect{u}}{n+1, M}, \ind{\opt{\vect{u}}}{M}\Big)$:}
By Lem.~\ref{lem:hmet_sqrt}, we get  
\begin{align*}
  \hmet\Big(\ind{\vect{u}}{n+1, M}, \ind{\opt{\vect{u}}}{M}\Big) &= \frac{1}{2}\hmet\Bigg(\frac{(\ind{\vect{K}}{M-1})^{\top} \ind{\vect{u}}{n,M-1}}{\ind{\vect{K}}{M}{\ind{\vect{u}}{n,M+1}}}, \frac{(\ind{\vect{K}}{M-1})^{\top} \ind{\opt{\vect{u}}}{M-1}}{\ind{\vect{K}}{M}{\ind{\opt{\vect{u}}}{M+1}}}\Bigg).
\end{align*}
By Lem.~\ref{lem:hmet_frac_triangle} and Lem.~\ref{lem:Birkhoff}, we obtain the following inequality
\begin{align*}
  \hmet\Big(\ind{\vect{u}}{n+1, M}, \ind{\opt{\vect{u}}}{M}\Big)\leq E\cdot \max\bigg(\hmet\Big(\ind{\vect{u}}{n, M-1}, \ind{\opt{\vect{u}}}{M-1}\Big),\, \hmet\Big(\ind{\vect{u}}{n, M+1}, \ind{\opt{\vect{u}}}{M+1}\Big)\bigg).
\end{align*}
\paragraph{Distances $\hmet\Big(\ind{\vect{u}}{n+1, 1}, \ind{\opt{\vect{u}}}{1}\Big)$ and $\hmet\Big(\ind{\vect{u}}{n+1, M+1}, \ind{\opt{\vect{u}}}{M+1}\Big)$:}
By the same argument of the proof of Thm.~\ref{thm:global_convergence_2}, we can easily show that
\begin{align}
  \hmet\Big(\ind{\vect{u}}{n+1, 1}, \ind{\opt{\vect{u}}}{1}\Big) &\leq E\cdot \hmet\Big(\ind{\vect{u}}{n+1, 2}, \ind{\opt{\vect{u}}}{2}\Big)\\
  \hmet\Big(\ind{\vect{u}}{n+1, 1}, \ind{\opt{\vect{u}}}{1}\Big) &\leq E\cdot \hmet\Big(\ind{\vect{u}}{n+1, M}, \ind{\opt{\vect{u}}}{M}\Big)
\end{align}
Since $E^{n+2} < E^{n+1}$ (note that $0\leq E < 1$), we obtain the desired inequality by the induction of $n\in \nat$. $\blacksquare$  

\section{Proof of Prop.~\ref{prop:dist_convergence_2} and Prop.~\ref{prop:dist_convergence}}
\label{sec:proofDistConvergences}

\paragraph{Proof of Prop.~\ref{prop:dist_convergence_2}} 
For each $n\in \nat\backslash \{0\}$, the following equations hold: 
\begin{align*}
  \hmet\Big(\vect{a},\ind{\vect{P}}{n, 1}\vect{1}_{m_2}\Big) &= \hmet\bigg(\vect{a},\ind{\vect{u}}{n, 1}\odot \ind{\vect{K}}{1}\frac{\vect{1}_{m_2}}{\ind{\vect{u}}{n, 2}} \bigg) &&\text{by definition}\\
  & = \hmet\bigg(\vect{a},\frac{\vect{a}}{\ind{\vect{K}}{1}\frac{\vect{1}_{m_2}}{\ind{\vect{u}}{n, 2}}}\odot \ind{\vect{K}}{1}\frac{\vect{1}_{m_2}}{\ind{\vect{u}}{n, 2}} \bigg) &&\text{by definition}\\
  & = \hmet\bigg(\ind{\vect{K}}{1}\frac{\vect{1}_{m_2}}{\ind{\vect{u}}{n, 2}} , \vect{1}_{m_1}\odot \ind{\vect{K}}{1}\frac{\vect{1}_{m_2}}{\ind{\vect{u}}{n, 2}} \bigg)&&\text{by Lem.~\ref{lem:hmet_constant}}\\
  & = 0
\end{align*}
Similarly, we can easily show that $\hmet\Big(\vect{b},(\ind{\vect{P}}{n, 2})^{\top}\vect{1}_{m_{2}}\Big) = 0$.

For each  $n\in \nat\backslash\{0\}$, we can see that $ \hmet\Big((\ind{\vect{P}}{n, 1})^{\top}\vect{1}_{m_1}, \ind{\vect{P}}{n, 2}\vect{1}_{m_3}\Big) =  \frac{1}{2}\hmet\Big(\ind{\vect{u}}{n+1,2},  \ind{\vect{u}}{n,2}\Big)$ as follows:  
\begin{align*}
  \hmet\Big((\ind{\vect{P}}{n, 1})^{\top}\vect{1}_{m_1}, \ind{\vect{P}}{n, 2}\vect{1}_{m_3}\Big) &= \hmet\Big( (\ind{\vect{K}}{1})^{\top}\ind{\vect{u}}{n, 1}\odot \frac{\vect{1}_{m_2}}{\ind{\vect{u}}{n, 2}},  \ind{\vect{u}}{n,2} \odot \ind{\vect{K}}{2}\ind{\vect{u}}{n, 3}\Big) &&\text{by definition }\\
  &= 2\hmet\Bigg(\bigg(\frac{(\ind{\vect{K}}{1})^{\top}\ind{\vect{u}}{n, 1}}{\ind{\vect{K}}{2}\ind{\vect{u}}{n, 3}}\bigg)^{1/2},  \ind{\vect{u}}{n,2}\Bigg)&&\text{by Lem.~\ref{lem:hmet_constant} and Lem.~\ref{lem:hmet_sqrt}}\\
  &= 2\hmet\Big(\ind{\vect{u}}{n+1,2},  \ind{\vect{u}}{n,2}\Big)&&\text{by definition }
\end{align*}
Let $(\ind{\opt{\vect{u}}}{i})_{i\in \nset{3}}$ be vectors that become an optimal solution $\big(\ind{\opt{\vect{f}}}{i}\big)_{i\in \nset{3}}$ of the Lagrange dual problem defined in Prop.~\ref{prop:dual_regularized_seq_composed_optimal_transport} by translating $\ind{\opt{\vect{f}}}{i}\defeq \epsilon \log(\ind{\opt{\vect{u}}}{i})$ for each $i\in \nset{3}$. 
By the triangle inequality and Thm.~\ref{thm:global_convergence_2}, we obtain the following inequalities and conclude the proof: 
\begin{align*}
  \hmet\Big(\ind{\vect{u}}{n+1,2},  \ind{\vect{u}}{n,2}\Big) &\leq \hmet\Big(\ind{\vect{u}}{n+1,2},  \ind{\opt{\vect{u}}}{2}\Big)  + \hmet\Big(\ind{\opt{\vect{u}}}{2},  \ind{\vect{u}}{n,2}\Big) \\
  &\leq D^{2n+1}\cdot d + D^{2n-1}\cdot d\\
  &= d(1+D^2)D^{2n-1} &&\blacksquare
\end{align*}

\paragraph{Proof of Prop.~\ref{prop:dist_convergence}} 
For each $n\in \nat\backslash \{0\}$, we can easily prove that 
\begin{align*}
    \hmet\Big(\vect{a},\ind{\vect{P}}{n, 1}\vect{1}_{m_2}\Big) = \hmet\Big(\vect{b},(\ind{\vect{P}}{n, 2})^{\top}\vect{1}_{m_{2}}\Big) = 0
\end{align*}
 in the same way of the proof of Prop.~\ref{prop:dist_convergence_2}.

For each  $n\in \nat\backslash\{0\}$ and $i\in \nset{1, M-2}$, we can see that \[\hmet\Big((\ind{\vect{P}}{n, i})^{\top}\vect{1}_{m_i}, \ind{\vect{P}}{n, i+1}\vect{1}_{m_{i+2}}\Big) =  \frac{1}{2}\hmet\Big(\ind{\vect{u}}{n+1,i+1},  \ind{\vect{u}}{n,i+1}\Big)\] as follows:  
\begin{align*}
  &\hmet\Big((\ind{\vect{P}}{n, i})^{\top}\vect{1}_{m_i}, \ind{\vect{P}}{n, i+1}\vect{1}_{m_{i+2}}\Big)\\
  =& \hmet\Big( \big(\ind{\vect{K}}{i}\big)^{\top}\ind{\vect{u}}{n, i}\odot \frac{\vect{1}_{m_{i+1}}}{\ind{\vect{u}}{n, i+1}},  \ind{\vect{u}}{n,i+1} \odot \ind{\vect{K}}{i+1}\frac{\vect{1}_{m_{i+2}}}{\ind{\vect{u}}{n, i+2}}\Big) &&\text{by definition }\\
  =& 2\hmet\Bigg(\bigg(\frac{(\ind{\vect{K}}{i})^{\top}\ind{\vect{u}}{n, i}}{\ind{\vect{K}}{i+1}\frac{\vect{1}_{i+2}}{\ind{\vect{u}}{n, i+2}}}\bigg)^{1/2},  \ind{\vect{u}}{n,i+1}\Bigg)&&\text{by Lem.~\ref{lem:hmet_constant} and Lem.~\ref{lem:hmet_sqrt}}\\
  =& 2\hmet\Big(\ind{\vect{u}}{n+1,i+1},  \ind{\vect{u}}{n,i+1}\Big)&&\text{by definition }
\end{align*}
By the triangle inequality and Thm.~\ref{thm:global_convergence}, we obtain the following inequality: 
\begin{align*}
  \hmet\Big((\ind{\vect{P}}{n, i})^{\top}\vect{1}_{m_i}, \ind{\vect{P}}{n, i+1}\vect{1}_{m_{i+2}}\Big) &\leq 2e(1+E)E^n. 
\end{align*}
Similarly, we can prove that $\hmet\Big((\ind{\vect{P}}{n, M-1})^{\top}\vect{1}_{m_{M-1}}, \ind{\vect{P}}{n, M}\vect{1}_{m_{M+1}}\Big) \leq 2e(1+E)E^n$. $\blacksquare$

\section{Proofs in $\S$\ref{sec:time_complexity}}
\label{sec:proofTimeComp}

\subsection{Proof of Lem.~\ref{lem:transportmat}}
By definition of the initial vectors, 
\begin{align*}
  \sum_{jk} \ind{P}{0,1}_{jk} &= \sum_{jk} \ind{u}{0,1}_j \ind{K}{1}_{jk} (1/\ind{u}{0,2}_k) = \frac{1}{\big\|\ind{\vect{K}}{1}\big\|} \sum_{jk} \ind{K}{1}_{jk} = 1,\\
  \sum_{jk} \ind{P}{0,2}_{jk} &= \sum_{jk} \ind{u}{0,2}_j \ind{K}{2}_{jk} \ind{u}{0,3}_k = \frac{1}{\big\|\ind{\vect{K}}{2}\big\|} \sum_{jk} \ind{K}{2}_{jk} = 1.
\end{align*}

For any $n > 0$, the following equations hold: for any $j\in \nset{m_1}$,
\begin{align*}
  \sum_{k} \ind{P}{n,1}_{jk} &=  \Bigg(\frac{\vect{a}}{\ind{\vect{K}}{1}\frac{\vect{1}_{m_2}}{\ind{\vect{u}}{n, 2}}}\Bigg)_{j} \cdot \Bigg(\ind{\vect{K}}{1}\frac{\vect{1}_{m_2}}{\ind{\vect{u}}{n, 2}}\Bigg)_j = a_j,
\end{align*} 
and for any $k\in \nset{m_3}$, 
\begin{align*}
  \sum_{j} \ind{P}{n,1}_{jk} &=  \Bigg(\ind{\vect{u}}{n, 2}\ind{\vect{K}}{2}\Bigg)_k \cdot \Bigg(\frac{\vect{b}}{(\ind{\vect{K}}{2})^{\top}{\ind{\vect{u}}{n, 2}}}\Bigg)_{k} = b_k. \quad \blacksquare
\end{align*}

\subsection{Proof of Lem.~\ref{lem:diffLagrange}}
By definition, we get the following equality:
\begin{align*}
  &\laglange\big((\ind{\vect{u}}{n+1, i})_{i\in \nset{3}}\big) - \laglange\big((\ind{\vect{u}}{n, i})_{i\in \nset{3}}\big)\\
=& \epsilon \vect{a}^{\top}\log\Big(\ind{\vect{u}}{n+1, 1}\Big) + \epsilon \vect{b}^{\top}\log\Big(\ind{\vect{u}}{n+1, 3}\Big) + \Big(\sum_{jk} \ind{P_{jk}}{n+1, 1} \Big) +  \Big(\sum_{jk} \ind{P_{jk}}{n+1, 2} \Big) \\
-& \epsilon \vect{a}^{\top}\log\Big(\ind{\vect{u}}{n, 1}\Big) - \epsilon \vect{b}^{\top}\log\Big(\ind{\vect{u}}{n, 3}\Big) - \Big(\sum_{jk} \ind{P_{jk}}{n, 1} \Big) - \Big(\sum_{jk} \ind{P_{jk}}{n, 2} \Big).
\end{align*}
By Lem.~\ref{lem:transportmat}, we know that 
\begin{align*}
  &\laglange\big((\ind{\vect{u}}{n+1, i})_{i\in \nset{3}}\big) - \laglange\big((\ind{\vect{u}}{n, i})_{i\in \nset{3}}\big)\\
=&\epsilon \vect{a}^{\top}\log\Big(\ind{\vect{u}}{n+1, 1}\Big) + \epsilon \vect{b}^{\top}\log\Big(\ind{\vect{u}}{n+1, 3}\Big)
- \epsilon \vect{a}^{\top}\log\Big(\ind{\vect{u}}{n, 1}\Big) - \epsilon \vect{b}^{\top}\log\Big(\ind{\vect{u}}{n, 3}\Big)\\
=&\epsilon \vect{a}^{\top}\log\Bigg(\frac{\ind{\vect{u}}{n+1, 1}}{\ind{\vect{u}}{n, 1}}\Bigg) +\epsilon \vect{b}^{\top}\log\Bigg(\frac{\ind{\vect{u}}{n+1, 3}}{\ind{\vect{u}}{n, 3}}\Bigg).
\end{align*}
By definition of the Sinkhorn iteration,
\begin{align*}
  \ind{\vect{u}}{n+1, 1} &= \frac{\vect{a}}{\ind{\vect{K}}{1}\frac{\vect{1}_{m_{2}}}{\ind{\vect{u}}{n+1,2}}} = \frac{\vect{a}}{\ind{\vect{K}}{1}\Big(\frac{\ind{\vect{K}}{3}\ind{\vect{u}}{n,3}}{(\ind{\vect{K}}{1})^{\top} \ind{\vect{u}}{n,1}}\Big)^{1/2}}\\
  &= \frac{\vect{a}}{\ind{\vect{K}}{1}\diag{\frac{\vect{1}_{m_2}}{\ind{\vect{u}}{n, 2}}}\diag{\ind{\vect{u}}{n, 2}}\Big(\frac{\ind{\vect{K}}{3}\ind{\vect{u}}{n,3}}{(\ind{\vect{K}}{1})^{\top} \ind{\vect{u}}{n,1}}\Big)^{1/2}}\\
  &= \frac{\vect{a}}{\ind{\vect{K}}{1}\diag{\frac{\vect{1}_{m_2}}{\ind{\vect{u}}{n, 2}}}\bigg(\frac{\diag{\ind{\vect{u}}{n, 2}}\ind{\vect{K}}{3}\ind{\vect{u}}{n,3}}{\diag{\frac{\vect{1}_{m_2}}{\ind{\vect{u}}{n, 2}}}(\ind{\vect{K}}{1})^{\top} \ind{\vect{u}}{n,1}}\bigg)^{1/2}}\\
  &= \frac{\vect{a}}{\ind{\vect{K}}{1}\diag{\frac{\vect{1}_{m_2}}{\ind{\vect{u}}{n, 2}}}\Big(\frac{\ind{\vect{P}}{n,2}\vect{1}_{m_3}}{{\vect{1}_{m_1}}^{\top}\ind{\vect{P}}{n,1}} \Big)^{1/2}}
\end{align*}
Therefore, we can get the following equalities
\begin{align*}
  \frac{\ind{\vect{u}}{n+1, 1}}{\ind{\vect{u}}{n, 1}} 
  &= \frac{\vect{a}}{\diag{\ind{\vect{u}}{n, 1}}\ind{\vect{K}}{1}\diag{\frac{\vect{1}_{m_2}}{\ind{\vect{u}}{n, 2}}}\Big(\frac{\ind{\vect{P}}{n,2}\vect{1}_{m_3}}{{\vect{1}_{m_1}}^{\top}\ind{\vect{P}}{n,1}} \Big)^{1/2}}\\
  &= \frac{\vect{a}}{\ind{\vect{P}}{n, 1}\Big(\frac{\ind{\vect{P}}{n,2}\vect{1}_{m_3}}{{\vect{1}_{m_1}}^{\top}\ind{\vect{P}}{n,1}} \Big)^{1/2}}
\end{align*}
This concludes that  
\begin{align*}
  \epsilon \vect{a}^{\top}\log\bigg(\frac{\ind{\vect{u}}{n+1, 1}}{\ind{\vect{u}}{n, 1}}\bigg)  &= \epsilon \kld[\Bigg]{\vect{a}}{\ind{\vect{P}}{n, 1}\bigg( \frac{\ind{\vect{P}}{n,2}\vect{1}_{m_3}}{\vect{1}_{m_1}^{\top}\ind{\vect{P}}{n, 1}}\bigg)^{1/2}}. 
\end{align*}
Similarly, 
\begin{align*}
  \ind{\vect{u}}{n+1, 3} &=  \frac{\vect{b}}{(\ind{\vect{K}}{2})^{\top}\ind{\vect{u}}{n+1,2}} = \frac{\vect{b}}{(\ind{\vect{K}}{2})^{\top}\Big(\frac{(\ind{\vect{K}}{1})^{\top} \ind{\vect{u}}{n,1}}{\ind{\vect{K}}{2}\ind{\vect{u}}{n,3}}\Big)^{1/2}}\\
  &= \frac{\vect{b}}{(\ind{\vect{K}}{2})^{\top}\diag{\ind{\vect{u}}{n, 2}}\Big(\frac{\vect{1}_{m_1}^{\top}\ind{\vect{P}}{n, 1}}{\ind{\vect{P}}{n,2}\vect{1}_{m_3}}\Big)^{1/2}}\\
\end{align*}
Thus, we can conclude that 
\begin{align*}
  \frac{\ind{\vect{u}}{n+1, 3}}{\ind{\vect{u}}{n, 3}} &= \frac{\vect{b}}{(\ind{\vect{P}}{n, 2})^{\top}\Big(\frac{\vect{1}_{m_1}^{\top}\ind{\vect{P}}{n, 1}}{\ind{\vect{P}}{n,2}\vect{1}_{m_3}}\Big)^{1/2}}\\
  \epsilon \vect{b}^{\top}\log\bigg(\frac{\ind{\vect{u}}{n+1, 3}}{\ind{\vect{u}}{n, 3}}\bigg) &= \epsilon\kld[\Bigg]{\vect{b}}{(\ind{\vect{P}}{n, 2})^{\top}\bigg( \frac{\vect{1}_{m_1}^{\top}\ind{\vect{P}}{n, 1}}{\ind{\vect{P}}{n,2}\vect{1}_{m_3}}\bigg)^{1/2}}
\end{align*}
In order to prove the inequality  $\laglange\big((\ind{\vect{u}}{n+1, i})_{i\in \nset{3}}\big) - \laglange\big((\ind{\vect{u}}{n, i})_{i\in \nset{3}}\big) \geq 0$, 
it suffices to prove that $\ind{\vect{P}}{n, 1}\Big( \frac{\ind{\vect{P}}{n,2}\vect{1}_{m_3}}{\vect{1}_{m_1}^{\top}\ind{\vect{P}}{n, 1}}\Big)^{1/2}$ and $(\ind{\vect{P}}{n, 2})^{\top}\Big( \frac{\vect{1}_{m_1}^{\top}\ind{\vect{P}}{n, 1}}{\ind{\vect{P}}{n,2}\vect{1}_{m_3}}\Big)^{1/2}$ are subdistributions, 
since the Kullback-Leibler divergence $\kld{\vect{c}}{\vect{d}}$ is non-negative for any distribution $\vect{c}$ and subdistribution $\vect{d}$ (by the Gibbs' inequality). 
This can be easily proven by the H\"{o}lder's inequality as follows 
\begin{align*}
  \Bigg\|\ind{\vect{P}}{n, 1}\Big( \frac{\ind{\vect{P}}{n,2}\vect{1}_{m_3}}{\vect{1}_{m_1}^{\top}\ind{\vect{P}}{n, 1}}\Big)^{1/2}\Bigg\|_1 &= {\vect{1}_{m_1}}^{\top} \ind{\vect{P}}{n, 1}\bigg( \frac{\ind{\vect{P}}{n,2}\vect{1}_{m_3}}{\vect{1}_{m_1}^{\top}\ind{\vect{P}}{n, 1}}\bigg)^{1/2}\\
 &= \sum_{jk} \ind{P}{n, 1}_{jk} \bigg(\frac{\sum_l \ind{P}{n, 2}_{kl}}{\sum_{j'}\ind{P}{n, 1}_{j'k}}\bigg)^{1/2} \\
 &= \sum_{k} \Big(\sum_{j}\ind{P}{n, 1}_{jk}\Big)^{1/2} \Big(\sum_l \ind{P}{n, 2}_{kl}\Big)^{1/2} \\
 &\leq \Big(\sum_k \sum_j \ind{P}{n, 1}_{jk}\Big)^{1/2} \Big(\sum_k \sum_l \ind{P}{n, 2}_{kl}\Big)^{1/2}= 1 \quad \blacksquare
\end{align*}

\subsection{Proof of Prop.~\ref{prop:chHalf}}
We first show the equality ${\ind{\vect{P}'}{n, 1}}^{\top}\vect{1}_{m_1} = {\ind{\vect{P}'}{n, 2}}\vect{1}_{m_3}$. 
\begin{align*}
    {\ind{\vect{P}'}{n, 1}}^{\top}\vect{1}_{m_1} &= {\ind{\vect{u}}{n, 1}}^{\top}\ind{\vect{K}}{1}\diag{\frac{\vect{1}_{m_{2}}}{\ind{\vect{u}}{n+1, 2}}} \\
    &= {\ind{\vect{u}}{n, 1}}^{\top}\ind{\vect{K}}{1}\diag{\Big(\frac{\ind{\vect{K}}{3}\ind{\vect{u}}{n,3}}{(\ind{\vect{K}}{1})^{\top} \ind{\vect{u}}{n,1}}\Big)^{1/2}}\\
     &= {\ind{\vect{u}}{n, 1}}^{\top}\ind{\vect{K}}{1}\diag{\frac{\vect{1}_{m_2}}{\ind{\vect{u}}{n, 2}}}\diag{{\ind{\vect{u}}{n, 2}}}\diag{\Big(\frac{\ind{\vect{K}}{3}\ind{\vect{u}}{n,3}}{(\ind{\vect{K}}{1})^{\top} \ind{\vect{u}}{n,1}}\Big)^{1/2}}\\
     &={\vect{1}_{m_1}}^{\top} \ind{\vect{P}}{n, 1}\diag{\Bigg( \frac{\ind{\vect{P}}{n,2}\vect{1}_{m_3}}{\vect{1}_{m_1}^{\top}\ind{\vect{P}}{n, 1}}\Bigg)^{1/2}} \\
     &= \bigg( \Big(\sum_{j}\ind{P}{n, 1}_{jk}\Big)^{1/2} \Big(\sum_l \ind{P}{n, 2}_{kl}\Big)^{1/2} \bigg)_{k\in \nset{m_2}}.
\end{align*}
Similarly, we can easily show that 
\begin{align*}
    {\ind{\vect{P}'}{n, 2}}\vect{1}_{m_3} = \bigg( \Big(\sum_{j}\ind{P}{n, 1}_{jk}\Big)^{1/2} \Big(\sum_l \ind{P}{n, 2}_{kl}\Big)^{1/2} \bigg)_{k\in \nset{m_2}}. 
\end{align*}

Next, we show the equality $\ind{\vect{P}'}{n, 1}\vect{1}_{m_2} = \ind{\vect{P}}{n, 1}\bigg( \frac{\ind{\vect{P}}{n,2}\vect{1}_{m_3}}{\vect{1}_{m_1}^{\top}\ind{\vect{P}}{n, 1}}\bigg)^{1/2}$. 
\begin{align*}
    \ind{\vect{P}'}{n, 1}\vect{1}_{m_2} &= \diag{{\ind{\vect{u}}{n, 1}}^{\top}}\ind{\vect{K}}{1}\frac{\vect{1}_{m_{2}}}{\ind{\vect{u}}{n+1, 2}}\\
    &= \diag{{\ind{\vect{u}}{n, 1}}^{\top}}\ind{\vect{K}}{1}\bigg(\frac{\ind{\vect{K}}{3}\ind{\vect{u}}{n,3}}{(\ind{\vect{K}}{1})^{\top} \ind{\vect{u}}{n,1}}\bigg)^{1/2}\\
    &= \diag{{\ind{\vect{u}}{n, 1}}^{\top}}\ind{\vect{K}}{1}\diag{\frac{\vect{1}_{m_2}}{\ind{\vect{u}}{n, 2}}}\diag{{\ind{\vect{u}}{n, 2}}}\bigg(\frac{\ind{\vect{K}}{3}\ind{\vect{u}}{n,3}}{(\ind{\vect{K}}{1})^{\top} \ind{\vect{u}}{n,1}}\bigg)^{1/2}\\
    &= \ind{\vect{P}}{n, 1}\bigg( \frac{\ind{\vect{P}}{n,2}\vect{1}_{m_3}}{\vect{1}_{m_1}^{\top}\ind{\vect{P}}{n, 1}}\bigg)^{1/2}.
\end{align*}
The proof of ${\ind{\vect{P}'}{n, 2}}^{\top}\vect{1}_{m_2} = (\ind{\vect{P}}{n, 2})^{\top}\bigg( \frac{\vect{1}_{m_1}^{\top}\ind{\vect{P}}{n, 1}}{\ind{\vect{P}}{n,2}\vect{1}_{m_3}}\bigg)^{1/2}$ is similar. \qquad $\blacksquare$

\subsection{Proof of Lem.~\ref{lem:maxdistanceLagrange}}
By Lem.~\ref{lem:transportmat}, 
\begin{align*}
  &\laglange\big((\ind{\opt{\vect{u}}}{i})_{i\in \nset{3}}\big) - \laglange\big((\ind{\vect{u}}{0, i})_{i\in \nset{3}}\big) \\
  = & \epsilon \vect{a}^{\top}\log\Big(\ind{\opt{\vect{u}}}{1}\Big) + \epsilon \vect{b}^{\top}\log\Big(\ind{\opt{\vect{u}}}{3}\Big) - \epsilon \vect{a}^{\top} \log\bigg(\frac{\vect{1}_{m_1}}{\|\ind{\vect{K}}{1}\|_1}\bigg) -  \epsilon \vect{b}^{\top} \log\bigg(\frac{\vect{1}_{m_3}}{\|\ind{\vect{K}}{2}\|_1}\bigg)\\
  = & \epsilon \vect{a}^{\top}\log\Big(\ind{\opt{\vect{u}}}{1}\Big) + \epsilon \vect{b}^{\top}\log\Big(\ind{\opt{\vect{u}}}{3}\Big) + \epsilon \log\Big(\big\|\ind{\vect{K}}{1}\big\|_1\Big) + \epsilon \log\Big(\big\|\ind{\vect{K}}{2}\big\|_1\Big) 
\end{align*}

Then, for any $j\in \nset{m_1}$ and $l\in \nset{m_3}$, we prove the inequality $\log\Big(\ind{\opt{u}}{1}_j\Big) + \log\Big(\ind{\opt{u}}{3}_l\Big) \leq \log\Big(\frac{1}{K}\Big)$ in the following way. 
\begin{align*}
  &\ind{\opt{u}}{1}_j\Big(\sum_{k} \ind{K}{1}_{jk}\ind{K}{2}_{kl} \Big)\ind{\opt{u}}{2}_l = \sum_{k}\ind{\opt{u}}{1}_j \ind{K}{1}_{jk}\ind{K}{2}_{kl} \ind{\opt{u}}{2}_l = \sum_{k}\ind{\opt{u}}{1}_j \ind{K}{1}_{jk}\frac{1}{\ind{\opt{u}}{2}_k} \ind{\opt{u}}{2}_k  \ind{K}{2}_{kl} \ind{\opt{u}}{2}_l \\
  =&\sum_k \ind{\opt{P}}{1}_{jk}\ind{\opt{P}}{2}_{kl}\leq 1 
\end{align*}
Therefore, the following inequalities hold:
\begin{align*}
  \log\Big(\ind{\opt{u}}{1}_j\Big) +  \log\Big(\ind{\opt{u}}{3}_l\Big)&\leq -\log\Big(\sum_k\ind{K}{1}_{jk}\ind{K}{2}_{kl}\Big) \\
  &\leq \max_{j, l}\log\Bigg(\frac{1}{\sum_k\ind{K}{1}_{jk}\ind{K}{2}_{kl}}\Bigg) \\
  &\leq \log\Bigg(\frac{1}{\min_{j, l}\sum_k\ind{K}{1}_{jk}\ind{K}{2}_{kl}}\Bigg) 
\end{align*} 
We can thus conclude that 
\begin{align*}
  \laglange\big((\ind{\opt{\vect{u}}}{i})_{i\in \nset{3}}\big) - \laglange\big((\ind{\vect{u}}{0, i})_{i\in \nset{3}}\big) 
  &\leq \epsilon\log\Big(\frac{1}{K}\Big)  + \epsilon \log\Big(\big\|\ind{\vect{K}}{1}\big\|_1\Big) + \epsilon \log\Big(\big\|\ind{\vect{K}}{2}\big\|_1\Big)\\
  &=  \epsilon \log\Bigg(\frac{\big\|\ind{\vect{K}}{1}\big\|_1\big\|\ind{\vect{K}}{2}\big\|_1}{K}\Bigg) \quad \blacksquare
\end{align*}

\subsection{Proof of Lem.~\ref{lem:terminating}}
For any $k\in \nat$ such that 
\begin{align*}
   \delta < \Bigg\|\vect{a} - \ind{\vect{P}}{n, 1}\bigg( \frac{\ind{\vect{P}}{n,2}\vect{1}_{m_3}}{\vect{1}_{m_1}^{\top}\ind{\vect{P}}{n, 1}}\bigg)^{1/2}\Bigg\|_1 + \Bigg\|\vect{b} - (\ind{\vect{P}}{n, 2})^{\top}\bigg( \frac{\vect{1}_{m_1}^{\top}\ind{\vect{P}}{n, 1}}{\ind{\vect{P}}{n,2}\vect{1}_{m_3}}\bigg)^{1/2}\Bigg\|_1,
  \end{align*}
  we can see that 
  \begin{align*}
      \delta^2 &< \Bigg ( \Bigg\|\vect{a} - \ind{\vect{P}}{n, 1}\bigg( \frac{\ind{\vect{P}}{n,2}\vect{1}_{m_3}}{\vect{1}_{m_1}^{\top}\ind{\vect{P}}{n, 1}}\bigg)^{1/2}\Bigg\|_1 + \Bigg\|\vect{b} - (\ind{\vect{P}}{n, 2})^{\top}\bigg( \frac{\vect{1}_{m_1}^{\top}\ind{\vect{P}}{n, 1}}{\ind{\vect{P}}{n,2}\vect{1}_{m_3}}\bigg)^{1/2}\Bigg\|_1\Bigg)^2\\
      &\leq 2\Bigg(\Bigg\|\vect{a} - \ind{\vect{P}}{n, 1}\bigg( \frac{\ind{\vect{P}}{n,2}\vect{1}_{m_3}}{\vect{1}_{m_1}^{\top}\ind{\vect{P}}{n, 1}}\bigg)^{1/2}\Bigg\|^2_1 + \Bigg\|\vect{b} - (\ind{\vect{P}}{n, 2})^{\top}\bigg( \frac{\vect{1}_{m_1}^{\top}\ind{\vect{P}}{n, 1}}{\ind{\vect{P}}{n,2}\vect{1}_{m_3}}\bigg)^{1/2}\Bigg\|^2_1\Bigg)\\
      &\leq 4\Bigg( \kld[\Bigg]{\vect{a}}{\ind{\vect{P}}{n, 1}\bigg( \frac{\ind{\vect{P}}{n,2}\vect{1}_{m_3}}{\vect{1}_{m_1}^{\top}\ind{\vect{P}}{n, 1}}\bigg)^{1/2}} + \kld[\Bigg]{\vect{b}}{(\ind{\vect{P}}{n, 2})^{\top}\bigg( \frac{\vect{1}_{m_1}^{\top}\ind{\vect{P}}{n, 1}}{\ind{\vect{P}}{n,2}\vect{1}_{m_3}}\bigg)^{1/2}}\Bigg)
  \end{align*}
The second inequality comes from the equality $(x+y)^2 + (x-y)^2 = 2x^2 + 2y^2$ for any $x, y\in \real$, and the last inequality comes from the Pinsker's inequality (e.g.~\cite{slivkins2019introduction}). We note that the Pinsker's inequality $\|P-Q\|^2_1\leq 2\kld{P}{Q}$ holds even for the case that $P$ is a distribution and $Q$ is a sub-distribution. This is because by extending the support, we can create a distribution $Q'$ such that  $\|P-Q\|^2_1 \leq \|P-Q'\|^2_1 \leq 2\kld{P}{Q'} = 2\kld{P}{Q}$ by the Pinsker's inequality for the two distributions $P$ and $Q'$.

By Lem.~\ref{lem:diffLagrange}, we now know that the value of Lagrangian increases at least $\frac{\delta^2}{f}\epsilon$ for each iteration. 
By Lem.~\ref{lem:maxdistanceLagrange}, we conclude that this increase happens at most $\frac{4}{\delta^2} \log\Big(\frac{\|\ind{K}{1}\|_1\|\ind{K}{2}\|_1}{K}\Big)$ times. Thus, the iteration terminates within at most $1 + \frac{4}{\delta^2} \log\Big(\frac{\|K_1\|_1\|K_2\|_1}{K}\Big) $ times. \quad $\blacksquare$

\subsection{Proof of Thm.~\ref{thm:main_theorem}}
Since $\big\|\ind{\vect{P}}{k +0.5, 1}\big\|_1 = \big\|\ind{\vect{P}}{k +0.5, 2}\big\|_1 = 1$ and by Lem.~\ref{lem:ch_optimal_matrix}, 
we know that the pair $\ind{\vect{P}}{k +0.5, 1}$ and $\ind{\vect{P}}{k +0.5, 2}$ are an optimal solution for the distributions $\vect{a}' \defeq\ind{\vect{P}}{k +0.5, 1}\vect{1}_{m_2} $ and $\vect{b}' \defeq{\ind{\vect{P}}{k +0.5, 2}}^{\top}\vect{1}_{m_2}$.
By their construction, we know that 
\begin{align*}
  \vect{a}' &= \frac{1}{P}\ind{\vect{P}}{n, 1}\Bigg( \frac{\ind{\vect{P}}{n,2}\vect{1}_{m_3}}{\vect{1}_{m_1}^{\top}\ind{\vect{P}}{n, 1}}\Bigg)^{1/2}, & \vect{b}' &= \frac{1}{P}(\ind{\vect{P}}{n, 2})^{\top}\Bigg( \frac{\vect{1}_{m_1}^{\top}\ind{\vect{P}}{n, 1}}{\ind{\vect{P}}{n,2}\vect{1}_{m_3}}\Bigg)^{1/2},
\end{align*}
where $P \defeq \big\| \diag{\ind{\vect{u}}{n+1, 2}}\ind{\vect{K}}{2}\diag{\ind{\vect{u}}{n, 3}}\big\|_1$  (that is defined in Def.~\ref{def:halfPlan}). 

Suppose that a pair of $\ind{\breve{\vect{P}}}{1}$ and $\ind{\breve{\vect{P}}}{2}$ is an optimal solution of the original problem (Def.~\ref{def:seq_composed_optimal_transport}). 
By Lem.~\ref{lem:getFeasibleSolution}, we obtain that the tranportation plans $\widetilde{\ind{\breve{\vect{P}}}{1}}$ and $\widetilde{\ind{\breve{\vect{P}}}{2}}$ for $\vect{a}'$ and $\vect{b}'$ and the error is bounded by 
\begin{align*}
  \Big\|\widetilde{\ind{\breve{\vect{P}}}{1}}- \ind{\breve{\vect{P}}}{1}\Big\|_1 \leq 2\big\| \vect{a} - \vect{a}' \big\|_1, && \Big\|\widetilde{\ind{\breve{\vect{P}}}{2}}- \ind{\breve{\vect{P}}}{2}\Big\|_1 \leq 2\big\| \vect{b} - \vect{b}' \big\|_1.
\end{align*}
Since  $\ind{\vect{P}}{k +0.5, 1}$ and $\ind{\vect{P}}{k +0.5, 2}$ are an optimal solution for the distributions $\vect{a}'$ and $\vect{b}'$, we obtain the following inequality
\begin{align*}
  \sum_{i\in \nset{2}}\Big\langle\ind{\vect{C}}{i}, \ind{\vect{P}}{k +0.5, i}\Big\rangle - \epsilon\ent{\ind{\vect{P}}{k +0.5, i}} &\leq \sum_{i\in \nset{2}}\Big\langle\ind{\vect{C}}{i}, \widetilde{\ind{\breve{\vect{P}}}{i}}\Big\rangle - \epsilon\ent{\widetilde{\ind{\breve{\vect{P}}}{i}}},\\
  \sum_{i\in \nset{2}}\Big\langle\ind{\vect{C}}{i}, \ind{\vect{P}}{k +0.5, i}\Big\rangle - \Big\langle\ind{\vect{C}}{i}, \widetilde{\ind{\breve{\vect{P}}}{i}}\Big\rangle  &\leq \sum_{i\in \nset{2}} \epsilon\ent{\ind{\vect{P}}{k +0.5, i}} - \epsilon\ent{\widetilde{\ind{\breve{\vect{P}}}{i}}}\\
  &\leq \epsilon\big(\log(m_1 m_2) + \log(m_2m_3) \big).\\
\end{align*}
The later inequalities follows from the fact $1\leq \ent{\vect{P}} \leq 1 + \log(mn)$ for any $\vect{P}\in  \Rnnegmat{m}{n}$ s.t. $\|\vect{P}\|_1 = 1$. 

By Lem.~\ref{lem:getFeasibleSolution}, we know the error is bounded by 
\begin{align*}
  \bigg\|\widetilde{\ind{\vect{P}}{k +0.5, 1}} - \ind{\vect{P}}{k +0.5, 1}\bigg\|_1 \leq 2\big\|\vect{a} - \vect{a}'\big\|, && \bigg\|\widetilde{\ind{\vect{P}}{k +0.5, 2}} - \ind{\vect{P}}{k +0.5, 2}\bigg\|_1 \leq 2\big\|\vect{b} - \vect{b}'\big\|.
\end{align*}

Then, we evaluate the error by 
\begin{align*}
  &\sum_{i\in \nset{2}} \Big\langle \ind{\vect{C}}{i},\widetilde{\ind{\vect{P}}{k +0.5, i}}\Big\rangle - \sum_{i\in \nset{2}} \Big\langle \ind{\vect{C}}{i},\ind{\breve{\vect{P}}}{i}\Big\rangle \\
  \leq& \sum_{i\in \nset{2}} \Big\langle \ind{\vect{C}}{i},\widetilde{\ind{\vect{P}}{k +0.5, i}}\Big\rangle - \Big\langle \ind{\vect{C}}{i},\ind{\vect{P}}{k +0.5, i}\Big\rangle  + \Big\langle \ind{\vect{C}}{i},\ind{\vect{P}}{k +0.5, i}\Big\rangle  \\
  &- \Big\langle \ind{\vect{C}}{i},\widetilde{\ind{\breve{\vect{P}}}{i}}\Big\rangle + \Big\langle \ind{\vect{C}}{i},\widetilde{\ind{\breve{\vect{P}}}{i}}\Big\rangle - \Big\langle \ind{\vect{C}}{i},\ind{\breve{\vect{P}}}{i}\Big\rangle \\
  \leq& 4\big(\|\vect{a}- \vect{a}'\|_1 + \|\vect{b}- \vect{b}'\|_1\big)\max\big(\|\vect{C}_1\|_{\infty},\|\vect{C}_2\|_{\infty} \big) + \epsilon\big(\log(m_1 m_2) + \log(m_2m_3) \big). 
\end{align*}
In the last inequality, we use the H\"{o}lder's inequality. 

We evaluate the errors $\big\|\vect{a}- \vect{a}'\big\|_1 $ and $\big\|\vect{b}- \vect{b}'\big\|_1 $. By the triangle inequality, 
\begin{align*}
  \big\|\vect{a}- \vect{a}'\big\|_1  &\leq \Bigg\|\vect{a}- \ind{\vect{P}}{n, 1}\bigg( \frac{\ind{\vect{P}}{n,2}\vect{1}_{m_3}}{\vect{1}_{m_1}^{\top}\ind{\vect{P}}{n, 1}}\bigg)^{1/2}\Bigg\|_1 + \Bigg\|\ind{\vect{P}}{n, 1}\bigg( \frac{\ind{\vect{P}}{n,2}\vect{1}_{m_3}}{\vect{1}_{m_1}^{\top}\ind{\vect{P}}{n, 1}}\bigg)^{1/2} - \vect{a}'\Bigg\|_1 \\
  & = \Bigg\|\vect{a}- \ind{\vect{P}}{n, 1}\bigg( \frac{\ind{\vect{P}}{n,2}\vect{1}_{m_3}}{\vect{1}_{m_1}^{\top}\ind{\vect{P}}{n, 1}}\bigg)^{1/2}\Bigg\|_1 + 1-P\\
  &\leq 2\Bigg\|\vect{a}- \ind{\vect{P}}{n, 1}\bigg( \frac{\ind{\vect{P}}{n,2}\vect{1}_{m_3}}{\vect{1}_{m_1}^{\top}\ind{\vect{P}}{n, 1}}\bigg)^{1/2}\Bigg\|_1
\end{align*}
The last inequality holds because 
\begin{align*}
  1 - P =  \sum_{i} a_i -  \Bigg(\ind{\vect{P}}{n, 1}\bigg( \frac{\ind{\vect{P}}{n,2}\vect{1}_{m_3}}{\vect{1}_{m_1}^{\top}\ind{\vect{P}}{n, 1}}\bigg)^{1/2}\Bigg)_i \leq \Bigg\|\vect{a}- \ind{\vect{P}}{n, 1}\bigg( \frac{\ind{\vect{P}}{n,2}\vect{1}_{m_3}}{\vect{1}_{m_1}^{\top}\ind{\vect{P}}{n, 1}}\bigg)^{1/2}\Bigg\|_1
\end{align*}
Similarly, we get 
\begin{align*}
  \big\|\vect{b}- \vect{b}'\big\|_1  &\leq 2\Bigg\|\vect{b}- (\ind{\vect{P}}{n, 2})^{\top}\bigg( \frac{\vect{1}_{m_1}^{\top}\ind{\vect{P}}{n, 1}}{\ind{\vect{P}}{n,2}\vect{1}_{m_3}}\bigg)^{1/2}\Bigg\|_1
\end{align*}
Therefore, we obtain that 
\begin{align*}
  &\sum_{i\in \nset{2}} \Big\langle \ind{\vect{C}}{i},\widetilde{\ind{\vect{P}}{k +0.5, i}}\Big\rangle - \sum_{i\in \nset{2}} \Big\langle \ind{\vect{C}}{i},\ind{\breve{\vect{P}}}{i}\Big\rangle \\
  \leq & 8\Bigg(\Bigg\|\vect{a}- \ind{\vect{P}}{n, 1}\Big( \frac{\ind{\vect{P}}{n,2}\vect{1}_{m_3}}{\vect{1}_{m_1}^{\top}\ind{\vect{P}}{n, 1}}\Big)^{1/2}\Bigg\|_1 +  \Bigg\|\vect{b}- (\ind{\vect{P}}{n, 2})^{\top}\Big( \frac{\vect{1}_{m_1}^{\top}\ind{\vect{P}}{n, 1}}{\ind{\vect{P}}{n,2}\vect{1}_{m_3}}\Big)^{1/2}\Bigg\|_1\Bigg)\max\big(\|\vect{C}_1\|_{\infty},\|\vect{C}_2\|_{\infty} \big)\\
  & + \epsilon\big(\log(m_1 (m_2)^2m_3) \big)\\
  \leq & \delta 
\end{align*}
We conclude the proof by evaluating the worst-case time complexity. 
By Lem.~\ref{lem:terminating}, we know the iteration terminates in at most 
\begin{align*}
  &1 + \frac{4\cdot 16^2 \cdot \big(\max(\|\ind{C}{1}\|_{\infty}, \|\ind{C}{2}\|_{\infty})\big)^2}{\delta^2}\log\Bigg(\frac{\|\ind{\vect{K}}{1}\|_1\|\ind{\vect{K}}{2}\|_1}{K}\Bigg)
\end{align*}
and we get 
\begin{align*}
  \log\Big(\big\|\ind{\vect{K}}{1}\big\|_1\big\|\ind{\vect{K}}{2}\big\|_1\Big) &\leq \log\big(m_1 (m_2)^2 m_3\big),
\end{align*}
and 
\begin{align*}
  \log(K) &= \log\Big(\min_{i, j} \sum_k\ind{K}{1}_{ik}\ind{K}{2}_{kj}\Big) = \min_{i, j} \log\Big(\sum_k\ind{K}{1}_{ik}\ind{K}{2}_{kj}\Big) \\
  &= \min_{i, j} \log\Bigg(\sum_k \exp\bigg(\frac{-\ind{C}{1}_{ik} - \ind{C}{2}_{kj} }{\epsilon}\bigg)\Bigg) \\
  &\geq \min_{i, j} \max_k \frac{-\ind{C}{1}_{ik} - \ind{C}{2}_{kj} }{\epsilon}\\
  & = - \frac{1}{\epsilon}\Big(\min_{i, j, k} \ind{C}{1}_{ik} + \ind{C}{2}_{kj}\Big)\\
  &\geq - \frac{1}{\epsilon}\|\vect{C}\|_{\infty}\\
  & = - \frac{2\log\big(m_1(m_2)^2 m_3\big)}{\delta}\|\vect{C}\|_{\infty}\\
\end{align*}
For each iteration, the update of vectors requires $O\big(\max(m_1m_2, m_2m_3)\big)$ and the construcions of transportation plans are also done in $O\big(\max(m_1m_2, m_2m_3)\big)$. 
Together with the above two inequalities, we obtain the desired result. $\qquad \blacksquare$

\subsection{Proof of Lem.~\ref{lem:diff1}}
Let $\vect{x}\defeq \bigg( \frac{\vect{1}_{m_1}^{\top}\ind{\vect{P}}{n, 1}}{\ind{\vect{P}}{n,2}\vect{1}_{m_3}}\bigg)^{1/2}$.
We prove the statement in the following: 
\begin{align*}
  &\Bigg\|\vect{a} - \ind{\vect{P}}{n, 1}\bigg( \frac{\ind{\vect{P}}{n,2}\vect{1}_{m_3}}{\vect{1}_{m_1}^{\top}\ind{\vect{P}}{n, 1}}\bigg)^{1/2}\Bigg\|_1 + \Bigg\|\vect{b} - (\ind{\vect{P}}{n, 2})^{\top}\bigg( \frac{\vect{1}_{m_1}^{\top}\ind{\vect{P}}{n, 1}}{\ind{\vect{P}}{n,2}\vect{1}_{m_3}}\bigg)^{1/2}\Bigg\|_1\\
=&\sum_j\bigg|\sum_k \ind{P}{n, 1}_{jk} - \sum_k \ind{P}{n, 1}_{jk}\frac{1}{x_k}\bigg| + \sum_l\bigg|\sum_k \ind{P}{n, 2}_{kl} - \sum_k \ind{P}{n, 2}_{kl}x_k\bigg|\\
\leq&\sum_k\Bigg(\sum_j\ind{P}{n, 1}_{jk} \bigg|1 - \frac{1}{x_k}\bigg|  + \sum_{l} \ind{P}{n, 2}_{kl} \bigg|1 - x_k\bigg|  \Bigg)\\
\leq&\Big\|\ind{\vect{P}}{n, 2}\vect{1}_{m_3}- (\ind{\vect{P}}{n, 1})^{\top}\vect{1}_{m_1}\Big\|_1 + \sum_k \bigg| \sum_j  \ind{P}{n, 1}_{jk} \frac{1}{x_k} - \sum_l  \ind{P}{n, 2}_{kl} x_k \bigg|\\
= &\Big\|\ind{\vect{P}}{n, 2}\vect{1}_{m_3}- (\ind{\vect{P}}{n, 1})^{\top}\vect{1}_{m_1}\Big\|_1 
\end{align*}
The last equality holds because 
\begin{align*}
  &\sum_k \Big| \sum_j  \ind{P}{n, 1}_{jk} \frac{1}{x_k} - \sum_l  \ind{P}{n, 2}_{kl} x_k \Big|\\
=& \sum_k\Big| \big(\sum_j \ind{P}{n, 1}_{jk} \big)^{1/2}\cdot \big(\sum_l \ind{P}{n, 2}_{kl} \big)^{1/2} - \big(\sum_j \ind{P}{n, 1}_{jk} \big)^{1/2}\cdot \big(\sum_l \ind{P}{n, 2}_{kl} \big)^{1/2}\Big| = 0 \quad \blacksquare
\end{align*}

\subsection{Proof of Lem.~\ref{lem:diff2}}

Let $\vect{x}$ be $ x_k \defeq \Big(\big(\sum_j \ind{P}{n,1}_{jk}\big)\big(\sum_l \ind{P}{n,2}_{kl}\big)\Big)^{1/2}$ for each $k\in \nset{m_2}$. 
Then, we obtain the following inequality
\begin{align*}
  &\Big\|\ind{\vect{P}}{n+1, 2}\vect{1}_{m_3}- (\ind{\vect{P}}{n+1, 1})^{\top}\vect{1}_{m_1}\Big\|_1 \\
  \leq & \Big\|\ind{\vect{P}}{n+1, 2}\vect{1}_{m_3}- \vect{x}\Big\|_1 + \Big\| (\ind{\vect{P}}{n+1, 1})^{\top}\vect{1}_{m_1} - \vect{x}\Big\|_1.
\end{align*}
Let $\vect{y}\defeq \bigg( \frac{\vect{1}_{m_1}^{\top}\ind{\vect{P}}{n, 1}}{\ind{\vect{P}}{n,2}\vect{1}_{m_3}}\bigg)^{1/2}$.
We can evaluate $\Big\|\ind{\vect{P}}{n+1, 2}\vect{1}_{m_3}- \vect{x}\Big\|_1$ as follows: 
\begin{align*}
  &\Big\|\ind{\vect{P}}{n+1, 2}\vect{1}_{m_3}- \vect{x}\Big\|_1\\
  = &\sum_k\Big| \big(\sum_l \ind{\vect{P}}{n+1, 2}_{kl}\big) -  \big(\sum_j \ind{P}{n,1}_{jk}\big)^{1/2}\big(\sum_l \ind{P}{n,2}_{kl}\big)^{1/2}\Big|
\end{align*}
Here, we note that 
\begin{align*}
  &\sum_l \ind{\vect{P}}{n+1, 2}_{kl} = \sum_l \ind{u}{n+1, 2}_k\ind{K}{2}_{kl}\ind{u}{n+1, 3}_l\\
   = &\sum_l\bigg(\Big(\frac{(\ind{\vect{K}}{1})^{\top} \ind{\vect{u}}{n,1}}{\ind{\vect{K}}{2}\ind{\vect{u}}{n,3}}\Big)^{1/2}\bigg)_k \ind{K}{2}_{kl}\Bigg(\frac{\vect{b}}{(\ind{\vect{K}}{2})^{\top}\Big(\frac{(\ind{\vect{K}}{1})^{\top} \ind{\vect{u}}{n,1}}{\ind{\vect{K}}{2}\ind{\vect{u}}{n,3}}\Big)^{1/2}}\Bigg)_l\\
   = & \sum_l\bigg(\Big(\frac{(\ind{\vect{P}}{n, 1})^{\top} \vect{1}_{m_1}}{\ind{\vect{P}}{n, 2}\vect{1}_{m_3}}\Big)^{1/2}\bigg)_k \ind{u}{n,2}_k \ind{K}{2}_{kl}\Bigg(\frac{\vect{b}}{(\ind{\vect{K}}{2})^{\top}\Big(\frac{(\ind{\vect{K}}{1})^{\top} \ind{\vect{u}}{n,1}}{\ind{\vect{K}}{2}\ind{\vect{u}}{n,3}}\Big)^{1/2}}\Bigg)_l\\
   = & \sum_l\bigg(\Big(\frac{(\ind{\vect{P}}{n, 1})^{\top} \vect{1}_{m_1}}{\ind{\vect{P}}{n, 2}\vect{1}_{m_3}}\Big)^{1/2}\bigg)_k \ind{u}{n,2}_k \ind{K}{2}_{kl}\ind{u}{n,3}_l \Bigg(\frac{\vect{b}}{\diag{\ind{u}{n,3}}(\ind{\vect{K}}{2})^{\top}\Big(\frac{(\ind{\vect{K}}{1})^{\top} \ind{\vect{u}}{n,1}}{\ind{\vect{K}}{2}\ind{\vect{u}}{n,3}}\Big)^{1/2}}\Bigg)_l\\
   = & \sum_l y_k \ind{u}{n,2}_k \ind{K}{2}_{kl}\ind{u}{n,3}_l \Bigg(\frac{\vect{b}}{(\ind{\vect{P}}{n, 2})^{\top}\vect{y}}\Bigg)_l
\end{align*}
Therefore, 
\begin{align*}
  &\Big\|\ind{\vect{P}}{n+1, 2}\vect{1}_{m_3}- \vect{x}\Big\|_1\\
  =&\sum_k y_k\Bigg|\bigg(\sum_l \ind{P}{n, 2}_{kl} \Big(\frac{\vect{b}}{(\ind{\vect{P}}{n, 2})^{\top}\vect{y}}\Big)_l \bigg) - \Big(\sum_l  \ind{P}{n, 2}_{kl}\Big)  \Bigg|\\
  \leq &\sum_k y_k\sum_l \Bigg|\ind{P}{n, 2}_{kl} \bigg(\frac{\vect{b}}{(\ind{\vect{P}}{n, 2})^{\top}\vect{y}}\bigg)_l  - \ind{P}{n, 2}_{kl}\Bigg|\\
  = &\sum_k y_k\sum_l \ind{P}{n, 2}_{kl} \Bigg|\bigg(\frac{\vect{b}}{(\ind{\vect{P}}{n, 2})^{\top}\vect{y}}\bigg)_l  - 1\Bigg|\\
  = & \sum_l \sum_k  \ind{P}{n, 2}_{kl}y_k\Bigg|\frac{b_l}{ \sum_{k'}  \ind{P}{n, 2}_{k'l}y_{k'}}  - 1\Bigg|
\end{align*}
For each $l$, the sign of $\frac{b_l}{ \sum_{k'}  \ind{P}{n, 2}_{k'l}y_{k'}}  - 1$ is invariant with respect to $k$, therefore 
\begin{align*}
  &\Big\|\ind{\vect{P}}{n+1, 2}\vect{1}_{m_3}- \vect{x}\Big\|_1\\
  \leq & \sum_l \Bigg|\sum_k  \big(\ind{P}{n, 2}_{kl}y_k\big)\bigg(\frac{b_l}{ \sum_{k'}  \ind{P}{n, 2}_{k'l}y_{k'}}  - 1\bigg)\Bigg|\\
  = & \sum_l \Bigg|b_l - \sum_k  \big(\ind{P}{n, 2}_{kl}y_k\big)\Bigg|\\
  = & \Big\|\vect{b} - (\ind{\vect{P}}{n, 2})^{\top}\bigg( \frac{\vect{1}_{m_1}^{\top}\ind{\vect{P}}{n, 1}}{\ind{\vect{P}}{n,2}\vect{1}_{m_3}}\bigg)^{1/2}\Big\|_1
\end{align*}
Let $\vect{z}\defeq\frac{\vect{1}_{m_2}}{\vect{y}}  =  \bigg( \frac{\ind{\vect{P}}{n,2}\vect{1}_{m_3}}{\vect{1}_{m_1}^{\top}\ind{\vect{P}}{n, 1}}\bigg)^{1/2}$.
We evaluate $\Big\| (\ind{\vect{P}}{n+1, 1})^{\top}\vect{1}_{m_1} - \vect{x}\Big\|_1$ as follows: 
\begin{align*}
  &\Big\| (\ind{\vect{P}}{n+1, 1})^{\top}\vect{1}_{m_1} - \vect{x}\Big\|_1\\
  = & \sum_k\bigg| \big(\sum_j \ind{\vect{P}}{n+1, 1}_{jk}\big) -  \big(\sum_j \ind{P}{n,1}_{jk}\big)^{1/2}\big(\sum_l \ind{P}{n,2}_{kl}\big)^{1/2}\bigg|
\end{align*}
Here we note that 
\begin{align*}
  &\sum_j \ind{\vect{P}}{n+1, 1}_{jk} =  \sum_j \ind{u}{n+1, 1}_j\ind{K}{1}_{jk}\frac{1}{\ind{u}{n+1, 2}_k}\\
  = &\sum_j \Bigg(\frac{\vect{a}}{\ind{\vect{K}}{1}\Big(\frac{\ind{\vect{K}}{2}\ind{\vect{u}}{n,3}}{(\ind{\vect{K}}{1})^{\top} \ind{\vect{u}}{n,1}}\Big)^{1/2}}\Bigg)_j\ind{K}{1}_{jk} \Bigg(\bigg(\frac{\ind{\vect{K}}{2}\ind{\vect{u}}{n,3}}{(\ind{\vect{K}}{1})^{\top} \ind{\vect{u}}{n,1}}\bigg)^{1/2}\Bigg)_k\\
  = &\sum_j \Bigg(\frac{\vect{a}}{\ind{\vect{K}}{1}\Big(\frac{\ind{\vect{K}}{2}\ind{\vect{u}}{n,3}}{(\ind{\vect{K}}{1})^{\top} \ind{\vect{u}}{n,1}}\Big)^{1/2}}\Bigg)_j\ind{K}{1}_{jk} \frac{1}{\ind{u}{n, 2}_k} z_k\\
  = &\sum_j \Bigg(\frac{\vect{a}}{\diag{\ind{u}{n, 1}}\ind{\vect{K}}{1}\Big(\frac{\ind{\vect{K}}{2}\ind{\vect{u}}{n,3}}{(\ind{\vect{K}}{1})^{\top} \ind{\vect{u}}{n,1}}\Big)^{1/2}}\Bigg)_j \ind{u}{n, 1}_j \ind{K}{1}_{jk} \frac{1}{\ind{u}{n, 2}_k} z_k\\
  = &\sum_j \bigg(\frac{\vect{a}}{\ind{\vect{P}}{n, 1}\vect{z}}\bigg)_j \ind{u}{n, 1}_j \ind{K}{1}_{jk} \frac{1}{\ind{u}{n, 2}_k} z_k\\
\end{align*}
Therefore, 
\begin{align*}
  &\Big\| (\ind{\vect{P}}{n+1, 1})^{\top}\vect{1}_{m_1} - \vect{x}\Big\|_1\\
  = &\sum_k\bigg| \sum_j \Big(\frac{\vect{a}}{\ind{\vect{P}}{n, 1}\vect{z}}\Big)_j \ind{P}{n, 1}_{jk} z_k - x_k \bigg|\\
  = & \sum_k z_k \bigg| \sum_j \Big(\frac{\vect{a}}{\ind{\vect{P}}{n, 1}\vect{z}}\Big)_j \ind{P}{n, 1}_{jk} - \Big(\sum_j  \ind{P}{n, 1}_{jk} \Big) \bigg|\\
  \leq & \sum_k z_k \sum_j \bigg| \Big(\frac{\vect{a}}{\ind{\vect{P}}{n, 1}\vect{z}}\Big)_j \ind{P}{n, 1}_{jk} - \ind{P}{n, 1}_{jk}  \bigg|\\
  = & \sum_k z_k \sum_j \ind{P}{n, 1}_{jk}  \bigg| \Big(\frac{\vect{a}}{\ind{\vect{P}}{n, 1}\vect{z}}\Big)_j - 1 \bigg|\\
  = & \sum_j \sum_k z_k \ind{P}{n, 1}_{jk}  \bigg| \Big(\frac{\vect{a}}{\ind{\vect{P}}{n, 1}\vect{z}}\Big)_j - 1 \bigg|\\
  = & \sum_j \Big| a_j - \sum_k z_k \ind{P}{n, 1}_{jk}\Big|\\
  = & \Bigg\|\vect{a} - \ind{\vect{P}}{n, 1}\bigg( \frac{\ind{\vect{P}}{n,2}\vect{1}_{m_3}}{\vect{1}_{m_1}^{\top}\ind{\vect{P}}{n, 1}}\bigg)^{1/2}\Bigg\|_1 \qquad \blacksquare
\end{align*}

\end{document}